\theoremstyle{plain}
\newtheorem{theorem}{Theorem}[section]
\newtheorem{proposition}[theorem]{Proposition}
\newtheorem{lemma}[theorem]{Lemma}
\newtheorem{remark}[theorem]{Remark}
\newtheorem{definition}[theorem]{Definition}
\newtheorem{example}[theorem]{Example}
\newtheorem{assumption}[theorem]{Assumption}
\theoremstyle{nonumberplain}
\newenvironment{proof}[1][]
{\ifthenelse{\equal{#1}{}}{\smallskip\noindent\textsl{Proof. }}{\smallskip
\noindent\textsl{Proof #1. }}}{\hfill$\Box$}
\def\Bc{{\cal B}}
\def\Hc{{\cal H}}
\def\Gc{{\cal G}}
\def\Dc{{\cal D}}
\def \l{\lambda}
\def \o{\omega}
\def \O{\Omega}
\begin{document}
\title{Multiple-priors Optimal Investment in Discrete Time for Unbounded Utility Function}
\author {
{Romain} {Blanchard}, E.mail~:  romain.blanchard@etudiant.univ-reims.fr\\
LMR, Universit\'e Reims Champagne-Ardenne.\\
\and
{Laurence} {Carassus}, E.mail~: laurence.carassus@devinci.fr \\
Research Center, L\'eonard de Vinci P\^ole universitaire \\
and LMR, Universit\'e Reims Champagne-Ardenne.\\
}

\date{}

\maketitle

\begin{abstract}
This paper investigates the problem of maximizing expected terminal utility in a discrete-time financial market model with a finite horizon under non-dominated model uncertainty. We use a dynamic programming framework together with measurable selection arguments to prove  that under mild integrability conditions, an optimal portfolio exists for an unbounded utility function defined on the half-real line.
\end{abstract}
\textbf{Key words}: {Knightian uncertainty; multiple-priors; non-dominated model; optimal investment} \\
\textbf{AMS 2000 subject classification}: {Primary 93E20, 91B70, 91B16 ; secondary
 91G10, 28B20, 49L20}

\section{Introduction}
We consider investors trading in a multi-period and discrete-time financial market. We study the  problem  of  terminal wealth expected utility  maximisation under Knightian uncertainty. It was  first introduced by F. Knight \citep{Kni} and refers to the ``unknown unknown", or uncertainty, as opposed to the ``known unknown", or risk.  This concept is very appropriate  in the context of financial mathematics as it describes accurately market behaviors which are becoming more and more surprising. The belief of investors are modeled with a set of probability measures rather than a single one. This can be related to model mispecification issues or model risk and has triggered a renewed and strong interest by practitioners and academics alike.  \\
 The axiomatic theory of the classical  expected utility was initiated by \citep{vNM}. They provided conditions on investor preferences under which the expected utility of a contingent claim $X$ can be expressed as  $E_{P} U(X)$ where $P$ is  a given  probability measure  and $U$ is a so-called utility function. The problem of  maximising the von Neumann and Morgenstern expected utility   has been extensively  studied, we refer to  \citep{RS05} and \citep{RS06} for the discrete-time case and  to \citep{KS99} and \citep{S01} for  the continuous-time one.  In the presence of Knightian uncertainty, \citep{Gilb} provided a pioneering contribution by extending the axiomatic of von Neumann and Morgenstern.  In this case, under suitable conditions on the investor preferences,  the utility functional is of the form $\inf_{P \in \mathcal{Q}^T} E_{P} U(X)$ where $\mathcal{Q}^T$ is the set of all  possible probability measures representing  the agent  beliefs.  Most of the literature on the so-called multiple-priors or robust expected utility maximisation assumes that $\mathcal{Q}^T$ is dominated by a reference measure. We refer to \citep{FSW09} for  an extensive survey.\\
However assuming the existence of a dominating reference measure  does not always provide the required degree of generality from an economic and practical perspective.  Indeed,  uncertain volatility models (see \citep{AP95},  \citep{DM06}, \citep{Ly95}) are  concrete examples where this hypothesis fails.  On the other hand, assuming a  non-dominated  set of probability measures significantly raises the mathematical difficulty of the problem as some of the usual tools of probability theory do not apply.
In the multiple-priors non-dominated  case,  \citep{DenKer} obtained the existence of an optimal strategy, a worst case measure as well as  some ``minmax" results under some compacity assumption on the set of probability measures and with a bounded (from above and below) utility function. This result is obtained in the continuous-time case. In  the discrete-time case, \citep{Nutz} (where further references  to multiple-priors non-dominated  problematic can be found)  obtained the first  existence result without any compacity assumption on the set of probability measures but  for a bounded (from above) utility function. We also mention two articles subsequent to our contribution. The first one (see \citep{Bart16}) provides a dual representation in the case of an exponential utility function with  a random endowment and the second one (see \citep{NS16}) study a market with frictions in the spirit of \citep{PennanenPerkio} for  a bounded from above utility function. \\
To the best of our knowledge, this paper provides the first general result for unbounded utility functions assuming a  non-dominated  set of probability measures (and without compacity assumption). This includes for example, the useful case of  Constant Relative Risk Aversion utility functions ($i.e$ logarithm or power functions). In Theorem \ref{main}, we give sufficient conditions for the existence of an optimizer to our ``maxmin" problem (see Definition \ref{DefU}). We  work  under the framework of \citep{BN} and \citep{Nutz}. The market is governed by a non-dominated set of probability measures $\mathcal{Q}^T$ that determines which events are relevant or not. Assumption \ref{Qanalytic}, which is related to measurability issues,  is the only assumption made on $\mathcal{Q}^T$ and is the cornerstone of the proof.  We introduce two integrability assumptions. The first one  (Assumption \ref{Uminus})  is  related to measurability and continuity issues. The second one (Assumption \ref{Uplus}) replaces the boundedness assumption of \citep{Nutz} and allows us to use auxiliary  functions which play the role of properly integrable bounds for the value functions at each step.
The no-arbitrage condition is essential as well, we use the one introduced in \citep{BN} and propose  a ``quantitative" characterisation  in the spirit of \citep{JS98} and \citep{RS05}.
Finally,  we introduce an  alternative ``strong" no-arbitrage condition (the $sNA$, see Definition \ref{sNA})  and  prove in Theorem \ref{main2} that under the $sNA$ condition, Theorem \ref{main} applies to a large range of settings.  \\
As in \citep{BN} and \citep{Nutz}  our proof relies heavily on measure theory tools, namely on analytic sets.
Those sets display the nice property of being stable by projection or countable unions and intersections. However they fail to be stable by complementation, hence the sigma-algebra generated by analytic sets contains sets that are not analytic  which leads to significant  measurability issues. Such difficulties arise for instance in Lemma \ref{Dhmes}, where we are still able to prove some tricky measurability properties, as well as  in Proposition \ref{dyn3} which is pivotal in solving the dynamic programming.
Note as well,  that we have identified (and corrected) a small issue    in \citep[Lemma 4.12]{BN} which is also used in \citep{Nutz} to prove some important measurability properties. Indeed it is not enough in order to have joint-measurability of a function $\theta(\o,x)$ to assume that $\theta(\cdot,x)$ is measurable and $\theta(\o,\cdot)$ is lower-semicontinuous, one has to assume for example that $\theta(\o,\cdot)$ is convex (see Lemma \ref{412} as well as the counterexample  \ref{uscusc}).
\\
To solve our optimisation problem we follow a similar approach as \citep{Nutz}. We first consider  a one-period case with strategy in $\mathbb{R}^{d}$. To ``glue" together the solutions found in the one-period case we  use dynamic programming  as in \citep{RS05}, \citep{RS06}, \citep{CR14}, \citep{CRR}, \citep{Nutz} and \citep{BCR} together with  measurable selection arguments (Auman and Jankov-von Neumann Theorems). \\
In the remainder of the introduction, we recall   some important properties of analytic sets, present our framework and state our main result. In section \ref{secna}  we prove our quantitative version of  the multiple-priors no-arbitrage condition.  In section \ref{se3}  we solve the expected utility maximisation problem, first in the one period case.
Finally,  section \ref{seannexe}  collects some technical results and proofs as well as some counter-examples to  \citep[Lemma 4.12]{BN}.

\subsection{\textbf{Polar sets and  universal sigma-algebra}}
For any Polish space $X$ ($i.e$  complete and separable metric space), we denote by $\mathcal{B} (X)$ its Borel sigma-algebra  and by $\mathfrak{P}(X)$ the set of all probability measures on $(X,\mathcal{B} (X))$. We recall  that $\mathfrak{P}(X)$ endowed with the weak topology is a Polish space (see  \citep[Propositions 7.20 p127, 7.23 p131]{BS}). If $P$ in $\mathfrak{P}(X)$, $\mathcal{B}_{P}(X)$ will be the completion of $\mathcal{B}(X)$ with respect to $P$ and the universal sigma-algebra is defined by $\mathcal{B}_{c}(X):= \bigcap_{P \in \mathfrak{P}(X)} \mathcal{B}_{P}(X)$. It is clear that $\mathcal{B}(X) \subset \mathcal{B}_{c}(X)$. In the rest of the paper we will use the same notation for
$P$ in $\mathfrak{P}(X)$ and for its (unique) extension on $\mathcal{B}_{c}(X)$. A function $f: X \to Y$ (where $Y$ is an other Polish space) is universally-measurable or $\mathcal{B}_{c}(X)$-measurable (resp. Borel-measurable or $\mathcal{B}(X)$-measurable) if for all $B \in \mathcal{B}(Y)$, $f^{-1}(B) \in  \mathcal{B}_{c}(X)$ (resp. $f^{-1}(B) \in  \mathcal{B}(X)$).  Similarly we will speak of universally-adapted or universally-predictable (resp. Borel-adapted or Borel-predictable) processes.

For a given $\mathcal{P} \subset  \mathfrak{P}(X)$, a set $N \subset X$ is called a $\mathcal{P}$-polar  if for all $P \in \mathcal{P}$, there exists some $A_{P} \in \mathcal{B}_c(X)$ such that $P(A_{P})=0$ and $N \subset A_{P}$. We say that a property holds true $\mathcal{P}$-quasi-surely (q.s.), if it is true outside a $\mathcal{P}$-polar set. Finally we say that a set is of $\mathcal{P}$-full measure  if its complement is a $\mathcal{P}$-polar set.

\subsection{\textbf{Analytic sets}}
 An analytic set of $X$ is the continuous image of a Polish space, see  \citep[Theorem 12.24 p447]{Hitch}. We denote by $\mathcal{A}(X)$ the set of analytic sets of $X$ and recall some key properties that  will often be  used in the rest of the paper without further references (see also \citep[Chapter 7]{BS} for more details on analytic sets). The projection of an analytic set is an analytic set  (see  \citep[Proposition 7.39 p165]{BS}) and the countable union, intersection or cartesian product of analytic sets is an analytic set (see \citep[Corollary 7.35.2 p160, Proposition 7.38 p165]{BS}). However the complement of an analytic set does not need to be an  analytic set. We denote by $\mathcal{C}A(X):=\left\{A \in X,\; X \backslash{A} \in \mathcal{A}(X)\right\}$ the set of all coanalytic sets of $X$. We have that (see  \citep[Proposition 7.36 p161, Corollary 7.42.1 p169]{BS})
 \begin{align}
 \label{analyticset}
 \mathcal{B}(X) \subset \mathcal{A}(X) \cap \mathcal{CA}(X) \; \mbox{and} \; \mathcal{A}(X) \cup \mathcal{CA}(X) \subset \mathcal{B}_{c}(X).
  \end{align}
Now, for $D \in \mathcal{A}(X)$, a function $f: D  \to \mathbb{R} \cup \{\pm \infty\}$ is lower-semianalytic or lsa (resp. upper-semianalytic or usa) on $X$ if  $\{x \in X\; f(x)<c\} \in \mathcal{A}(X)$ (resp.  $\{x \in X\; f(x)>c\} \in \mathcal{A}(X)$) for all $c \in \mathbb{R}$. We denote by $\mathcal{L}SA(X)$ (resp. $\mathcal{U}SA(X)$) the set of all lsa (resp. usa) functions on $X$. A function $f: X \to Y$ (where $Y$ is another Polish space) is analytically-measurable if for all $B \in \mathcal{B}(Y)$, $f^{-1}(B)$ belongs to the sigma-algebra generated by $\mathcal{A}(X)$.  From \eqref{analyticset} it is clear that  if $f$ is lsa or usa or analytically-measurable then $f$ is $\mathcal{B}_{c}(X)$-measurable, again this will be used through the paper without further references.

 \subsection{\textbf{Measurable spaces, stochastic kernels and definition of $\mathcal{Q}^T$ }}
We fix  a time horizon $T\in \mathbb{N}$ and  introduce  a sequence $\left(\Omega_t\right)_{1 \leq t \leq T}$  of Polish spaces. We denote by $\Omega^{t}:=\O_{1} \times \dots \times \O_{t},$  with the convention that $\Omega^{0}$ is reduced to a singleton. An element of $\Omega^{t}$ will be denoted by $\o^{t}=(\omega_{1},\dots, \omega_{t})=(\o^{t-1},\o_{t})$ for $(\o_{1},\dots,\o_{t}) \in \Omega_{1}\times\dots\times\Omega_{t}$ and $(\o^{t-1},\o_{t}) \in \Omega^{t-1}\times\Omega_{t}$ (to avoid heavy notation we drop the dependency in $\o_{0}$).
It is well know that  $\mathcal{B}(\Omega^{t})=\mathcal{B}(\Omega^{t-1}) \otimes \mathcal{B}(\Omega_{t})$, see  \citep[Theorem 4.44 p149]{Hitch}. However  we have only that $\mathcal{B}_{c}(\Omega^{t-1}) \otimes \mathcal{B}_{c}(\Omega_{t}) \subset \mathcal{B}_{c}(\Omega^{t})$, which makes the use of the Projection Theorem problematic and enlighten why analytic sets are introduced.
For all $0 \leq t \leq T-1$, we denote by $\mathcal{S}K_{t+1}$ the set of universally-measurable stochastic kernel on $\O_{t+1}$ given $\O^{t}$ (see \citep[Definition 7.12 p134, Lemma 7.28 p174]{BS} ).
Fix some $1\leq t\leq T$,  $P_{t-1} \in \mathfrak{P}(\O^{t-1})$ and $p_{t} \in \mathcal{S}K_{t}$. Using Fubini's Theorem, see  \citep[Proposition 7.45 p175]{BS}, we set for all $A \in \mathcal{B}_c(\Omega^{t})$
$$
P_{t-1}\otimes p_{t}(A):=\int_{\Omega^{t-1}} \int_{\Omega_{t}} 1_{A}(\omega^{t-1},\omega_{t}) p_{t}(d\omega_{t},\omega^{t-1}) P_{t-1}(d\omega^{t-1}).
$$
For all $0\leq t\leq T-1$, we consider the random sets $\mathcal{Q}_{t+1}: \Omega^t \twoheadrightarrow \mathfrak{P}(\O_{t+1})$:  $\mathcal{Q}_{t+1}(\o^{t})$  can be seen as the set of possible models for the $t+1$-th period given the state $\o^{t}$ until time $t$.
\begin{assumption}
\label{Qanalytic}
For all $0\leq t\leq T-1$, $\mathcal{Q}_{t+1}$ is a non-empty and convex valued random set such that
$
\mbox{Graph}(\mathcal{Q}_{t+1})=\left\{(\omega^{t},P),\; P \in \mathcal{Q}_{t+1}(\omega^{t})\right\} \in  \mathcal{A}\left(\Omega^{t} \times \mathfrak{P}(\Omega_{t+1})\right).
$
\end{assumption}

From the Jankov-von Neumann Theorem, see  \citep[Proposition 7.49 p182]{BS},  there exists some analytically-measurable $q_{t+1}: \Omega^{t} \to \mathfrak{P}(\O_{t+1})$ such that for all $\o^{t} \in \O^{t}$, $q_{t+1}(\cdot,\o^{t}) \in \mathcal{Q}_{t+1}(\o^{t})$ (recall that for all $\o^{t} \in \O^{t}$, $\mathcal{Q}_{t+1}(\o^{t}) \neq \emptyset$). In other words  $q_{t+1} \in \mathcal{S}K_{t+1}$ is a universally-measurable selector of  $\mathcal{Q}_{t+1}$.
For all $1\leq t \leq T$ we define $\mathcal{Q}^{t} \subset \mathfrak{P}\left(\Omega^{t}\right)$ by
\begin{footnotesize}
\begin{align}
\label{Qstar}
 \mathcal{Q}^{t}:=\{& Q_{1} \otimes q_{2} \otimes \dots \otimes q_{t},\;  Q_{1} \in \mathcal{Q}_{1},\;  q_{s+1} \in \mathcal{S}K_{s+1},
 q_{s+1}(\cdot,\o^{s}) \in \mathcal{Q}_{s+1}(\o^{s})\; Q_{s} \mbox{-a.s.} \; \, \forall \;  1\leq s\leq t-1\; \},
\end{align}
\end{footnotesize}
where if $Q_{t}=Q_{1}\otimes q_{2} \otimes \dots \otimes q_{t} \in \mathcal{Q}^{t}$ we write for any $2 \leq s \leq t $ $Q_{s}:=Q_{1}\otimes q_{2} \otimes \dots \otimes q_{s}$ and  $Q_{s}\in \mathcal{Q}^{s}$.
For any fixed $P \in \mathcal{Q}^T$,   $E_{P}$ denotes the expectation under $P$.
\subsection{\textbf{The traded assets and strategies}}
Let $S:=\left\{S_{t},\ 0\leq t\leq T\right\}$ be a universally-adapted $d$-dimensional  process
where for $0\leq t\leq T$, $S_{t}=\left(S^i_t\right)_{1 \leq i \leq d}$ represents the  price of $d$ risky securities in the
financial market in consideration. We make the following assumptions which were already stated in \citep{Nutz}.
\begin{assumption}
\label{Sass}
The process $S$ is  Borel-adapted.
 \end{assumption}
 \begin{remark}
If Assumption \ref{Sass} is not postulated,  we cannot obtain some crucial measurability properties (see  \citep[Remark  4.4]{BN}, Lemma \ref{Dmeasurability} below as well as \eqref{reims} and \eqref{reims2} and \citep[Lemma 7.30 (3) p178]{BS}). Note that this assumption is not needed in the one period case.
 \end{remark}
\begin{assumption}
\label{Sass2}
There exists some $0 \leq s <\infty$ such that $-s \leq S^{i}_{t}(\o^{t}) <+\infty $ for all $1 \leq i \leq d$, $\o^{t} \in \O^{t}$ and  $0 \leq t \leq T$.
 \end{assumption}
Note that we can easily incorporate the case where $-s \leq S^{i}_{t} <+\infty $ only on a
Borel  $\mathcal{Q}^T$-full measure set.
There exists also a riskless
asset for which we assume a price constant equal to $1$, for  sake of simplicity. Without this assumption, all the developments
below could be carried out using discounted prices. The notation $\Delta
S_t:=S_t-S_{t-1}$ will often be used.
If $x,y\in\mathbb{R}^d$ then
the concatenation $xy$ stands for their scalar product. The symbol $|\cdot|$ denotes the Euclidean norm
on $\mathbb{R}^d$ (or on $\mathbb{R})$. Trading
strategies are represented by $d$-dimensional universally-predictable processes $\phi:=\{ \phi_{t}, 1 \leq t \leq T\}$ where for all $1 \leq t \leq T$, $\phi_{t}=\left(\phi^{i}_{t}\right)_{1 \leq i \leq d}$ represents the
investor's holdings in  each of the $d$ assets at time $t$. The family of all such trading
strategies is denoted by $\Phi$.
We assume that trading is self-financing. As the riskless asset's price is constant equal to $1$, the value at time $t$ of a portfolio $\phi$ starting from
initial capital $x\in\mathbb{R}$ is given by $V^{x,\phi}_t=x+\sum_{s=1}^t  \phi_s \Delta S_s.$

From now on the positive (resp. negative) part of some number or
random variable $Y$ is denoted by $Y^+$ (resp. $Y^-$). We will also write $f^{\pm}(Y)$ for $\left(f(Y)\right)^{\pm}$ for any random
variable $Y$ and (possibly random) function $f$

\subsection{\textbf{No arbitrage condition, risk preferences and main result}}
\label{ref1}
\begin{definition}
     \label{NAQ}
     The $NA(\mathcal{Q}^{T})$ condition holds true if for $\phi  \in \Phi$,
$V_{T}^{0,\phi} \geq 0 \; \mathcal{Q}^{T}\mbox{-q.s.} $ implies that $V_{T}^{0,\phi}  = 0 \;\mathcal{Q}^{T}\mbox{-q.s.}$ (see also \citep[Definition 1.1]{BN}).
\end{definition}

\begin{definition}
\label{utilite}
A \emph{random utility} $U$ is a function  defined on $\Omega^T \times (0,\infty)$ taking values in $\mathbb{R}\cup \{-\infty\}$ such that for every $x\in \mathbb{R}$,  $U \left(\cdot,x\right)$  is $\mathcal{B}(\Omega^{T})$-measurable and for every $\o^{T} \in {\Omega}^{T}$, $U(\o^{T}, \cdot)$ is proper \footnote{There exists $x \in(0,+\infty)$ such that  $U(\o^{T}, x)>-\infty$ and $U(\o^{T}, x)<+\infty$ for all $x \in(0,+\infty)$.}, non-decreasing and  concave  on $(0,+\infty)$. We extend $U$ by (right) continuity in $0$ and set  $U(\cdot,x)=-\infty$ if $x<0$.
\end{definition}
\begin{remark}
\label{uscoupas}
Fix some $\o^{T} \in \O^{T}$ and let $ \mbox{Dom}  \, U(\o^{T},\cdot):=\{ x \in \mathbb{R}, U(\o^{T},x)>-\infty\}$ be the domain of $U(\o^{T},\cdot)$. Then $ U(\o^{T},\cdot)$ is continuous on $\mbox{Ri} (\mbox{Dom} \, U(\o^{T},\cdot))$, the relative interior of the domain of $U(\o^{T},\cdot)$ (see  \citep[Theorem 10.1 p82]{cvx}). Note that if $ U(\o^{T},\cdot)$ is improper then $ U(\o^{T},\cdot)=+\infty$ on $\mbox{Ri} (\mbox{Dom} \, U(\o^{T},\cdot))$  and if $ U(\o^{T},\cdot)$ is assumed to be upper semicontinuous (usc from now) then it is infinite on all $\mathbb{R}$ (see  \citep[Theorem 7.2 and Corollary 7.2.1, p53]{cvx}) which is a rather uninteresting case. Nevertheless our results hold true for an improper usc function. Here $U(\o^{T},\cdot)$ will not be assumed to be usc since Assumption \ref{Uminus} is postulated. Indeed it implies that $\mbox{Dom} \, U(\o^{T},\cdot)=(0,\infty)$ if  $\o^{T} \in \O_{Dom}^{T}$ which is a Borel  $\mathcal{Q}^{T}$-full measure set (see Lemma \ref{Ucontinuity}). Then $U$ can be modified so that it remains Borel-measurable, that $\mbox{Dom} \, U(\o^{T},\cdot)=(0,\infty)$ and thus extending $U(\o^{T},\cdot)$ by continuity in $0$ is enough to get an usc function
for all $\o^T \in \Omega^T$. If $\mbox{Dom} \, U(\o^{T},\cdot)=(0,\infty)$ is not true on a Borel  $\mathcal{Q}^{T}$-full measure set  then   one cannot avoid the usc assumption~: $ U(\o^{T},\cdot)$ is continuous on $\mbox{Ri}(Dom\, U(\o^{T},\cdot))=(m(\o^{T}),\infty)$  and one need to  extend $U(\o^{T},\cdot)$ by (right)-continuity in $m(\o^{T})$ which might be strictly positive. This is the reason why in the dynamic programming part we force the value function to be usc on all $\Omega^t$ by taking their closure (see Lemma \ref{concaveandusc}, \eqref{vanek2} and \eqref{amiens}).
Note that we can easily include the case where $U(\o^{T}, \cdot)$ is  non-decreasing and  concave  only for $\o^T$ in a Borel  $\mathcal{Q}^{T}$-full measure set. We introduce the following notations.
\end{remark}
\begin{definition}
\label{defGeneral}
Fix some $x\geq 0$.
For $P \in \mathfrak{P}(\O^{T})$ fixed, we denote by $\Phi(x,P)$  the set of all strategies $\phi \in \Phi$ such that $V_{T}^{x,\phi}(\cdot)\geq 0$ $P$-a.s.  and by $\Phi(x,U,P)$ the set of all strategies $\phi \in \Phi(x,P)$ such that either $E_{P}U^{+}(\cdot,V_{T}^{x,\phi}(\cdot))<\infty$ or $E_{P}U^{-}(\cdot,V_{T}^{x,\phi}(\cdot))<\infty$. Then
\begin{align}
\label{phix}
\Phi(x,\mathcal{Q}^{T}):= \bigcap_{P \in \mathcal{Q}^{T}} \Phi(x,P)\; \mbox{and} \;
\Phi(x,U,\mathcal{Q}^{T}):= \bigcap_{P \in \mathcal{Q}^{T}} \Phi(x,U,P).
\end{align}
\end{definition}
Under $NA(\mathcal{Q}^{T})$, if $\phi \in \Phi(x, \mathcal{Q}^{T})$ then $P_{t}(V_{t}^{x,\phi}(\cdot)\geq 0)=1$ for all $P \in \mathcal{Q}^{t}$ and $1 \leq t \leq T$, see Lemma \ref{AOAT}. Note that in \citep[Definition of $\Hc_x$, top of p10]{Nutz}, this intertemporal budget constraint was postulated.
We now state our main concern.
\begin{definition}
\label{DefU}
	Let $x \geq 0 $, the \emph{multiple-priors portfolio problem} with initial wealth $x$  is
	\begin{align}\label{eq:OP}
		u(x) :=  \sup_{\phi \in \Phi(x,U,\mathcal{Q}^{T})} \inf_{P \in \mathcal{Q}^{T}} E_{P} U(\cdot,V^{x,\phi}_{T}(\cdot)).
	\end{align}
\end{definition}
\begin{remark}
\label{verslinfinietaudela}
We will use the convention $+\infty-\infty=+\infty$ throughout the paper. This choice is rather unnatural when studying maximisation problem. The reason for this is that we will  use \citep[Proposition 7.48 p180]{BS} (which relies on  \citep[Lemma 7.30 (4) p177]{BS}) for lower-semianalytic function where this convention is required.
\end{remark}
We now present our main result under conditions which will be detailed in section \ref{se3}.
\begin{theorem}
\label{main}
Assume that the $NA(\mathcal{Q}^{T})$ condition and  Assumptions \ref{Qanalytic}, \ref{Sass},  \ref{Sass2}, \ref{Uminus} and \ref{Uplus} hold true. Let $x\geq 0$. Then, there exists some optimal strategy $\phi^* \in \Phi(x,U, \mathcal{Q}^{T})$ such that
$$u(x) =  \inf_{P \in \mathcal{Q}^{T}} E_{P} U(\cdot,V^{x,\phi^*}_{T}(\cdot))<\infty.$$
\end{theorem}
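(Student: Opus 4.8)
The plan is to reduce the maximisation \eqref{eq:OP} to a backward sequence of one-period problems over $\mathbb{R}^d$ and then glue the one-period optimizers into a single strategy $\phi^*$ by measurable selection, following the dynamic programming scheme of \citep{RS05,Nutz,BCR}. Throughout write $\Delta S_{t+1}:=S_{t+1}-S_t$ (Borel by Assumption \ref{Sass}) and, for $\omega^t\in\Omega^t$ and $x\geq0$, set $\mathcal{H}_{t+1}(\omega^t,x):=\{h\in\mathbb{R}^d : x+h\,\Delta S_{t+1}(\omega^t,\cdot)\geq0\ \mathcal{Q}_{t+1}(\omega^t)\text{-q.s.}\}$, the admissible one-period positions.

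\textbf{One-period step.} The first task is: given $V\in\mathcal{L}SA(\Omega^{t+1}\times(0,\infty))$ with $V(\omega^{t+1},\cdot)$ non-decreasing, concave and usc for every $\omega^{t+1}$ and with $V^+$ dominated by the integrable bound of Assumption \ref{Uplus}, define for $(\omega^t,x)\in\Omega^t\times[0,\infty)$
\[
\mathcal{I}V(\omega^t,x):=\sup_{h\in\mathcal{H}_{t+1}(\omega^t,x)}\ \inf_{P\in\mathcal{Q}_{t+1}(\omega^t)}\ \int_{\Omega_{t+1}} V\big(\omega^t,\omega_{t+1},\,x+h\,\Delta S_{t+1}(\omega^t,\omega_{t+1})\big)\,P(d\omega_{t+1});
\]
then $\mathcal{I}V$ is lsa, $\mathcal{I}V(\omega^t,\cdot)$ is non-decreasing and concave, its usc closure $\overline{\mathcal{I}V}(\omega^t,\cdot)$ lies in the same class (Lemma \ref{concaveandusc}), $\overline{\mathcal{I}V}^+$ again admits an integrable bound (this is exactly what Assumption \ref{Uplus} is designed to propagate), and there is a universally-measurable $h^*:\Omega^t\times(0,\infty)\to\mathbb{R}^d$ with $h^*(\omega^t,x)\in\mathcal{H}_{t+1}(\omega^t,x)$ realising the supremum. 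Attainment uses the $NA(\mathcal{Q}^T)$ condition in its quantitative form from Section \ref{secna}: a maximizing sequence in $h$ may be taken bounded, the normalized limit of an unbounded one being a direction $g$ with $g\,\Delta S_{t+1}=0$ $\mathcal{Q}_{t+1}(\omega^t)$-q.s.\ and hence removable, while a reverse-Fatou argument based on the integrable bound makes $h\mapsto\inf_P\int V(\cdots)P(d\omega_{t+1})$ upper-semicontinuous, so the supremum over the resulting compact set is a maximum. The measurability assertions rest on Fubini for lsa functions (\citep[Proposition 7.48 p180]{BS}, whence the convention $+\infty-\infty=+\infty$ of Remark \ref{verslinfinietaudela}), on Assumption \ref{Qanalytic} for the infimum over $\mathcal{Q}_{t+1}$, on the measurability of $\mathcal{H}_{t+1}$ (Lemma \ref{Dhmes}) for the supremum over $h$, and on the Jankov--von Neumann and Aumann selection theorems (Proposition \ref{dyn3}) for $h^*$.

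\textbf{Dynamic programming and construction of $\phi^*$.} Let $U_T$ be the Borel, usc, full-domain modification of $U$ described in Remark \ref{uscoupas} (legitimate by Lemma \ref{Ucontinuity} under Assumption \ref{Uminus}), and set $U_t:=\overline{\mathcal{I}U_{t+1}}$ for $t=T-1,\dots,0$. Iterating the one-period step, each $U_t$ is lsa with $U_t(\omega^t,\cdot)$ non-decreasing, concave, usc and of domain $(0,\infty)$ on a Borel $\mathcal{Q}^t$-full-measure set, $U_t^+$ has an integrable bound, and there is a universally-measurable selector $h^*_t:\Omega^t\times(0,\infty)\to\mathbb{R}^d$ realising the supremum defining $\mathcal{I}U_{t+1}$. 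Put $V_0^*:=x$ and recursively $\phi^*_{t+1}:=h^*_t(\cdot,V_t^*)$, $V_{t+1}^*:=V_t^*+\phi^*_{t+1}\Delta S_{t+1}$. The admissibility built into $\mathcal{H}_{t+1}$ together with Lemma \ref{AOAT} gives $V_t^*\geq0$ $\mathcal{Q}^t$-q.s., so $\phi^*\in\Phi(x,\mathcal{Q}^T)$; the integrable bound for $U_T^+=U^+$ gives $E_PU^+(\cdot,V_T^{x,\phi^*})<\infty$ for all $P\in\mathcal{Q}^T$, hence $\phi^*\in\Phi(x,U,\mathcal{Q}^T)$; and $U_0(x)=\mathcal{I}U_1(x)<\infty$ because the integrand is dominated by an integrable function.

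\textbf{Verification.} Both the value identity and optimality follow by telescoping along the decomposition $P=Q_1\otimes q_2\otimes\cdots\otimes q_T$ of a generic element of $\mathcal{Q}^T$, using Fubini (\citep[Proposition 7.45 p175]{BS}). For the upper bound, fix $\phi\in\Phi(x,U,\mathcal{Q}^T)$; by Lemma \ref{AOAT}, for $\mathcal{Q}^t$-q.e.\ $\omega^t$ one has $\phi_{t+1}(\omega^t)\in\mathcal{H}_{t+1}(\omega^t,V_t^{x,\phi}(\omega^t))$, so $U_t(\omega^t,V_t^{x,\phi}(\omega^t))\geq\inf_{P\in\mathcal{Q}_{t+1}(\omega^t)}\int U_{t+1}(\cdots)P(d\omega_{t+1})$; choosing by measurable selection an $\varepsilon$-optimal universally-measurable kernel $q^\varepsilon_{t+1}$ with values in $\mathcal{Q}_{t+1}$ gives $E_{Q_t\otimes q^\varepsilon_{t+1}}U_{t+1}(\cdot,V_{t+1}^{x,\phi})\leq E_{Q_t}U_t(\cdot,V_t^{x,\phi})+\varepsilon$ for every $Q_t\in\mathcal{Q}^t$, hence $\inf_{\mathcal{Q}^{t+1}}E\,U_{t+1}(\cdot,V_{t+1}^{x,\phi})\leq\inf_{\mathcal{Q}^t}E\,U_t(\cdot,V_t^{x,\phi})$; iterating over $t$ yields $\inf_{P\in\mathcal{Q}^T}E_PU(\cdot,V_T^{x,\phi})\leq U_0(x)$, so $u(x)\leq U_0(x)<\infty$. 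For the lower bound, run the same telescoping with $\phi^*$: since $h^*_t$ realises the inner supremum and, on the Borel $\mathcal{Q}^t$-full-measure set where $\mathrm{Dom}\,U_t(\omega^t,\cdot)=(0,\infty)$, the usc closure does not change the value at $V_t^*(\omega^t)\geq0$, one gets $\inf_{\mathcal{Q}^{t+1}}E\,U_{t+1}(\cdot,V_{t+1}^*)\geq\inf_{\mathcal{Q}^t}E\,U_t(\cdot,V_t^*)$, and iterating gives $\inf_{P\in\mathcal{Q}^T}E_PU(\cdot,V_T^{x,\phi^*})\geq U_0(x)$. Combining, $\inf_{P\in\mathcal{Q}^T}E_PU(\cdot,V_T^{x,\phi^*})=U_0(x)=u(x)<\infty$, which is the claim. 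The main obstacle is the measurability in the one-period step --- propagating lower-semianalyticity through integration against kernels, through restriction to the analytic-graph correspondence $\mathcal{Q}_{t+1}$, through the quasi-sure admissibility constraint defining $\mathcal{H}_{t+1}$, through the supremum over $h$, and through the usc closure --- precisely the points where analytic sets fail to be stable under complementation and where the corrected version of \citep[Lemma 4.12]{BN} is needed.
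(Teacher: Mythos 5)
Your overall route is the same as the paper's (one-period analysis, backward dynamic programming, measurable selection, telescoping verification), but there is a genuine gap at the central measurability step, and it is precisely the point the paper's construction is built to overcome. You define $\mathcal{I}V(\omega^t,x)$ as a supremum over the full, uncountable random set $\mathcal{H}_{t+1}(\omega^t,x)$ and assert that $\mathcal{I}V$ is lsa, citing Lemma \ref{Dhmes} for the measurability of the constraint. This does not follow: Lemma \ref{Dhmes} only gives that $\mbox{Graph}(\mathcal{H}^{t+1})$ is \emph{coanalytic}, so the level sets $\{\sup_h(\cdot)>c\}$ cannot be obtained as projections of analytic sets, and a supremum of uncountably many lsa functions is in general not lsa (projection preserves analyticity for the \emph{infimum} of lsa functions and the supremum of usa functions, not for the supremum of lsa functions). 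The paper circumvents this by defining $\mathcal{U}_t$ in \eqref{vanek} as a \emph{countable} supremum over $\mathcal{H}^{t+1}_x(\omega^t)\cap\mathbb{Q}^d$, taking the usc closure $U_t=\mbox{Cl}(\mathcal{U}_t)$, and then proving in the one-period Theorem \ref{main1} the identity $v=v^{\mathbb{Q}}=\mbox{Cl}(v^{\mathbb{Q}})$ on a full-measure set — an identity that requires the quantitative $NA$ characterisation of Proposition \ref{thelemmamultiper} \emph{and} the propagation of Assumption \ref{Uminus} through the recursion (the $J^r_t$ functions of \eqref{Crec0}--\eqref{Crec}, Proposition \ref{ToolJC} and condition \eqref{bordeaux}, which give the density of $\mathbb{Q}^d$ in the relevant domain). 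Your sketch makes no use of the rational supremum and consequently never needs this machinery, but then the lower-semianalyticity you assert is exactly what is unproved.

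A related, smaller gap: you posit a universally-measurable exact maximizer $h^*(\omega^t,x)$ jointly in $(\omega^t,x)$ and then compose it with $V^*_t$. The paper never establishes such a jointly measurable exact selector; exact selection is performed only \emph{after} substituting the (universally measurable) wealth $X=V^{x,\phi^*}_t$, via the random set $\psi_X$ whose graph lies in $\mathcal{B}_c(\Omega^t)\otimes\mathcal{B}(\mathbb{R}^d)$, the Projection Theorem and Aumann's theorem (Proposition \ref{dyn3}, \eqref{noisette}); jointly measurable selectors are used only in $\varepsilon$-optimal form (Jankov--von Neumann, \citep[Proposition 7.50 p184]{BS}) for the usa auxiliary functions. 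Your verification step is essentially the paper's (including the need, which you pass over, to truncate with $\max(\cdot,-\tfrac1\varepsilon-\varepsilon)$ and invoke Fatou using the integrable bound \eqref{fubiniforphi0}), so once the value functions are given the paper's countable-supremum definition and the $J^r_t$/$I_t$ propagation is in place, your outline would go through; as written, however, the lsa claim for the uncountable supremum and the joint exact selector are unsupported assertions rather than proofs.
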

In Theorem \ref{main2}, we will propose a fairly general set-up where Assumption \ref{Uplus} is satisfied.
\section{No-arbitrage condition characterisation}
\label{secna}
We will often use the following one-period version of the no-arbitrage condition. For $\o^{t} \in \O^{t}$ fixed we  say that $NA(\mathcal{Q}_{t+1}(\o^{t}))$ condition holds true if for all $h \in \mathbb{R}^{d}$
\begin{align} \label{NP1}
h\Delta S_{t+1}(\o^{t},\cdot) \geq 0 \; \mathcal{Q}_{t+1}(\o^{t}) \mbox{-q.s.} \Rightarrow h\Delta S_{t+1}(\o^{t},\cdot) = 0 \;\mathcal{Q}_{t+1}(\o^t)\mbox{-q.s.} \end{align}
We introduce the affine hull (denoted by $\mbox{Aff}$) of the (robust) conditional support of $\Delta S_{t+1}$.
\begin{definition}
\label{DefD}
Let  $0 \leq t \leq T-1$ be fixed, the  random set  $D^{t+1} : \Omega^{t} \twoheadrightarrow \mathbb{R}^{d}$ is defined as
$$
 D^{t+1}(\o^{t}):= \mbox{Aff} \left( \bigcap  \left\{ A \subset \mathbb{R}^{d},\; \mbox{closed}, \; P_{t+1}\left(\Delta S_{t+1}(\o^{t},.) \in A\right)=1, \; \forall \,P_{t+1} \in \mathcal{Q}_{t+1}(\o^{t}) \right\} \right).
$$
\end{definition}
A strategy $\phi  \in \Phi$ such that $\phi^{t+1}(\o^{t}) \in D^{t+1}(\o^{t})$ have nice properties, see \eqref{valaki} and Lemma \ref{rast}.
If $D^{t+1}(\o^{t}) = \mathbb{R}^d$ then,
intuitively, there are no redundant assets for all model specifications. Otherwise, for any $\mathcal{B}_{c}(\O^{t})$-measurable strategy $\phi_{t+1}$, one may
always replace $\phi_{t+1}(\o^{t},\cdot)$ by its orthogonal projection
$\phi^{\perp}_{t+1}(\o^{t},\cdot)$ on $D^{t+1}(\o^{t})$ without changing the portfolio value (see Remark \ref{proj} below and \citep[Lemma 2.6]{Nutz}).
The following lemma establishes some important properties of  $D^{t+1}$.
\begin{lemma}
\label{Dmeasurability}
Let Assumptions \ref{Qanalytic} and \ref{Sass} hold true and $0 \leq t \leq T-1$ be fixed.
Then $D^{t+1}$ is a non-empty, closed valued random set and  $\mbox{Graph}(D^{t+1}) \in \mathcal{B}_{c}(\O^t) \otimes \mathcal{B}(\mathbb{R}^{d})$.
\end{lemma}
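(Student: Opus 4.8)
The plan is to first reduce to the robust conditional support, then treat the affine hull and the joint measurability of the graph separately.

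First I would introduce the auxiliary random set $\widetilde D^{t+1}(\omega^t)$ equal to the intersection appearing inside $\mbox{Aff}(\cdot)$ in Definition \ref{DefD}, so that $D^{t+1}(\omega^t)=\mbox{Aff}(\widetilde D^{t+1}(\omega^t))$. Since $\mathbb{R}^d$ is second countable, for each $P\in\mathcal{Q}_{t+1}(\omega^t)$ the image measure $\mu_{\omega^t,P}:=P\circ(\Delta S_{t+1}(\omega^t,\cdot))^{-1}$ has a non-empty support carrying full mass, and I would check the identity $\widetilde D^{t+1}(\omega^t)=\overline{\bigcup_{P\in\mathcal{Q}_{t+1}(\omega^t)}\mathrm{supp}(\mu_{\omega^t,P})}$: this closed set lies in the intersected family (it has full $\mu_{\omega^t,P}$-measure for every $P$) and is contained in every member of it (each $\mathrm{supp}(\mu_{\omega^t,P})$ is the smallest closed full-measure set). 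Hence $\widetilde D^{t+1}(\omega^t)\neq\emptyset$ because $\mathcal{Q}_{t+1}(\omega^t)\neq\emptyset$, and $D^{t+1}(\omega^t)$, being a non-empty affine subspace of $\mathbb{R}^d$, is non-empty and closed. This settles the first two assertions.

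For the graph, the crucial step is the characterization: $y\in\widetilde D^{t+1}(\omega^t)$ if and only if $\sup_{P\in\mathcal{Q}_{t+1}(\omega^t)}P(\Delta S_{t+1}(\omega^t,\cdot)\in\overline B(y,1/n))>0$ for every $n\geq1$, which follows from the support representation above together with the fact that $z\in\mathrm{supp}(\mu)$ iff $\mu$ charges every ball around $z$. Writing $\Psi_n(\omega^t,y,P):=\int_{\Omega_{t+1}}1_{\{|\Delta S_{t+1}(\omega^t,\omega_{t+1})-y|\leq1/n\}}\,P(d\omega_{t+1})$, I would observe that by Assumption \ref{Sass} the map $\Delta S_{t+1}$ is Borel, so the integrand is a bounded Borel (in particular lsa) function of $(\omega^t,y,\omega_{t+1})$ and $(\omega^t,y,P)\mapsto P$ is a Borel stochastic kernel; applying \citep[Proposition 7.48 p180]{BS} to this integrand and to its negative then shows that $\Psi_n$ is both lsa and usa on $\Omega^t\times\mathbb{R}^d\times\mathfrak{P}(\Omega_{t+1})$, so that $\{\Psi_n>0\}$ is analytic. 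Using Assumption \ref{Qanalytic} (hence the preimage of $\mathrm{Graph}(\mathcal{Q}_{t+1})$ under $(\omega^t,y,P)\mapsto(\omega^t,P)$ is analytic) and $\mathcal{Q}_{t+1}(\omega^t)\neq\emptyset$, the characterization reads $\mathrm{Graph}(\widetilde D^{t+1})=\bigcap_{n\geq1}\mathrm{proj}_{\Omega^t\times\mathbb{R}^d}\big(\{(\omega^t,y,P):(\omega^t,P)\in\mathrm{Graph}(\mathcal{Q}_{t+1})\}\cap\{\Psi_n>0\}\big)$, which is analytic by stability of analytic sets under preimages, finite and countable intersections and projections.

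Finally I would pass to the affine hull and then to the product $\sigma$-algebra. Since $\mbox{Aff}(C)=\{\sum_{i=0}^d\lambda_ic_i:\ c_i\in C,\ \sum_{i=0}^d\lambda_i=1\}$ for non-empty $C\subset\mathbb{R}^d$, the set $\mathrm{Graph}(D^{t+1})$ is the projection onto $\Omega^t\times\mathbb{R}^d$ of the analytic set of tuples $(\omega^t,y,c_0,\dots,c_d,\lambda_0,\dots,\lambda_d)$ with $(\omega^t,c_i)\in\mathrm{Graph}(\widetilde D^{t+1})$ for $i=0,\dots,d$, $\sum_i\lambda_i=1$ and $y=\sum_i\lambda_ic_i$ (the last two conditions being closed), hence $\mathrm{Graph}(D^{t+1})$ is analytic, in particular in $\mathcal{B}_c(\Omega^t\times\mathbb{R}^d)$. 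To obtain the product form I would note that $D^{t+1}$ is closed-valued and that, for every open $O\subset\mathbb{R}^d$, $\{\omega^t:\ D^{t+1}(\omega^t)\cap O\neq\emptyset\}=\mathrm{proj}_{\Omega^t}\big(\mathrm{Graph}(D^{t+1})\cap(\Omega^t\times O)\big)$ is analytic, hence in $\mathcal{B}_c(\Omega^t)$; thus $D^{t+1}$ is a $\mathcal{B}_c(\Omega^t)$-measurable closed-valued multifunction, and a Castaing representation $D^{t+1}(\omega^t)=\overline{\{f_k(\omega^t):k\geq1\}}$ with $\mathcal{B}_c(\Omega^t)$-measurable selectors yields $\mathrm{Graph}(D^{t+1})=\bigcap_{m\geq1}\bigcup_{k\geq1}\{(\omega^t,y):\ |y-f_k(\omega^t)|<1/m\}\in\mathcal{B}_c(\Omega^t)\otimes\mathcal{B}(\mathbb{R}^d)$. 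I expect the main obstacle to be the middle step: extracting the correct semi-analytic class for $\Psi_n$ from \citep[Proposition 7.48 p180]{BS} and converting the robust-support condition into an analytic graph through the Projection Theorem (this is precisely where Assumptions \ref{Sass} and \ref{Qanalytic} are used, and where the failure of analytic sets to be stable under complementation must be respected), whereas the affine-hull bookkeeping and the final descent to $\mathcal{B}_c(\Omega^t)\otimes\mathcal{B}(\mathbb{R}^d)$ via measurable selection should be comparatively routine.
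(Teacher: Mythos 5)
Your proposal is correct and follows essentially the route the paper points to (its own proof is omitted with references to \citep[Theorem 14.8, Ex. 14.2]{rw} and \citep[Lemma 4.3]{BN}): you reconstruct the analyticity of the graph of the robust conditional support via the ball characterisation, \citep[Propositions 7.40, 7.48]{BS} and projections of analytic sets, then handle the affine hull by a Carath\'eodory-type parametrisation, and descend to $\mathcal{B}_{c}(\O^t)\otimes\mathcal{B}(\mathbb{R}^{d})$ through measurability of the closed-valued multifunction and a Castaing representation, exactly in the spirit of the cited results. No gaps; this is a valid worked-out version of the argument the paper leaves to the reader.
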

\begin{proof} The proof uses similar arguments as  in  \citep[Theorem 14.8 p648, Ex. 14.2 p652]{rw} together with  \citep[Lemma 4.3]{BN} and is thus omitted.
\end{proof}\\
Similarly as in \citep{RS05} and \citep{JS98} (see also \citep{BCR}), we prove  a ``quantitative'' characterisation of the $NA(\mathcal{Q}^{T})$ condition.
\begin{proposition}
\label{thelemmamultiper}
Assume that the $NA(\mathcal{Q}^{T})$ condition and  Assumptions \ref{Qanalytic}, \ref{Sass} hold true. Then for all $0\leq t\leq T-1$, there exists some $\mathcal{Q}^{t}$-full measure set $\Omega^{t}_{NA} \in \mathcal{B}_{c}(\O^{t})$   such that for all $\omega^{t} \in \Omega^{t}_{NA}$, $NA(\mathcal{Q}_{t+1}(\omega^{t}))$ holds true, $D^{t+1}(\o^{t})$ is a vector space and there exists $\alpha_{t}(\omega^{t})>0$ such that for all  $h \in  D^{t+1}(\omega^{t})$  there exists $P_{h} \in \mathcal{Q}_{t+1}(\o^{t})$ satisfying
\begin{align}
\label{valaki}
P_{h}\left(\frac{h}{|h|}\Delta S_{t+1}(\omega^{t},.)<-\alpha_{t}(\omega^{t})\right)> \alpha_{t}(\omega^{t}).
\end{align}
\end{proposition}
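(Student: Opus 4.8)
The plan is to establish the statement by induction-free, fiberwise analysis combined with a measurable-selection argument, following the spirit of the single-prior quantitative no-arbitrage characterisations in \citep{RS05, JS98} but carried out in the quasi-sure framework. First I would fix $0 \leq t \leq T-1$ and work on a candidate full-measure set. The key observation is that the global $NA(\mathcal{Q}^T)$ condition localizes: if on a non-$\mathcal{Q}^t$-polar set of $\o^t$ the one-period condition $NA(\mathcal{Q}_{t+1}(\o^t))$ failed, one could build a strategy $\phi \in \Phi$ realizing an arbitrage $\mathcal{Q}^T$-q.s.\ by pasting the one-period arbitrages together (using the universal-measurable selectors $q_{s+1}$ defining $\mathcal{Q}^T$ in \eqref{Qstar}), contradicting $NA(\mathcal{Q}^T)$. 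Making this rigorous requires a measurable selection of a violating $h$ on the bad set; here I would invoke Lemma \ref{Dmeasurability} to get that $\mbox{Graph}(D^{t+1}) \in \mathcal{B}_c(\O^t) \otimes \mathcal{B}(\mathbb{R}^d)$ and use the Aumann/Jankov--von Neumann machinery to pick a $\mathcal{B}_c(\O^t)$-measurable selector. This yields a $\mathcal{Q}^t$-full measure set $\O^t_1 \in \mathcal{B}_c(\O^t)$ on which $NA(\mathcal{Q}_{t+1}(\o^t))$ holds and $D^{t+1}(\o^t)$ is a vector subspace (the affine hull of a set containing $0$, once we know there are no redundant directions forced off the origin).

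Next, on $\O^t_1$ I would produce the quantitative bound $\alpha_t(\o^t)$. For fixed $\o^t$, define for each $n \geq 1$ the set of "directions that are almost arbitrages at scale $1/n$", namely $h \in D^{t+1}(\o^t) \cap \partial B(0,1)$ such that for every $P \in \mathcal{Q}_{t+1}(\o^t)$ one has $P\big(\tfrac{h}{|h|}\Delta S_{t+1}(\o^t,\cdot) < -1/n\big) \leq 1/n$. Call this set $A_n(\o^t)$; it is a decreasing sequence of (relatively) closed subsets of the compact unit sphere in the finite-dimensional space $D^{t+1}(\o^t)$, because the map $h \mapsto \inf_{P} \big[P(\ldots)\big]$ is upper semicontinuous in $h$ after a suitable argument, or more safely one works with closures. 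The claim is that $\bigcap_n A_n(\o^t) = \emptyset$ on $\O^t_1$: any $h$ in the intersection would satisfy $\tfrac{h}{|h|}\Delta S_{t+1}(\o^t,\cdot) \geq 0$ $\mathcal{Q}_{t+1}(\o^t)$-q.s.\ but not $= 0$ q.s.\ (since $h \in D^{t+1}(\o^t)$, by the support/affine-hull definition it genuinely moves the price under some model), contradicting $NA(\mathcal{Q}_{t+1}(\o^t))$. By compactness of the sphere (finite dimension!) the decreasing intersection of nonempty closed sets would be nonempty, so some $A_{N}(\o^t) = \emptyset$; set $\alpha_t(\o^t) := 1/N$ for the smallest such $N$. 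Then for every $h \in D^{t+1}(\o^t) \setminus\{0\}$, negating membership in $A_N$ gives exactly \eqref{valaki} (handle $h = 0$ vacuously or by the convention, since the statement quantifies over $h \in D^{t+1}(\o^t)$ with $h/|h|$ appearing).

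The last task is measurability of $\o^t \mapsto \alpha_t(\o^t)$, which is needed to keep $\O^t_{NA} := \O^t_1 \cap \{\alpha_t > 0\}$ in $\mathcal{B}_c(\O^t)$ (in fact $\alpha_t > 0$ everywhere on $\O^t_1$, so the real content is that $\alpha_t$ is a $\mathcal{B}_c(\O^t)$-measurable function, obtained as $\alpha_t = 1/N(\o^t)$ with $N(\o^t) = \min\{n : A_n(\o^t) = \emptyset\}$). Here is where the analytic-set bookkeeping bites: one must show $\{\o^t : A_n(\o^t) \neq \emptyset\}$ is in $\mathcal{B}_c(\O^t)$ for each $n$. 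I would write this set as the projection onto $\O^t$ of
\[
\Big\{(\o^t, h) : |h| = 1,\ h \in D^{t+1}(\o^t),\ \forall P \in \mathcal{Q}_{t+1}(\o^t)\ P\big(h\,\Delta S_{t+1}(\o^t,\cdot) < -\tfrac{1}{n}\big) \leq \tfrac{1}{n}\Big\},
\]
and argue that, using Borel-measurability of $S$ (Assumption \ref{Sass}), the Graph assumption on $\mathcal{Q}_{t+1}$ (Assumption \ref{Qanalytic}), Lemma \ref{Dmeasurability}, and the fact that $(P,y) \mapsto P(\{y \cdot \Delta S < c\})$ has the right semianalytic measurability (via \citep[Proposition 7.29, Lemma 7.30]{BS} on integration of Borel functions against kernels), this set is coanalytic, hence its projection — careful: projection of coanalytic need not be nice. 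To avoid that, I would instead dualize: express $\{A_n \neq \emptyset\}$ or rather its complement using the \emph{open} condition $P(\ldots) > 1/n$ so that the relevant fiber set is analytic and projection stays analytic, then take complements one level up. I expect this measurability step — threading the analytic/coanalytic needle so that the final $\alpha_t$ lands in $\mathcal{B}_c(\O^t)$ — to be the main obstacle; the fiberwise existence of $\alpha_t(\o^t)$ is essentially the classical compactness argument, but making the choice measurable in $\o^t$ without losing oneself in the non-closure of analytic sets under complementation is the delicate part, exactly the kind of difficulty the introduction flags around Lemma \ref{Dhmes} and Proposition \ref{dyn3}.
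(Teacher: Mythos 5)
Your proposal is correct in substance and, for the quantitative core, follows the same route as the paper: you introduce the same sets $A_n(\omega^{t})$ and exploit finite-dimensional compactness of the unit sphere of $D^{t+1}(\omega^{t})$. You phrase this as a decreasing family of closed sets with empty intersection, whereas the paper picks $h_n\in A_n$, extracts a convergent subsequence $h_n\to h^{*}$ and applies Fatou; the two are interchangeable, and your key observation that any $h$ in $\bigcap_n A_n$ satisfies $h\Delta S_{t+1}(\omega^{t},\cdot)\geq 0$ $\mathcal{Q}_{t+1}(\omega^{t})$-q.s.\ and hence $h=0$ is exactly the paper's contradiction via \eqref{NPome2}. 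Three remarks. First, for the localization step the paper does not re-prove anything: it invokes \citep[Theorem 4.5]{BN} to obtain directly that $N_t:=\{NA(\mathcal{Q}_{t+1}(\cdot))\ \mbox{fails}\}$ is universally measurable and $\mathcal{Q}^{t}$-polar, which also yields $\Omega^{t}_{NA}\in\mathcal{B}_c(\Omega^{t})$. Your plan to rebuild this by measurably selecting a violating direction and pasting one-period arbitrages is workable in principle, but this is precisely where the hard measurability lives: the ``bad'' graph combines a quasi-sure inequality (an infimum over $\mathcal{Q}_{t+1}(\omega^{t})$) with its negation (a supremum), so it lies only in the sigma-algebra generated by analytic sets and Jankov--von Neumann does not apply off the shelf; you leave these details unaddressed, and citing \citep[Theorem 4.5]{BN} is the intended shortcut. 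Second, the measurability of $\alpha_t$ that you single out as the main obstacle is not required by the statement: only the pointwise existence of $\alpha_t(\omega^{t})>0$ on $\Omega^{t}_{NA}$ is claimed, and the paper explicitly defers the $\mathcal{B}_c(\Omega^{t})$-measurability of $\omega^{t}\mapsto\alpha_t(\omega^{t})$ to \citep{BC17}. Third, the assertion that $D^{t+1}(\omega^{t})$ is a vector space needs more than your parenthetical: the conditional support need not contain $0$, so one must actually prove $0\in D^{t+1}(\omega^{t})$ under $NA(\mathcal{Q}_{t+1}(\omega^{t}))$ (the paper uses a modification of \citep[Lemma 3.5]{BCR}); as written this step is a small but genuine gap in your argument.
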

We prove in \citep{BC17} that there is in fact an equivalence between the $NA(\mathcal{Q}^{T})$ condition and \eqref{valaki}. We also prove that $\o^{t}  \to \alpha_{t}(\o^{t})$ is $\mathcal{B}_{c}(\O^{t})$-measurable.\\

\noindent \begin{proof} Using  \citep[Theorem 4.5]{BN}, $N_{t}:=\{\o^{t} \in \Omega^{t},\; NA(\mathcal{Q}_{t+1}(\omega^{t})) \; \mbox{fails}\} \in \mathcal{B}_{c}(\O^{t})$ and $P(N_{t})=1$  for all $P \in \mathcal{Q}^{t}$. So setting $\Omega^{t}_{NA}:=\Omega^{t} \backslash{N_{t}}$,  we get that \eqref{NP1} holds true  for  all $\o^{t} \in \Omega^{t}_{NA}$.
We fix some $\o^{t} \in \Omega^{t}_{NA}$. If  $h \in D^{t+1}(\o^{t})$, we have that
\begin{align} \label{NPome2}
h\Delta S_{t+1}(\o^{t},\cdot) \geq 0 \; \mathcal{Q}_{t+1}(\o^{t})\mbox{-q.s.} \Rightarrow h = 0.
\end{align}
Indeed as $\o^{t} \in \O^{t}_{NA}$,  \eqref{NP1} together with  \citep[Lemma 2.6]{Nutz} imply that $h \in \left( D^{t+1}(\o^{t}) \right)^{\perp}$ the orthogonal space of $D^{t+1}(\o^{t})$ and $h =0$. Therefore, for all $h \in D^{t+1}(\o^{t})$, $h \neq 0$, there exists $P_{h} \in \mathcal{Q}_{t+1}(\o^{t})$   such that $P_{h}(h\Delta S_{t+1}(\o^t,\cdot)\geq 0)<1$. Using a slight modification of \citep[Lemma 3.5]{BCR}
we get that $0 \in D^{t+1}(\o^{t})$  ($i.e$ $D^{t+1}(\o^{t})$ is a vector space).   We introduce for $n\geq 1$
\begin{small}
\begin{align*}
A_{n}(\o^{t}):=\left\{ h \in  D^{t+1}(\o^{t}),\; |h|=1,\; P_{t+1}\left(h\Delta S_{t+1}(\o^t,\cdot) \leq -\frac{1}{n}\right) \leq \frac{1}{n},  \; \forall P_{t+1} \in \mathcal{Q}_{t+1}(\o^{t})\right\}
\end{align*}
\end{small}
and we define $n_{0}(\o^{t}):=\inf\{n \geq 1, A_{n}(\o^{t})=\emptyset\}$ with the convention that $\inf \emptyset=+\infty$.
If $D^{t+1}(\o^{t}) =\{0\}$, then $n_{0}(\o^{t})=1 <\infty$.  We assume now that $D^{t+1}(\o^{t}) \neq \{0\}$ and prove by
contradiction that  $n_{0}(\o^{t})<\infty$.
Suppose that $n_{0}(\o^{t})=\infty$. For all $ n\geq 1$, we get some $h_{n}(\o^{t}) \in D^{t+1}(\o^t)$ with $|h_{n}(\o^{t})|=1$ and such that for all $P_{t+1} \in \mathcal{Q}_{t+1}(\o^{t})$
$P_{t+1}\left(h_n(\o^{t})\Delta S_{t+1}(\o^t,\cdot) \leq -\frac{1}{n}\right) \leq \frac{1}{n}.$
By passing to a sub-sequence we can assume that $h_{n}(\o^{t})$ tends to  some $h^{*}(\o^{t})\in D^{t+1}(\o^t)$ with $|h^{*}(\o^{t})|=1$.  Then
$\{ h^{*}(\o^{t})\Delta S_{t+1}(\o^t,\cdot)<0 \} \subset \liminf_{n} B_{n}(\o^{t}),$ where $B_{n}(\o^{t}):= \{h_{n}(\o^{t})\Delta S_{t+1}(\o^t,\cdot) \leq  -1/n \}$. Fatou's Lemma implies that for any $P_{t+1} \in \mathcal{Q}_{t+1}(\o^{t})$
\begin{align*}
P_{t+1}\left(h^*(\o^{t})\Delta S_{t+1}(\o^t,\cdot) < 0\right) &
  \leq
 \liminf_{n} \int_{\Omega_{t+1}}
1_{B_n(\o^{t})}(\o_{t+1}) P_{t+1}(d\o_{t+1})=0.
\end{align*}
This implies that $P_{t+1}\left(h^*(\o^{t})\Delta S_{t+1}(\o^t,\cdot) \geq0\right)=1$ for all $P_{t+1} \in \mathcal{Q}_{t+1}(\o^{t})$ and  $h^{*}(\o^{t})=0$ (see \eqref{NPome2}),
which contradicts $|h^{*}(\o^{t})|=1$. Thus $n_{0}(\o^{t})<\infty$.  We set for $\o^{t} \in  \Omega^{t}_{NA}$,
$\alpha_{t}(\o^{t}):= \frac{1}{n_{0}(\o^{t})}$, $\alpha_{t} \in (0,1]$ and by definition of $A_{n_{0}(\o^{t})}(\o^{t})$, \eqref{valaki} holds true.
 \end{proof}\\
Finally, we introduce an alternative notion of no arbitrage, called strong no arbitrage.
\begin{definition}
\label{sNA}
We say that the  $sNA(\mathcal{Q}^{T})$ condition holds true if for all  $P \in \mathcal{Q}^{T}$ and $\phi  \in \Phi$,
$V_{T}^{0,\phi} \geq 0 \; {P}\mbox{-a.s.}$  implies that $V_{T}^{0,\phi}  = 0 \;P \mbox{-a.s.}$
\end{definition}
The  $sNA(\mathcal{Q}^{T})$ condition holds true if the ``classical" no-arbitrage condition in model $P$, $NA(P)$, holds true for all $P \in \mathcal{Q}^{T}$. Note that if $\mathcal{Q}^{T}=\{P\}$ then $sNA(\mathcal{Q}^{T})=NA(\mathcal{Q}^{T})=NA(P)$.  Clearly the $sNA(\mathcal{Q}^{T})$ condition is stronger than the $NA(\mathcal{Q}^{T})$  condition.

As in  \citep[Definition 3.3]{BCR}, we introduce for all $P=P_1 \otimes q_{2} \otimes \cdots \otimes q_{T} \in \mathcal{Q}^{T}$ and for all $1 \leq t \leq T-1$,
$$
 D_{P}^{t+1}(\o^{t}):= \mbox{Aff} \left(\bigcap  \left\{ A \subset \mathbb{R}^{d},\; \mbox{closed}, \; q_{t+1}\left(\Delta S_{t+1}(\o^{t},.) \in A,\o^{t}\right) =1\right\}\right).$$
The case $t=0$ is obtained by replacing $q_{t+1}(\cdot,\o^{t})$ by $P_{1}(\cdot)$.
\begin{proposition}
\label{AOAmultistrong}
Assume that the $sNA(\mathcal{Q}^{T})$ condition and Assumptions \ref{Qanalytic} and \ref{Sass} hold true and let $0\leq t \leq T-1$. Fix some $P= P_{1}  \otimes q_2 \otimes \cdots \otimes q_{T} \in \mathcal{Q}^{T}$. Then there exists $\Omega_{P}^{t} \in \mathcal{B}(\O^{t})$ with  $P_{t}(\Omega_{P}^{t})=1$ such that for all $\o^t \in {\Omega}_{P}^t$, there exists
$\alpha^{P}_t(\o^t)  \in (0,1]$ such that for all $h \in D_{P}^{t+1}(\o^t)$,
$q_{t+1}\big(h\Delta S_{t+1}(\o^t,\cdot)\leq-\alpha^{P}_t(\o^t)|h|,\o^t\big)\geq \alpha^{P}_{t}(\o^t)$.
Furthermore $\o^{t}  \to \alpha^{P}_{t}(\o^{t})$ is $\mathcal{B}(\O^{t})$-measurable.
\end{proposition}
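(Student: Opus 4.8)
The plan is to reproduce fibrewise, for the fixed measure $P=P_{1}\otimes q_{2}\otimes\cdots\otimes q_{T}$, the argument behind Proposition \ref{thelemmamultiper}, while taking the extra care needed to reach \emph{Borel} (and not merely universal) measurability of $\o^{t}\mapsto\alpha^{P}_{t}(\o^{t})$; this is the non-robust counterpart of \citep[Section 3]{BCR}. First, $sNA(\mathcal{Q}^{T})$ means by definition that $NA(P)$ holds for every $P\in\mathcal{Q}^{T}$, so $NA(P)$ holds. Since $q_{t+1}$ is analytically-measurable, hence $\mathcal{B}_{P_{t}}(\O^{t})$-measurable, and $\mathfrak{P}(\O_{t+1})$ is Polish, I would replace $q_{t+1}$ by a genuinely Borel stochastic kernel $\widetilde{q}_{t+1}$ coinciding with $q_{t+1}$ on a Borel set $\O^{t}_{1}$ of full $P_{t}$-measure. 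By the classical equivalence between the multi-period no-arbitrage condition and its one-period conditional versions (see \citep{JS98} and \citep{RS05}) there is a Borel set $\O^{t}_{2}$ of full $P_{t}$-measure on which the one-period model $\Delta S_{t+1}(\o^{t},\cdot)$ under $q_{t+1}(\cdot,\o^{t})$ has no arbitrage. Put $\O^{t}_{P}:=\O^{t}_{1}\cap\O^{t}_{2}$. Arguing exactly as for \eqref{NPome2}, with \citep[Lemma 2.6]{Nutz} and a slight modification of \citep[Lemma 3.5]{BCR}, I obtain that on $\O^{t}_{P}$ the set $D^{t+1}_{P}(\o^{t})$ is a linear subspace and that, for $h\in D^{t+1}_{P}(\o^{t})$, $h\Delta S_{t+1}(\o^{t},\cdot)\geq 0$ $\widetilde{q}_{t+1}(\cdot,\o^{t})$-a.s.\ forces $h=0$.

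I would then define $\alpha^{P}_{t}(\o^{t})$, for $\o^{t}\in\O^{t}_{P}$, to be the largest $\alpha\in[0,1]$ for which $\widetilde{q}_{t+1}(h\Delta S_{t+1}(\o^{t},\cdot)\leq-\alpha|h|,\o^{t})\geq\alpha$ holds for every $h\in D^{t+1}_{P}(\o^{t})$ (a routine continuity-of-measure argument shows the admissible $\alpha$'s form a closed subinterval of $[0,1]$, so the supremum is attained), and $\alpha^{P}_{t}:=1$ off $\O^{t}_{P}$. Positivity on $\O^{t}_{P}$ comes from the subsequence-and-Fatou argument of Proposition \ref{thelemmamultiper} run with the single measure $\widetilde{q}_{t+1}(\cdot,\o^{t})$ in place of the family $\mathcal{Q}_{t+1}(\o^{t})$: if $\alpha^{P}_{t}(\o^{t})=0$ with $D^{t+1}_{P}(\o^{t})\neq\{0\}$, then for each $n$ one finds $h_{n}\in D^{t+1}_{P}(\o^{t})$, $|h_{n}|=1$, with $\widetilde{q}_{t+1}(h_{n}\Delta S_{t+1}(\o^{t},\cdot)\leq-1/n,\o^{t})\leq 1/n$; a convergent subsequence produces $h^{*}\in D^{t+1}_{P}(\o^{t})$, $|h^{*}|=1$, with $\widetilde{q}_{t+1}(h^{*}\Delta S_{t+1}(\o^{t},\cdot)\geq 0,\o^{t})=1$ by Fatou, contradicting the last paragraph, while $D^{t+1}_{P}(\o^{t})=\{0\}$ gives $\alpha^{P}_{t}(\o^{t})=1$. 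Since $\widetilde{q}_{t+1}=q_{t+1}$ on $\O^{t}_{P}$, this yields the inequality stated in the proposition, with $\alpha^{P}_{t}\in(0,1]$.

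The delicate point — and the reason the analytically-measurable kernel $q_{t+1}$ had to be swapped for the Borel kernel $\widetilde{q}_{t+1}$ — is the Borel measurability of $\o^{t}\mapsto\alpha^{P}_{t}(\o^{t})$. Writing $g_{c}(\o^{t},h):=\widetilde{q}_{t+1}(h\Delta S_{t+1}(\o^{t},\cdot)\leq-c,\o^{t})$, Borel-adaptedness of $S$ (Assumption \ref{Sass}) makes $(\o^{t},\o_{t+1},h)\mapsto 1_{\{h\Delta S_{t+1}\leq-c\}}$ Borel, so $g_{c}$ is Borel in $(\o^{t},h)$ by \citep[Proposition 7.29 p144]{BS}, while reverse Fatou makes $h\mapsto g_{c}(\o^{t},h)$ upper-semicontinuous. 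Moreover $D^{t+1}_{P}$ is a closed-valued Borel-measurable random set (Lemma \ref{Dmeasurability} applied with the Borel kernel $\widetilde{q}_{t+1}$ gives a graph in $\mathcal{B}(\O^{t})\otimes\mathcal{B}(\mathbb{R}^{d})$), hence it admits a Castaing representation by countably many Borel selectors $(d_{j})_{j\geq 1}$; on the Borel set $\{D^{t+1}_{P}=\{0\}\}=\bigcap_{j}\{d_{j}=0\}$ the relevant quantifier is vacuous, and on its complement the normalisations $d_{j}/|d_{j}|$ (where defined) are dense in $D^{t+1}_{P}(\o^{t})\cap S^{d-1}$, $S^{d-1}:=\{h\in\mathbb{R}^{d}:|h|=1\}$. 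Since $\alpha^{P}_{t}(\o^{t})=\sup\{\alpha\in[0,1]:g_{\alpha}(\o^{t},h)\geq\alpha\ \text{for all }h\in D^{t+1}_{P}(\o^{t})\cap S^{d-1}\}$ on $\O^{t}_{P}$, I would write, for $0<c\leq 1$, $\{\alpha^{P}_{t}\geq c\}=(\O^{t}\setminus\O^{t}_{P})\cup(\O^{t}_{P}\cap\bigcap_{c'\in\mathbb{Q}\cap(0,c)}E_{c'})$ with $E_{c'}:=\{\o^{t}:g_{c'}(\o^{t},h)\geq c'\ \forall h\in D^{t+1}_{P}(\o^{t})\cap S^{d-1}\}$, and use upper-semicontinuity of $g_{c'}(\o^{t},\cdot)$ together with density of the normalised selectors to rewrite the quantifier over all $h$ as the countable quantifier over the $d_{j}$, i.e. $E_{c'}=\bigcap_{j}\big(\{d_{j}=0\}\cup(\{d_{j}\neq 0\}\cap\{g_{c'}(\cdot,d_{j}/|d_{j}|)\geq c'\})\big)$, which is Borel. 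Hence $\alpha^{P}_{t}$ is $\mathcal{B}(\O^{t})$-measurable. This Castaing-plus-semicontinuity device is precisely what spares me invoking a projection theorem for Borel sets with $\sigma$-compact sections; carrying the three steps out in full — chiefly this last one — completes the proof.
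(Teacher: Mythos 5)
Your strategy is exactly the ``careful adaptation of \citep[Proposition 3.7]{BCR}'' that the paper invokes: fix $P$, trade the analytically-measurable kernel $q_{t+1}$ for a Borel version $\widetilde{q}_{t+1}$ agreeing with it on a Borel set of full $P_{t}$-measure, run the one-period argument of Proposition \ref{thelemmamultiper} fibrewise for this single kernel (linearity of $D^{t+1}_{P}$, the implication $h\Delta S_{t+1}(\omega^{t},\cdot)\geq 0$ a.s.\ $\Rightarrow h=0$ for $h\in D^{t+1}_{P}(\omega^{t})$, the closed-interval definition of $\alpha^{P}_{t}$, positivity by compactness of the unit sphere plus Fatou), and then work for Borel rather than universal measurability. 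All of that is fine, including the passage to a Borel full-measure subset of the a priori only coanalytic set where the conditional one-period $NA$ holds, the joint Borel measurability of $g_{c}$ via \citep[Proposition 7.29 p144]{BS}, the upper semicontinuity of $h\mapsto g_{c}(\omega^{t},h)$ by reverse Fatou, and the countable rewriting of $E_{c'}$ once a suitable countable dense family in $D^{t+1}_{P}(\omega^{t})$ is available.

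The genuine gap is precisely at the step you single out as the crux: you infer a Castaing representation of $D^{t+1}_{P}$ by \emph{Borel} selectors from the assertion that its graph lies in $\mathcal{B}(\Omega^{t})\otimes\mathcal{B}(\mathbb{R}^{d})$. First, Lemma \ref{Dmeasurability} cannot simply be ``applied with the Borel kernel'': its proof goes through \citep[Theorem 14.8 p648]{rw}, which operates under the universal completion, and the lemma only delivers $\mathcal{B}_{c}(\Omega^{t})\otimes\mathcal{B}(\mathbb{R}^{d})$. Second, and more importantly, even granting a Borel graph, the Kuratowski--Ryll-Nardzewski/Castaing theorem requires weak measurability, i.e.\ $\{\omega^{t}:\,D^{t+1}_{P}(\omega^{t})\cap O\neq\emptyset\}\in\mathcal{B}(\Omega^{t})$ for every open $O$; from graph-measurability this set is only a projection of a Borel set, hence analytic, hence universally measurable --- which is exactly the completion problem this paper is built around, and it would only give you universally measurable selectors and hence a universally measurable $\alpha^{P}_{t}$. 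The repair uses the specific structure of $D^{t+1}_{P}$: since $\widetilde{q}_{t+1}$ is a Borel kernel and $S$ is Borel-adapted (Assumption \ref{Sass}), the push-forward $\omega^{t}\mapsto\mu_{\omega^{t}}:=\widetilde{q}_{t+1}(\cdot,\omega^{t})\circ(\Delta S_{t+1}(\omega^{t},\cdot))^{-1}$ is a Borel map into $\mathfrak{P}(\mathbb{R}^{d})$, and for every open $O\subset\mathbb{R}^{d}$ one has $\{\omega^{t}:\,\mathrm{supp}\,\mu_{\omega^{t}}\cap O\neq\emptyset\}=\{\omega^{t}:\,\mu_{\omega^{t}}(O)>0\}\in\mathcal{B}(\Omega^{t})$; thus the closed-valued support map is Borel-weakly measurable and admits Borel Castaing selectors $(s_{i})_{i}$, and the (automatically closed, finite-dimensional) affine hull $D^{t+1}_{P}(\omega^{t})$ then has the countable dense family given by rational affine combinations of the $s_{i}$, which are again Borel. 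Substituting this family for your $(d_{j})_{j}$, the rest of your measurability argument --- density of the normalised selectors in $D^{t+1}_{P}(\omega^{t})\cap S^{d-1}$ on the set where $D^{t+1}_{P}$ is a linear space, the identity for $E_{c'}$, and the identity for $\{\alpha^{P}_{t}\geq c\}$ --- goes through and completes the proof.
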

\begin{proof}
This is a careful adaptation of  \citep[Proposition 3.7]{BCR} since $\mathcal{B}_{c}(\O^{t})$ is not a product sigma-algebra.
\end{proof}

\section{Utility maximisation problem}\label{se3}
\begin{assumption}\label{Uminus}
For all $r \in \mathbb{Q}$, $r >0$
$\sup_{P \in \mathcal{Q}^{T}} E_{P} U^{-}(\cdot,r) <+\infty.$
\end{assumption}
The proof of the following lemma follows directly from \citep[Theorem 10.1 p82]{cvx}.
\begin{lemma}
\label{Ucontinuity}
Assume that Assumption \ref{Uminus} holds true.  Then
$\Omega^{T}_{Dom}:=\{U(\cdot,r)>-\infty, \forall r \in \mathbb{Q}, \; r>0\} \in \Bc (\Omega^{T})$ is a $\mathcal{Q}^{T}$-full measure set. For all $\o^{T} \in \Omega^{T}_{Dom}$,  $\mbox{Ri}(\mbox{Dom}\, U(\o^{T},\cdot))=(0,\infty)$ and $U(\o^{T},\cdot)$  is continuous on $(0,\infty)$, right-continuous in $0$ and thus usc on $\mathbb{R}$.
\end{lemma}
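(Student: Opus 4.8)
The plan is to dispatch the three assertions of the lemma separately: the Borel measurability of $\Omega^{T}_{Dom}$, the fact that it carries $\mathcal{Q}^{T}$-full measure, and the pointwise regularity of $U(\o^{T},\cdot)$ for $\o^{T}$ in this set.

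First I would settle measurability. By Definition \ref{utilite}, for each fixed $x$ the map $U(\cdot,x)$ is $\mathcal{B}(\Omega^{T})$-measurable, so for every rational $r>0$ the set $\{U(\cdot,r)>-\infty\}$ lies in $\mathcal{B}(\Omega^{T})$. Since $\Omega^{T}_{Dom}=\bigcap_{r\in\mathbb{Q},\,r>0}\{U(\cdot,r)>-\infty\}$ is a countable intersection of such sets, $\Omega^{T}_{Dom}\in\mathcal{B}(\Omega^{T})$. Next, for the full-measure claim, fix $r\in\mathbb{Q}$, $r>0$ and $P\in\mathcal{Q}^{T}$. Assumption \ref{Uminus} gives $E_{P}U^{-}(\cdot,r)\leq\sup_{Q\in\mathcal{Q}^{T}}E_{Q}U^{-}(\cdot,r)<\infty$, hence $U^{-}(\cdot,r)<\infty$ $P$-a.s., that is $P(\{U(\cdot,r)=-\infty\})=0$. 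Taking the countable union over rational $r>0$, the Borel set $\Omega^{T}\setminus\Omega^{T}_{Dom}=\bigcup_{r\in\mathbb{Q},\,r>0}\{U(\cdot,r)=-\infty\}$ is $P$-null for every $P\in\mathcal{Q}^{T}$; since $\mathcal{B}(\Omega^{T})\subset\mathcal{B}_{c}(\Omega^{T})$, it is $\mathcal{Q}^{T}$-polar, i.e. $\Omega^{T}_{Dom}$ is of $\mathcal{Q}^{T}$-full measure.

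Finally I would fix $\o^{T}\in\Omega^{T}_{Dom}$ and read off the regularity. Since $U(\o^{T},\cdot)$ is non-decreasing on $(0,\infty)$ and $U(\o^{T},r)>-\infty$ for all rational $r>0$, for any $x>0$ one may choose a rational $r\in(0,x)$ so that $U(\o^{T},x)\geq U(\o^{T},r)>-\infty$; combined with properness (which forces $U(\o^{T},x)<+\infty$ for all $x>0$) and the convention $U(\o^{T},x)=-\infty$ for $x<0$, this shows $\mbox{Dom}\,U(\o^{T},\cdot)$ equals $(0,\infty)$ or $[0,\infty)$, so in either case $\mbox{Ri}(\mbox{Dom}\,U(\o^{T},\cdot))=(0,\infty)$. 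Applying \citep[Theorem 10.1 p82]{cvx} to the proper convex function $-U(\o^{T},\cdot)$ yields continuity of $U(\o^{T},\cdot)$ on $(0,\infty)$. Right-continuity at $0$ holds by construction, as $U(\o^{T},0)$ is defined to be the limit $\lim_{x\downarrow 0}U(\o^{T},x)$, which exists in $[-\infty,\infty)$ by monotonicity. Upper semicontinuity on $\mathbb{R}$ then follows: on $(0,\infty)$ from continuity, at $0$ from right-continuity together with $U(\o^{T},x)=-\infty$ for $x<0$, and on $(-\infty,0)$ trivially since $U(\o^{T},\cdot)\equiv-\infty$ there.

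I do not expect a genuine obstacle here: the argument is a bookkeeping assembly of the measurability/properness/monotonicity built into Definition \ref{utilite}, the integrability Assumption \ref{Uminus}, and the quoted convex-analysis fact. The only points demanding a little care are verifying that the relative interior of the domain is exactly $(0,\infty)$ irrespective of whether $0$ itself belongs to the domain, and checking upper semicontinuity at the boundary point $0$.
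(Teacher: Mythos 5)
Your proof is correct and follows essentially the same route as the paper, which simply notes that the lemma "follows directly from" Rockafellar's Theorem 10.1: the measurability and full-measure statements are the routine consequences of Definition \ref{utilite} and Assumption \ref{Uminus} that you spell out, and the regularity on $(0,\infty)$, right-continuity at $0$ and upper semicontinuity are exactly the convex-analysis bookkeeping the paper leaves implicit.
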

\begin{remark}
\label{Uminusdet}
Assumption \ref{Uminus}, which does not appear in the mono-prior case (see \citep{BCR}), allows to work with countable supremum (see \eqref{vanek}) and to have value functions with ``good" measurability properties (see also Remark \ref{Vminrem}). We  will prove (see Proposition \ref{ToolJC}) that Assumption \ref{Uminus} is preserved through the dynamic programming procedure. Assumption \ref{Uminus} is  superfluous in the case of non-random utility function.   Indeed let
$m:=\inf\{x \in \mathbb{R},\; U(x) >-\infty\} \geq 0$ and $\overline{U}(x) = U(x+m)$. Then $\mbox{Ri}(\mbox{Dom}\, \overline{U}(\cdot))=(0,\infty)$, $\overline{U}$ satisfies Definition \ref{utilite} and if $\overline{\phi}^{*}$ is a solution of \eqref{eq:OP} for $\overline{U}$ with an initial wealth $x$, then it will be a solution of  \eqref{eq:OP} for $U$ starting from $x+m$. Assumption \ref{Uminus} is also useless in the one-period case.
\end{remark}
\begin{example}
 We propose the following example where Assumption \ref{Uminus} holds true. Assume that there exists some $x_0>0$ such that $\sup_{P \in \mathcal{Q}^{T}} E_{P}U^{-}(\cdot,x_0)<\infty$. Assume also that there exists  some functions $f_1, f_{2}: (0,1] \to (0,\infty)$  as well as some non-negative $\mathcal{B}_{c}(\O^{T})$-measurable random variable  $D$ verifying  $\sup_{P \in \mathcal{Q}^{T}} E_{P} D(\cdot)<\infty$ such that for all $\omega^{T} \in \Omega^{T}$,  $x\geq 0$,  $0 < \l \leq 1$,  $U(\omega^{T},\lambda x) \geq f_{1}(\l)   U(\omega^{T},x) -f_{2}(\l) D(\o^{T})$.
This condition is a kind of elasticity assumption around zero. It is satisfied for example by the logarithm function.   Fix some $r \in \mathbb{Q}$, $r>0$. If $r \geq x_0$, it is clear from Definition \ref{utilite}  that $\sup_{P \in \mathcal{Q}^{T}} E_{P}U^{-}(\cdot,r)<\infty$. If $r <x_0$,  we have for all $\omega^{T} \in \Omega^{T}$, $ U(\omega^{T},r) \geq f_{1}(\frac{r}{x_0})U(\omega^{T},x_0)-f_{2}(\frac{r}{x_0})D(\o^{T})$ and  $\sup_{P \in \mathcal{Q}^{T}} E_{P}U^{-}(\cdot,r)<\infty$ follows immediately.
\end{example}
The following condition (together with Assumption \ref{Uminus}) implies that if  $\phi \in \Phi(x, \mathcal{Q}^{T})$ then  $E_{P} U(\cdot,V^{x,\phi}_{T}(\cdot))$ is well defined for all $P \in \mathcal{Q}^{T}$ (see Proposition \ref{propufini}). It also allows us  to work with  auxiliary  functions which play the role of properly integrable bounds for the value functions at each step (see \eqref{Vvanek}, \eqref{reims2}, \eqref{reims3} and \eqref{dimancheR}).
\begin{assumption}\label{Uplus}
We assume that
$\sup_{P \in \mathcal{Q}^{T}} \sup_{\phi \in \Phi(1,P)} E_{P}U^+(\cdot,V_{T}^{1,\phi}(\cdot))<\infty.$
\end{assumption}

Assumption \ref{Uplus} is not easy to verify~: we  propose an application of  Theorem \ref{main} in the following  fairly general set-up where Assumption \ref{Uplus} is automatically satisfied. We introduce for all $1 \leq t \leq T$, $r>0$,\\
 $$\mathcal{W}^{r}_{t}:= \left \{ X: \Omega^{t} \to \mathbb{R}\cup \{\pm \infty\}, \; \mbox{$\mathcal{B}(\O^{t})$-measurable},\; \sup_{P \in \mathcal{Q}^{t}}E_{P} |X|^{r} <\infty \right\}\; \mbox{and}
\; \mathcal{W}_{t}:= \bigcap_{r>0} \mathcal{W}^{r}_{t}.$$
In \citep[Proposition 14]{DHP11} it is proved  that  $\mathcal{W}^{r}_{t}$ is a Banach space (up to the usual quotient identifying two random variables that are $\mathcal{Q}^{t}$-q.s. equal) for the norm $||X||:=\left(\sup_{P \in \mathcal{Q}^{t}}E_{P} |X|^{r}\right)^{\frac{1}{r}}$.  Hence, the space $\mathcal{W}_{t}$ is the ``natural" extension of the one introduced in the mono-prior classical  case (see \citep{CR14} or \citep[(16)]{BCR}).
\begin{theorem}
\label{main2}
Assume that the $sNA(\mathcal{Q}^{T})$ condition and  Assumptions \ref{Qanalytic}, \ref{Sass}, \ref{Sass2} and \ref{Uminus} hold true. Assume furthermore that $U^{+}(\cdot,1), U^{-}(\cdot,\frac{1}{4}) \in \mathcal{W}_{T}$  and that  for all $1 \leq t \leq T$, $P \in \mathcal{Q}^{t}$, $\Delta S_{t}, \frac{1}{\alpha^{P}_{t}} \in \mathcal{W}_{t}$ (recall Proposition \ref{AOAmultistrong} for the definition of $\alpha_{t}^{P}$).   Let $x\geq 0$. Then,
there exists some optimal strategy $\phi^* \in \Phi(x,U, \mathcal{Q}^{T})$ such that
$$u(x) =  \inf_{P \in \mathcal{Q}^{T}} E_{P} U(\cdot,V^{x,\phi^*}_{T}(\cdot))<\infty.$$\end{theorem}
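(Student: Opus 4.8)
The plan is to deduce Theorem \ref{main2} from Theorem \ref{main}, so the entire task reduces to verifying that the hypotheses of Theorem \ref{main} hold under the stronger assumptions made here. Since $sNA(\mathcal{Q}^T)$ is stronger than $NA(\mathcal{Q}^T)$ (as noted right after Definition \ref{sNA}), and Assumptions \ref{Qanalytic}, \ref{Sass}, \ref{Sass2}, \ref{Uminus} are assumed outright, the only thing that needs proof is Assumption \ref{Uplus}, namely
$$\sup_{P \in \mathcal{Q}^{T}} \sup_{\phi \in \Phi(1,P)} E_{P}U^+(\cdot,V_{T}^{1,\phi}(\cdot))<\infty.$$
First I would fix $P = P_1 \otimes q_2 \otimes \cdots \otimes q_T \in \mathcal{Q}^T$ and $\phi \in \Phi(1,P)$. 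Using Remark \ref{proj} / \citep[Lemma 2.6]{Nutz} one may replace each $\phi_{t+1}(\o^t,\cdot)$ by its projection onto $D_P^{t+1}(\o^t)$ without changing the portfolio value $P$-a.s., so I may assume $\phi_{t+1}(\o^t) \in D_P^{t+1}(\o^t)$ for $P_t$-almost every $\o^t$. The budget constraint $V_T^{1,\phi}\ge 0$ $P$-a.s. together with $NA(P)$ (a consequence of $sNA(\mathcal{Q}^T)$) and Lemma \ref{AOAT} gives the intertemporal constraint $V_t^{1,\phi}\ge 0$ $P_t$-a.s. for every $t$.

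The heart of the argument is a backward induction that bounds $|\phi_{t+1}|$ by $V_t^{1,\phi}$ times something in $\mathcal{W}_t$, using the quantitative no-arbitrage estimate of Proposition \ref{AOAmultistrong}. From $q_{t+1}\big(\phi_{t+1}(\o^t)\Delta S_{t+1}(\o^t,\cdot)\le -\alpha^P_t(\o^t)|\phi_{t+1}(\o^t)|,\o^t\big)\ge \alpha^P_t(\o^t)$, the nonnegativity $V_{t+1}^{1,\phi}\ge 0$ forces, on the event of positive $q_{t+1}$-probability where $\phi_{t+1}\Delta S_{t+1}\le -\alpha^P_t|\phi_{t+1}|$, the inequality $\alpha^P_t(\o^t)|\phi_{t+1}(\o^t)| \le V_t^{1,\phi}(\o^t)$, i.e. $|\phi_{t+1}(\o^t)| \le V_t^{1,\phi}(\o^t)/\alpha^P_t(\o^t)$ $P_t$-a.s. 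Hence $V_{t+1}^{1,\phi}\le V_t^{1,\phi}(1 + |\Delta S_{t+1}|/\alpha^P_t)$, and iterating,
$$0\le V_T^{1,\phi} \le \prod_{t=1}^{T}\Big(1 + \frac{|\Delta S_t|}{\alpha^P_{t-1}}\Big) =: \Gamma_P,$$
with the obvious modification at $t=1$. Since $\Delta S_t, 1/\alpha^P_t \in \mathcal{W}_t$ for all $t$ and all $P$, and $\mathcal{W}_t^r$ is a Banach space stable under the operations used (products via Hölder across the various $\mathcal{W}_t^r$, and the nesting $\mathcal{W}_t\subset\mathcal{W}_s$ for $s\le t$ after pulling back), one gets $\Gamma_P \in \mathcal{W}_T$ with a norm bound that is \emph{uniform in $P$}; more precisely $\sup_{P\in\mathcal{Q}^T} E_P |\Gamma_P|^r$ is controlled by the $\mathcal{W}$-norms of $\Delta S_t$ and $1/\alpha^P_t$, which are assumed finite and hence admit a uniform bound (this is where one must be careful, since $\alpha^P_t$ depends on $P$ — one invokes that $\sup_{P\in\mathcal{Q}^t}E_P|1/\alpha^P_t|^r<\infty$ is part of the hypothesis $1/\alpha^P_t\in\mathcal{W}_t$ read uniformly, as in \citep{BCR}).

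Finally I would convert this into the bound on $U^+$. By concavity and monotonicity of $U(\o^T,\cdot)$ on $(0,\infty)$, for any $y\ge 1$ one has $U(\o^T,y) \le U(\o^T,1) + y\,\big(U(\o^T,1)-U(\o^T,\tfrac14)\big)/\tfrac34$ — i.e. a bound of the form $U(\o^T,y)\le U(\o^T,1) + C\,y\,(U^+(\o^T,1)+U^-(\o^T,\tfrac14))$ using the supporting line at an interior point; together with $U(\o^T,y)\le U^+(\o^T,1)$ for $0\le y\le 1$, this gives $U^+(\o^T,V_T^{1,\phi})\le U^+(\o^T,1) + C\,\Gamma_P\,(U^+(\o^T,1)+U^-(\o^T,\tfrac14))$. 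Taking $E_P$, using Hölder (since $U^+(\cdot,1), U^-(\cdot,\tfrac14)\in\mathcal{W}_T$ and $\Gamma_P\in\mathcal{W}_T$ uniformly), and then $\sup_{P\in\mathcal{Q}^T}$ yields Assumption \ref{Uplus}. Then Theorem \ref{main} applies verbatim and delivers the optimal $\phi^*$ with $u(x)<\infty$.

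The main obstacle I anticipate is the uniformity in $P$ of the estimate on $\Gamma_P$: the quantitative constants $\alpha^P_t$ are genuinely $P$-dependent and only $\mathcal{B}(\O^t)$-measurable for each fixed $P$, so one cannot simply glue them into a single process; one has to carry the supremum over $\mathcal{Q}^T$ through every step of the induction and every Hölder splitting, keeping track that each factor's $\mathcal{W}$-norm is bounded by a quantity assumed finite and $P$-uniform. The measurability bookkeeping for the projected strategy and the application of Lemma \ref{AOAT} along the disintegration $P = P_1\otimes q_2\otimes\cdots\otimes q_T$ is the other technical point, but it follows the template of \citep[Proposition 3.7 and its use]{BCR}.
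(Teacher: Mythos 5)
Your proposal is correct and follows essentially the same route as the paper: reduce Theorem \ref{main2} to verifying Assumption \ref{Uplus}, bound $V_T^{1,\phi}$ by $\prod_{t}\bigl(1+|\Delta S_t|/\alpha^P_{t-1}\bigr)$ via the quantitative $NA(P)$ estimate of Proposition \ref{AOAmultistrong} (the paper cites the argument of \citep[Theorem 4.17]{BCR}, which is exactly your induction), and then control $U^+$ through the concavity/elasticity bound anchored at $\tfrac14$ and $1$ (your chord inequality is precisely what Proposition \ref{ae} encodes, with $C_T=U^-(\cdot,\tfrac14)$), before taking $\sup_P$ with the $\mathcal{W}_t$ moment hypotheses. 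Your flagged concern about uniformity in $P$ of the $1/\alpha^P_t$ moments is the same reading of the hypothesis that the paper itself uses when asserting $N<\infty$, so no genuinely different ideas are involved.
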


\subsection{\textbf{One period case}}
\label{seone}
Let $(\overline{\Omega}, \Gc)$ be a measurable space, $\mathfrak{P}(\overline{\Omega})$ the set of all probability measures on $\overline{\Omega}$ defined on $\mathcal{G}$  and $\mathcal{Q}$  a non-empty convex subset of $\mathfrak{P}(\overline{\Omega})$.  Let $Y(\cdot):=\left(Y_{1}(\cdot), \cdots, Y_{d}(\cdot)\right)$ be a $\Gc$-measurable $\mathbb{R}^{d}$-valued random variable (which could represent the change of value of the price process).
\begin{assumption}
\label{Yb}
There exists some constant $0<b <\infty $ such that  $Y_{i}(\cdot) \geq -b$ for all $i=1, \cdots, d$.
\end{assumption}
Finally, as in Definition \ref{DefD}, $D \subset \mathbb{R}^d$ is
the smallest affine subspace of $\mathbb{R}^d$ containing  the support of the
distribution of $Y(\cdot)$ under $P$ for all $P \in \mathcal{Q}$.
\begin{assumption}
\label{D0}
The set $D$ contains 0 ($D$ is a non-empty vector subspace of $\mathbb{R}^d$).
\end{assumption}
The pendant of the $NA(\mathcal{Q}^{T})$  condition in the one-period model  is given by
\begin{assumption}
\label{AOAone}
There exists some constant $0<\alpha\leq 1$
such that for all  $h \in D$ there exists $P_{h} \in \mathcal{Q}$ satisfying $P_{h}( h Y(\cdot) \leq -\alpha |h|) \geq \alpha$.
\end{assumption}
\begin{remark}
\label{proj}
Let $h \in  \mathbb{R}^d$ and $h' \in  \mathbb{R}^d$  be the orthogonal
projection of $h$ on $D$. Then $h-h'\perp D$  hence
$$\{Y(\cdot)\in D\}\subset\{(h-h')Y(\cdot)=0\}=\{hY(\cdot)=h' Y(\cdot)\}.$$
By   definition of $D$ we have $P(Y(\cdot)\in D)=1$ for all  $P \in \mathcal{Q}$ and therefore $hY=h'Y$ $\mathcal{Q}$-q.s.
\end{remark}
For $x \geq 0$ and $a \geq 0$ we define
\begin{align}
\label{HxDx}
\Hc_{x}^{a} :=
\left\{h \in \mathbb{R}^{d}, \; x+ h Y\geq a \; \mathcal{Q}\mbox{-q.s.}\right\} \mbox{ and }
D_{x}:= \Hc_{x} \cap D, \mbox{ where } \Hc_{x}:= \Hc_{x}^{0}.
\end{align}
\begin{lemma}
\label{rast}
Assume that Assumption \ref{AOAone} holds true. Then for all $x \geq 0$, $D_{x} \subset B(0,\frac{x}{\alpha})$  where $B(0,\frac{x}{\alpha})=
\{h \in \mathbb{R}^{d}, \ |h| \leq \frac{x}{\alpha}\}$ and $D_{x}$  is a convex and compact subspace of $\mathbb{R}^d$ .
\end{lemma}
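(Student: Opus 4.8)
The plan is to first show the norm bound $D_x \subset B(0, x/\alpha)$ and then deduce convexity and compactness. For the bound, I would argue by contradiction: suppose $h \in D_x$ with $|h| > x/\alpha$. Since $h \in D_x = \Hc_x \cap D$, we have $h \in D$, so Assumption \ref{AOAone} provides some $P_h \in \mathcal{Q}$ with $P_h(hY(\cdot) \leq -\alpha|h|) \geq \alpha > 0$. In particular the event $\{hY(\cdot) \leq -\alpha|h|\}$ has positive $P_h$-measure, hence is non-empty, so there is a point where $hY \leq -\alpha|h| < -x$ (using $|h| > x/\alpha$), i.e. $x + hY < 0$ at that point. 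But $h \in \Hc_x$ means $x + hY \geq 0$ $\mathcal{Q}$-q.s., in particular $P_h(x+hY \geq 0) = 1$; combined with $P_h(x+hY < 0) \geq \alpha > 0$ this is a contradiction. Hence $|h| \leq x/\alpha$ for every $h \in D_x$, which gives $D_x \subset B(0, x/\alpha)$.

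Next I would establish convexity. The set $D$ is an affine (indeed, by Assumption \ref{D0}, linear) subspace, hence convex. The set $\Hc_x = \Hc_x^0 = \{h \in \mathbb{R}^d : x + hY \geq 0 \ \mathcal{Q}\text{-q.s.}\}$ is convex as well: if $h_1, h_2 \in \Hc_x$ and $t \in [0,1]$, then for each $P \in \mathcal{Q}$ we have $x + h_i Y \geq 0$ $P$-a.s. for $i=1,2$, so outside a $P$-null set $x + (th_1 + (1-t)h_2)Y = t(x + h_1 Y) + (1-t)(x + h_2 Y) \geq 0$; since this holds for every $P \in \mathcal{Q}$, it holds $\mathcal{Q}$-q.s. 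Therefore $D_x = \Hc_x \cap D$ is an intersection of two convex sets and is convex.

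Finally, for compactness it suffices (by the norm bound, $D_x$ is bounded) to check that $D_x$ is closed. Here the key point is that $\mathcal{Q}$-q.s. inequalities pass to limits along sequences in $\mathbb{R}^d$, using that for each fixed $P \in \mathcal{Q}$ convergence $h_n \to h$ in $\mathbb{R}^d$ gives $x + h_n Y \to x + hY$ pointwise, so $x + hY \geq 0$ $P$-a.s. whenever $x + h_n Y \geq 0$ $P$-a.s. for all $n$; taking the union over the (possibly uncountably many) $P$-null sets is harmless because the statement "$x + hY \geq 0$ $\mathcal{Q}$-q.s." is itself a conjunction over $P \in \mathcal{Q}$. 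Thus $\Hc_x$ is closed, $D$ is closed (a linear subspace), so $D_x$ is closed, hence compact. The mild subtlety I would be careful about is precisely this interchange of a sequential limit with the quantifier over the non-dominated family $\mathcal{Q}$; it works because the limit is taken in the finite-dimensional parameter $h$ rather than in $\omega$, so no countability of $\mathcal{Q}$ is needed — one simply checks the defining q.s.\ condition model by model. I do not expect any real obstacle; the argument is a direct adaptation of the classical mono-prior case, with $P$-a.s.\ replaced by $\mathcal{Q}$-q.s.\ throughout.
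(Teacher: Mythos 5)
Your proof is correct and follows essentially the same route as the paper: the norm bound is obtained by the identical contradiction, $P_h(x+hY<0)\geq P_h(hY\leq -\alpha|h|)\geq\alpha>0$ versus $P_h(x+hY\geq 0)=1$ from $h\in\Hc_x$, and convexity and closedness (which the paper dismisses as clear) are checked model by model exactly as one would. The brief detour through non-emptiness of $\{hY\leq-\alpha|h|\}$ is superfluous but harmless, since you conclude with the correct probabilistic contradiction.
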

\begin{proof}
 For  $x \geq0$, the convexity and the closedness of  $D_x$ are clear.  Let   $h \in
{D}_x$ be fixed. Assume that $|h| > \frac{x}{\alpha}$, then from Assumption \ref{AOAone}, there exists $P_{h} \in \mathcal{Q}$ such that
$P_{h}(x+ hY(\cdot) <0) \geq P_{h}( h Y(\cdot) \leq -\alpha |h|) \geq \alpha>0$, a contradiction.  The compactness of $D_{x}$ follows immediately.
\end{proof}

\begin{assumption}
\label{samedi}
We consider a {function} $V:~\overline{\Omega} \times \mathbb{R} \rightarrow\mathbb{R}\cup \{\pm \infty\}$ such that for every $x\in \mathbb{R}$,  $V\left(\cdot,x\right):\overline{\Omega}\rightarrow\mathbb{R}  \cup \{\pm \infty\}$ is $\Gc$-measurable, for every $\omega\in \overline{\Omega}$,  $V\left(\o,\cdot\right):\mathbb{R}\rightarrow\mathbb{R}  \cup \{\pm \infty\}$ is non-decreasing, concave and usc,
and $V(\cdot,x)=-\infty$, for all $x <0$.
\end{assumption}
The reason for not excluding at this stage improper concave function is related to the multi-period case. Indeed if Assumption \ref{AOAone} is not verified, then $v$ (or $ v^\mathbb{Q}$, $\mbox{Cl}({v^\mathbb{Q}})$) might be equal to $+\infty$.  So in the multi-period part, finding a version of the value function that is proper for all $\o^{t}$ while preserving its measurability is challenging since $\O^{t}_{NA}$ (the set where Assumption  \ref{AOAone} holds true, see Proposition \ref{thelemmamultiper}) is only  universally-measurable.
So here we do not assume that  $V\left(\o,\cdot\right)$ is proper but we will prove in Theorem \ref{main1} that the associated value function is finite. We also assume that $V(\o,\cdot)$ is usc for all $\o$, see Remark \ref{uscoupas}.

\begin{assumption}
\label{vminus}For all $r  \in \mathbb{Q}$, $r>0$,
$\sup_{P \in \mathcal{Q}} E_{P}V^-\left(\cdot,r \right)<\infty$.
\end{assumption}
\begin{remark}
\label{Vminrem}
This assumption is essential to prove in Theorem \ref{main1} that \eqref{VQ} holds true as it allows to prove that $\mathbb{Q}^{d}$ is dense in $\mbox{Ri} \left(\{h \in \mathcal{H}_{x}, \; \inf_{P \in \mathcal{Q}} E V(\cdot,x+hY(\cdot)) > -\infty\}\right)$. Note that the one-period optimisation problem in  \eqref{parc} could be solved without Assumption \ref{vminus} (see Remark \ref{Uminusdet}).
\end{remark}
The following lemma is similar to Lemma \ref{Ucontinuity} (recall also
(see \citep[Lemma 7.12]{BCR}).
\begin{lemma}
\label{Vcontinuity}
Assume that Assumptions  \ref{samedi} and \ref{vminus} hold true. Then $
\Omega_{Dom}:=\{V(\cdot,r)>-\infty, \; \forall r \in \mathbb{Q}, \; r>0\} \in \Gc$ and  $\Omega_{Dom}$ is $\mathcal{Q}$-full measure set on which $\mbox{Ri}(Dom\, V(\o,\cdot))=(0,\infty)$ and thus $V(\o,\cdot)$  is continuous on $(0,\infty)$. Moreover $V(\o,\cdot)$  is right-continuous in $0$ for all $\o \in \overline{\O}$.
\end{lemma}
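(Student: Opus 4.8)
The plan is to mimic the proof of Lemma~\ref{Ucontinuity} (see also \citep[Lemma 7.12]{BCR}), the only new point being that $V(\o,\cdot)$ is not assumed proper in Assumption~\ref{samedi}.

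First I would settle the measurability and the full-measure assertions. Since $\mathbb{Q}\cap(0,\infty)$ is countable and each $V(\cdot,r)$ is $\Gc$-measurable by Assumption~\ref{samedi}, we get $\Omega_{Dom}=\bigcap_{r\in\mathbb{Q},\,r>0}\{V(\cdot,r)>-\infty\}\in\Gc$. For the full-measure claim, write $\overline\Omega\setminus\Omega_{Dom}=\bigcup_{r\in\mathbb{Q},\,r>0}\{V^-(\cdot,r)=+\infty\}$; for each fixed $P\in\mathcal{Q}$, Assumption~\ref{vminus} gives $E_PV^-(\cdot,r)<\infty$, hence $P(V^-(\cdot,r)=+\infty)=0$ for every rational $r>0$, and a countable union of $P$-null sets is $P$-null. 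Thus $\overline\Omega\setminus\Omega_{Dom}\in\Gc$ satisfies $P(\overline\Omega\setminus\Omega_{Dom})=0$ for all $P\in\mathcal{Q}$, so it is $\mathcal{Q}$-polar and $\Omega_{Dom}$ is of $\mathcal{Q}$-full measure.

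Next, fix $\o\in\Omega_{Dom}$. Since $V(\o,\cdot)$ is non-decreasing, for any $x>0$ choosing $r\in\mathbb{Q}$ with $0<r<x$ yields $V(\o,x)\ge V(\o,r)>-\infty$; combined with $V(\o,y)=-\infty$ for $y<0$ this gives $(0,\infty)\subseteq\mbox{Dom}\,V(\o,\cdot)\subseteq[0,\infty)$. As $\mbox{Dom}\,V(\o,\cdot)$ is a convex subset of $\mathbb{R}$ with affine hull $\mathbb{R}$, its relative interior coincides with its topological interior, which equals $(0,\infty)$ whether the domain is $(0,\infty)$ or $[0,\infty)$; hence $\mbox{Ri}(\mbox{Dom}\,V(\o,\cdot))=(0,\infty)$. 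For continuity on $(0,\infty)$, if $V(\o,\cdot)$ is proper I would invoke \citep[Theorem 10.1 p82]{cvx} (exactly as in Lemma~\ref{Ucontinuity}) applied to the concave function $V(\o,\cdot)$ to obtain continuity on its relative interior $(0,\infty)$; if $V(\o,\cdot)$ is improper, \citep[Corollary 7.2.1 p53]{cvx} forces $V(\o,\cdot)\equiv+\infty$ on $\mbox{Ri}(\mbox{Dom}\,V(\o,\cdot))=(0,\infty)$, which is again continuous there.

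Finally, for right-continuity at $0$ and for every $\o\in\overline\Omega$: monotonicity gives $\lim_{x\downarrow0}V(\o,x)=\inf_{x>0}V(\o,x)\ge V(\o,0)$, while the upper semicontinuity of $V(\o,\cdot)$ together with $V(\o,y)=-\infty$ for $y<0$ gives $\limsup_{x\downarrow0}V(\o,x)=\limsup_{x\to0}V(\o,x)\le V(\o,0)$; the two inequalities give $\lim_{x\downarrow0}V(\o,x)=V(\o,0)$. The only genuine subtlety relative to Lemma~\ref{Ucontinuity} is the lack of properness of $V(\o,\cdot)$, which is disposed of by \citep[Corollary 7.2.1]{cvx}; the remaining arguments are routine manipulations of monotone concave functions and of the definition of $\mathcal{Q}$-polar sets, so I do not anticipate any real obstacle.
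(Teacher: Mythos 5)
Your proof is correct and follows essentially the route the paper intends: the paper omits the argument, pointing to Lemma \ref{Ucontinuity} and \citep[Theorem 10.1 p82]{cvx}, and your write-up supplies exactly those steps (countable intersection for measurability, Assumption \ref{vminus} plus countable unions of null sets for the full-measure claim, $\mbox{Ri}(\mbox{Dom}\,V(\o,\cdot))=(0,\infty)$ by monotonicity, and Rockafellar for continuity). Your extra care for the improper case via \citep[Corollary 7.2.1 p53]{cvx} and the usc-plus-monotonicity argument for right-continuity at $0$ are precisely the points the paper itself flags (Remark \ref{uscoupas}, Assumption \ref{samedi}), so there is no genuine divergence.
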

Our main concern in the one period case is the following optimisation problem
\begin{align}\label{parc}
		v(x) :=
		\begin{cases}
		\sup_{h \in \Hc_{x} } \inf_{P \in \mathcal{Q}} E_{P}V\left(\cdot,x + hY(\cdot)\right), \; \mbox{ if $x \geq 0$}\\
		-\infty, \; \mbox{otherwise}.
		\end{cases}
	\end{align}
We use the convention $\infty-\infty=\infty$ (recall Remark \ref{verslinfinietaudela}), but we will see in Lemma \ref{fat} that under appropriate assumptions,  $E_{P}V(\cdot,x + hY(\cdot))$ is  well-defined.  Note  also that  for $x \geq 0$ (see Remark \ref{proj})
\begin{align}\label{noam}
v(x)  =  \sup_{h \in D_x}  \inf_{P \in \mathcal{Q}}  E_{P}V(\cdot,x + hY(\cdot)).
\end{align}
We present now some integrability assumptions on  $V^{+}$ which allow to assert that there exists some optimal solution for  \eqref{parc}.
\begin{assumption}
\label{dimanche}
For every $P \in \mathcal{Q}$, $h \in \Hc_{1}$,
$E_{P}V^+(\cdot,1+ h Y(\cdot))<\infty.$
\end{assumption}
\begin{remark}
If Assumption \ref{dimanche} is not true,  \citep[Example 2.3]{Nutz}  shows that  one can find a counterexample where $v(x)<\infty$ but the supremum is not attained in \eqref{parc}. So one cannot use   the  ``natural" extension of the mono-prior approach, which should be
that there exists some $P \in \mathcal{Q}$ such that $E_{P}V^+(\cdot,1+ h Y(\cdot))<\infty$ for all $h \in \Hc_{1}$ (see \citep[Assumption 5.9]{BCR}).
\end{remark}

We define now
\begin{align}\label{parcQ}
		v^{\mathbb{Q}}(x) :=
		\begin{cases}
		\sup_{h \in \Hc_{x} \cap \mathbb{Q}^{d}} \inf_{P \in \mathcal{Q}} E_{P}V\left(\cdot,x + hY(\cdot)\right), \; \mbox{ if $x \geq 0$}\\
		-\infty, \; \mbox{otherwise}.
		\end{cases}
	\end{align}
Finally, we  introduce  the closure of $v^{\mathbb{Q}}$ denoted by  $\mbox{Cl}(v^{\mathbb{Q}})$ which is  the smallest usc  function $w: \mathbb{R} \to \mathbb{R} \cup \{ \pm \infty\}$ such that $w \geq v^{\mathbb{Q}}$.
We will show in Theorem \ref{main1} that $ v(x)=v^{\mathbb{Q}}(x)=\mbox{Cl}({v}^{\mathbb{Q}})(x)$, which allows in the multiperiod case (see \eqref{vanek}) to work with a countable supremum (for measurability issues)  and an usc value function (see Remark \ref{uscoupas}).  But first we provide two lemmata which  are stated  under Assumption \ref{samedi} only. They will be used in the multi-period part to prove that the value function is usc, concave (see \eqref{amiens} and  \eqref{amiens2}) and dominated (see \eqref{reims3}) for  all $\o^{t}$. This avoid difficult measurability issues when proving \eqref{reims} and \eqref{reims2} coming from full-measure sets which are not Borel and on which
Assumptions  \ref{D0},  \ref{AOAone}, \ref{vminus} and \ref{dimanche} hold true. This can be seen for example in the beginning of the proof of Proposition \ref{dyn3} where we need to apply Lemma \ref{concaveandusc} using only Assumption \ref{samedi}.
\begin{lemma}
\label{concaveandusc}
Assume that Assumption \ref{samedi} holds true. Then $v$, $v^\mathbb{Q}$ and $\mbox{Cl}({v^\mathbb{Q}})$ are concave and non-decreasing on $\mathbb{R}$ and $
 \mbox{Cl}(v^\mathbb{Q})(x) =\lim_{ \substack{\delta \to 0\\ \delta>0}}v^{\mathbb{Q}}(x + \delta ).$
\end{lemma}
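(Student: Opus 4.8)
The plan is to treat the three assertions about $v$, $v^{\mathbb Q}$ and $\mathrm{Cl}(v^{\mathbb Q})$ separately but with a common tool: the one-dimensional concavity structure inherited from $V(\o,\cdot)$. First I would establish that $v$ is concave and non-decreasing on $\mathbb R$. On $[0,\infty)$ this follows from Assumption \ref{samedi}: fix $x_1,x_2\ge 0$, $\l\in[0,1]$, $h_i\in\Hc_{x_i}$; then $\l h_1+(1-\l)h_2\in\Hc_{\l x_1+(1-\l)x_2}$ because the defining inequality $x+hY\ge 0$ is preserved under convex combinations $\mathcal Q$-q.s., and for each $P$, concavity of $V(\o,\cdot)$ gives $E_P V(\cdot,\l x_1+(1-\l)x_2+(\l h_1+(1-\l)h_2)Y)\ge \l E_P V(\cdot,x_1+h_1Y)+(1-\l)E_P V(\cdot,x_2+h_2Y)$; taking $\inf_P$ on the left, bounding below by the sum of the two separate infima, and then $\sup$ over $h_1,h_2$ yields $v(\l x_1+(1-\l)x_2)\ge \l v(x_1)+(1-\l)v(x_2)$ — here the convention $\infty-\infty=\infty$ (Remark \ref{verslinfinietaudela}) must be invoked to make the additivity of infima harmless. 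Monotonicity is similar and easier, using that $V(\o,\cdot)$ is non-decreasing and $\Hc_{x}\subseteq\Hc_{x'}$-type inclusions after translating $h$; plus the convention that $v\equiv-\infty$ on $(-\infty,0)$ makes concavity across $0$ automatic since a concave-on-$[0,\infty)$, $-\infty$-on-$(-\infty,0)$ function with $v(0)$ possibly finite is concave on all of $\mathbb R$ (the epigraph is convex). The identical argument, restricting to $h\in\mathbb Q^d$ and noting $\mathbb Q^d$ is a convex set stable under rational convex combinations — actually one must be slightly careful here, since $\l$ need not be rational — handles $v^{\mathbb Q}$: for the concavity of $v^{\mathbb Q}$ I would instead argue that $v^{\mathbb Q}$ is the pointwise supremum over $h\in\mathbb Q^d$ of the functions $x\mapsto \inf_P E_P V(\cdot,x+hY)$, each of which is concave and non-decreasing in $x$ by Assumption \ref{samedi} (concavity/monotonicity in $x$ for fixed $h$ needs no rationality), hence $v^{\mathbb Q}$, a sup of concave non-decreasing functions, is concave and non-decreasing; the same observation re-proves concavity of $v$.

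Next, the closure. Since $\mathrm{Cl}(v^{\mathbb Q})$ is by definition the smallest usc majorant of $v^{\mathbb Q}$, and the closure of a concave function is concave (its epigraph is the closure of a convex set, hence convex), $\mathrm{Cl}(v^{\mathbb Q})$ is concave; it is non-decreasing because $v^{\mathbb Q}$ is and the class of usc non-decreasing functions is closed under pointwise infimum, so the smallest usc majorant is automatically non-decreasing (alternatively: if $x\le y$ then $\mathrm{Cl}(v^{\mathbb Q})(x)=\limsup_{z\to x}v^{\mathbb Q}(z)\le\limsup_{z\to y}v^{\mathbb Q}(z)=\mathrm{Cl}(v^{\mathbb Q})(y)$ using monotonicity of $v^{\mathbb Q}$ — care with one-sided limits at the boundary $0$).

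Finally, the formula $\mathrm{Cl}(v^{\mathbb Q})(x)=\lim_{\d\downarrow 0}v^{\mathbb Q}(x+\d)$. For a concave function $g$ on $\mathbb R$, the closure agrees with $g$ on the interior of its domain and, at a boundary point, equals the limit of $g$ along the domain from the interior side; since $v^{\mathbb Q}$ is concave, non-decreasing, and identically $-\infty$ on $(-\infty,0)$, its domain is an interval contained in $[0,\infty)$, and the only place where closure can differ from the function is at the left endpoint of the domain (where, being non-decreasing, the one-sided limit from the right is the relevant one) and, trivially, on $(-\infty,0)$ where both sides are $-\infty$. So I would split into cases: for $x$ in the interior of $\mathrm{Dom}\,v^{\mathbb Q}$, continuity of the concave function $v^{\mathbb Q}$ there gives $\lim_{\d\downarrow 0}v^{\mathbb Q}(x+\d)=v^{\mathbb Q}(x)=\mathrm{Cl}(v^{\mathbb Q})(x)$; for $x<0$ both sides are $-\infty$; and for $x$ the (finite) left endpoint of the domain or $x=0$, the right-limit $\lim_{\d\downarrow 0}v^{\mathbb Q}(x+\d)=\sup_{\d>0}v^{\mathbb Q}(x+\d)$ (monotonicity) is exactly the value that makes the concave function usc from the right, which is $\mathrm{Cl}(v^{\mathbb Q})(x)$ — here I would quote the relevant fact on closures of concave functions, e.g.\ \citep[Theorem 7.5, Corollary 7.5.1]{cvx}, noting that $v^{\mathbb Q}$ is already right-continuous away from the boundary and the closure only adjusts the boundary value.

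The main obstacle I anticipate is bookkeeping around $\pm\infty$ and the boundary point $0$: one must consistently use the convention $+\infty-\infty=+\infty$ when splitting $\inf_P$ of a sum, handle the possibility that $v^{\mathbb Q}$ (hence $v$, $\mathrm{Cl}(v^{\mathbb Q})$) is identically $+\infty$ on a half-line (the improper case flagged after Assumption \ref{samedi}), and be careful that the relevant monotone limit at $0$ is the \emph{right} limit — which is why the statement writes $\d>0$. None of these is deep, but getting the concavity-across-$0$ and the closure-at-the-endpoint statements precisely right, using only Assumption \ref{samedi} and no no-arbitrage hypothesis, is where the care is needed.
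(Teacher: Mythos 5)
There is a genuine gap, and it sits exactly at the point you flagged and then tried to sidestep: the concavity of $v^{\mathbb{Q}}$. Your repair — writing $v^{\mathbb{Q}}(x)=\sup_{h\in\mathbb{Q}^{d}}g_{h}(x)$ with $g_{h}(x):=\inf_{P\in\mathcal{Q}}E_{P}V(\cdot,x+hY(\cdot))$ for $h\in\Hc_x$ (and $-\infty$ otherwise), noting each $g_{h}$ is concave and non-decreasing in $x$, and concluding that the supremum is concave — is invalid: a pointwise supremum of concave functions is in general \emph{not} concave (only infima of concave functions, or suprema of \emph{convex} functions, behave this way; e.g.\ $\sup(x,-x)=|x|$). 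Concavity of $v$ does follow from your first argument, because there the full convex combination $\l h_{1}+(1-\l)h_{2}$ is an admissible strategy, i.e.\ one is partially maximizing a jointly concave function over the convex set $\mbox{Graph}(\Hc)$; but this is precisely what breaks for $v^{\mathbb{Q}}$ when $\l$ is irrational, since $\l h_{1}+(1-\l)h_{2}\notin\mathbb{Q}^{d}$ in general. So after your (correct) observation of this difficulty, the proposed substitute argument does not prove anything, and the concavity of $\mbox{Cl}(v^{\mathbb{Q}})$, which you deduce from that of $v^{\mathbb{Q}}$, is left without foundation as well. Also note that your closing remark that the same observation ``re-proves'' concavity of $v$ imports the same false principle.

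The paper's route is different at exactly this spot: one only takes $\l=\tfrac12$, so that midpoints of rational strategies remain rational, which yields \emph{midpoint} concavity of $v^{\mathbb{Q}}$ (and of $v$), and then upgrades midpoint concavity to concavity via Ostrowski's theorem, the needed regularity being supplied by the fact that $v^{\mathbb{Q}}$ is non-decreasing (hence measurable); equivalently, one can iterate midpoints to get the concavity inequality for dyadic weights and pass to arbitrary $\l$ using monotonicity. With concavity of $v^{\mathbb{Q}}$ so obtained, concavity of $\mbox{Cl}(v^{\mathbb{Q}})$ follows as you say (closure of a concave function), and the final identity $\mbox{Cl}(v^{\mathbb{Q}})(x)=\lim_{\d\downarrow0}v^{\mathbb{Q}}(x+\d)$ needs none of your case analysis on the domain of a concave function: it is immediate from $\mbox{Cl}(v^{\mathbb{Q}})(x)=\lim_{\d\to0}\sup_{|y-x|<\d}v^{\mathbb{Q}}(y)$ together with monotonicity of $v^{\mathbb{Q}}$, which is the one-line argument the paper uses and which also quietly disposes of the improper/$\pm\infty$ bookkeeping you worry about.
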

\begin{proof}
As $V$ is non-decreasing (see Assumption \ref{samedi}),  $v$ and $v^\mathbb{Q}$ are clearly non-decreasing. The proof of the concavity of  $v$ or $v^\mathbb{Q}$ relies on a midpoint concavity argument and on Ostrowski Theorem, see \citep[p12]{WFD}. It is very similar to  \citep[Proposition 2]{RS06} or \citep[Lemma 3.5]{Nutz} and   thus omitted.   Using  \citep[Proposition 2.32 p57]{rw}, we obtain that  $\mbox{Cl}({v}^\mathbb{Q})$  is concave on $\mathbb{R}$. Then, using for example  \citep[1(7) p14]{rw}, we get that for all $x \in \mathbb{R}$,
$
\mbox{Cl}({v}^\mathbb{Q})(x)= \lim_{ \delta \to 0} \sup_ {|y-x| <\delta} v^{\mathbb{Q}}(y)=  \lim_{ \substack{\delta \to 0\\ \delta>0}}v^{\mathbb{Q}}(x + \delta )
$ and the proof is completed.
  \end{proof}\\
Let $x \geq 0$ and $P \in \mathcal{Q}$ be fixed. We introduce
$H_{x}(P):=\left\{h \in \mathbb{R}^{d}, \; x+ h Y\geq 0 \; P\mbox{-a.s.}\right\}$.
Note that $\mathcal{H}_{x} = \bigcap_{P \in \mathcal{Q}} H_{x}(P)$ (see  \eqref{HxDx}).
\begin{lemma}
\label{IsupV}
Assume that Assumption \ref{samedi} holds true. Let   $I: \overline{\O} \times \mathbb{R} \to [0,\infty]$ be a function such that for all $x \in \mathbb{R}$ and $h \in \mathbb{R}^{d}$, $I(\cdot,x+hY(\cdot))$ is $\mathcal{G}$-measurable,  $I(\o,\cdot)$ is non-decreasing and non-negative for all $\o \in \overline{\O}$ and $V \leq I$. Set
\begin{align*}
i(x):=
1_{[0,\infty)}(x)\sup_{h \in \mathbb{R}^{d}} \sup_{P \in \mathcal{Q}}  1_{H_{x}(P)}(h)E_{P} I(\cdot,x+hY(\cdot)).
\end{align*}
Then $i$ is non-decreasing, non-negative on $\mathbb{R}$  and   $\mbox{Cl} ({v^\mathbb{Q}})(x) \leq i(x+1)$ for all $x \in \mathbb{R}$.
\end{lemma}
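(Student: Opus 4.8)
The plan is to treat the three assertions separately: non-negativity and monotonicity of $i$ are immediate, and the substantive claim $\mbox{Cl}(v^{\mathbb{Q}})(x)\leq i(x+1)$ is obtained by combining a pointwise bound $v^{\mathbb{Q}}\leq i$ with the representation of the closure provided by Lemma~\ref{concaveandusc}.

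For the first two properties I would argue as follows. Since $I\geq 0$ and $\mathcal{Q}$ is non-empty, taking $h=0\in H_x(P)$ shows $i(x)\geq E_P I(\cdot,x)\geq 0$ whenever $x\geq 0$, while $i(x)=0$ when $x<0$; hence $i\geq 0$ on $\mathbb{R}$. For monotonicity, fix $x\leq x'$: the case $x<0$ is trivial, and for $0\leq x\leq x'$ one notes that $x+hY\geq 0$ $P$-a.s.\ implies $x'+hY\geq 0$ $P$-a.s., so $H_x(P)\subseteq H_{x'}(P)$, and since $I(\omega,\cdot)$ is non-decreasing, $1_{H_x(P)}(h)E_P I(\cdot,x+hY(\cdot))\leq 1_{H_{x'}(P)}(h)E_P I(\cdot,x'+hY(\cdot))$ for every $h\in\mathbb{R}^d$ and $P\in\mathcal{Q}$; taking suprema over $h$ and $P$ gives $i(x)\leq i(x')$.

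The core of the argument is the pointwise bound $v^{\mathbb{Q}}(y)\leq i(y)$ for every $y\in\mathbb{R}$. For $y<0$ it is trivial since $v^{\mathbb{Q}}(y)=-\infty\leq 0\leq i(y)$. For $y\geq 0$ I would fix $h\in\mathcal{H}_y\cap\mathbb{Q}^d$; as $\mathcal{H}_y=\bigcap_{P\in\mathcal{Q}}H_y(P)$ we have $1_{H_y(P)}(h)=1$ for all $P$, and from $V\leq I$ together with $I\geq 0$ (so that $V^+\leq I$ pointwise) one gets $E_P V(\cdot,y+hY(\cdot))\leq E_P V^+(\cdot,y+hY(\cdot))\leq E_P I(\cdot,y+hY(\cdot))$ for every $P$, the right-hand side being well-defined in $[0,\infty]$ by the measurability hypothesis on $I$. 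Taking $\inf_{P\in\mathcal{Q}}$ and then $\sup_{h\in\mathcal{H}_y\cap\mathbb{Q}^d}$ bounds $v^{\mathbb{Q}}(y)$ by $\sup_{h\in\mathbb{R}^d}\sup_{P\in\mathcal{Q}}1_{H_y(P)}(h)E_P I(\cdot,y+hY(\cdot))=i(y)$. Finally, by Lemma~\ref{concaveandusc} the function $v^{\mathbb{Q}}$ is non-decreasing and $\mbox{Cl}(v^{\mathbb{Q}})(x)=\lim_{\delta\to 0,\,\delta>0}v^{\mathbb{Q}}(x+\delta)=\inf_{\delta>0}v^{\mathbb{Q}}(x+\delta)\leq v^{\mathbb{Q}}(x+1)\leq i(x+1)$, the last step applying the pointwise bound at $y=x+1$.

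There is no serious obstacle here; the only point requiring a little care is the comparison $E_P V(\cdot,y+hY(\cdot))\leq E_P I(\cdot,y+hY(\cdot))$, where one must keep track of the convention $+\infty-\infty=+\infty$ (Remark~\ref{verslinfinietaudela}) in the degenerate case where $E_P V^+$ and $E_P V^-$ are both infinite: there $E_P I\geq E_P V^+=+\infty$ forces equality, so the inequality still holds. The shift by $1$ in $i(x+1)$ is only cosmetic, since monotonicity of $i$ makes $i(x+1)$ an upper bound for $\inf_{\delta>0}i(x+\delta)$, which is all that is needed in the multi-period application.
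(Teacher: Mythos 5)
Your proposal is correct and follows essentially the same route as the paper: establish the pointwise bound $v^{\mathbb{Q}}\leq i$ from $V\leq I$ and $\mathcal{H}_{x}\subset H_{x}(P)$, then use the representation of $\mbox{Cl}(v^{\mathbb{Q}})$ from Lemma \ref{concaveandusc} together with monotonicity to get $\mbox{Cl}(v^{\mathbb{Q}})(x)\leq v^{\mathbb{Q}}(x+1)\leq i(x+1)$. The extra care you take with the convention $+\infty-\infty=+\infty$ and with the monotonicity of $i$ is sound but just spells out what the paper treats as immediate.
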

\begin{proof}
Since  $I(\cdot,x+hY(\cdot))$ is $\mathcal{G}$-measurable for all $x \in \mathbb{R}$  and  $I \geq 0$, the integral in the definition of $i$ is well-defined (potentially equals to $+\infty$). It is clear   that $i$ is non-decreasing and non-negative on $\mathbb{R}$.
As   $V \leq I$ and $\mathcal{H}_{x} \subset H_{x}(P)$ if $P \in \mathcal{Q}$, it is clear that  $v^\mathbb{Q}(x) \leq i(x)$ for $x \geq 0$. And since $v^\mathbb{Q}(x)=-\infty < i(x)=0$ for $x<0$, $v^{\mathbb{Q}} \leq i$ on $\mathbb{R}$ (note that $v \leq i$ on $\mathbb{R}$ for the same reasons). Applying Lemma \ref{concaveandusc},  $\mbox{Cl}({v}^\mathbb{Q})(x) \leq v^\mathbb{Q}(x+1) \leq i(x+1)$ for all $x \in \mathbb{R}$.
\end{proof}\\

\begin{proposition}\label{ae1}
Assume that Assumptions \ref{samedi} and  \ref{vminus} hold true. Then there
exists some non negative $\Gc$-measurable random variable $C$ such that $ \sup_{P \in \mathcal{Q}}E_{P}(C) <\infty$ and  for all $\omega \in {\O}_{Dom}$ (see Lemma \ref{Vcontinuity}), $\lambda\geq 1$, $x \in \mathbb{R}$ we have
\begin{align}
\label{elastic}
V(\omega,\lambda x)  &\leq  2\lambda\left(V\left(\omega,x+ \frac{1}{2}\right)+ C(\omega)\right).
\end{align}
\end{proposition}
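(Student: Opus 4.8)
The plan is to fix $\omega\in\Omega_{Dom}$ and set $g:=V(\omega,\cdot)$. By Assumption \ref{samedi} and Lemma \ref{Vcontinuity}, $g$ is concave, non-decreasing, finite on $(0,\infty)$, right-continuous at $0$, and equal to $-\infty$ on $(-\infty,0)$; in particular $g(1/2)\in\mathbb{R}$ with $g(1/2)\ge -V^{-}(\omega,1/2)$, and $V^{-}(\omega,\cdot)$ is non-increasing on $(0,\infty)$. The candidate is $C:=2\,V^{-}(\cdot,1/2)$: it is non-negative, $\Gc$-measurable since $V(\cdot,1/2)$ is, and, because $1/2\in\mathbb{Q}$, Assumption \ref{vminus} gives $\sup_{P\in\mathcal{Q}}E_{P}C=2\sup_{P\in\mathcal{Q}}E_{P}V^{-}(\cdot,1/2)<\infty$. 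The reason for anchoring the estimate at $1/2$ rather than at $0$ is that $g(0)$ may equal $-\infty$ (e.g. for the logarithm), so $0$ is useless as a base point for a concavity argument, whereas $g(1/2)$ is finite.

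First I would dispose of the trivial cases. If $x<0$ then $\lambda x<0$, hence $V(\omega,\lambda x)=-\infty$ and the inequality holds. If $x\ge 0$ and $\lambda x\le x+1/2$, monotonicity gives $V(\omega,\lambda x)\le V(\omega,x+1/2)$, and it suffices to check $a\le 2\lambda\bigl(a+2V^{-}(\omega,1/2)\bigr)$ for $a:=V(\omega,x+1/2)$: this is immediate when $a\ge0$ since $2\lambda\ge1$, and when $a<0$ it follows from $(2\lambda-1)(-a)=(2\lambda-1)V^{-}(\omega,x+1/2)\le 2\lambda V^{-}(\omega,1/2)$, using $V^{-}(\omega,x+1/2)\le V^{-}(\omega,1/2)$.

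The core case is $x\ge0$ with $\lambda x>x+1/2$, which forces $\lambda>1$ and $x>0$ and gives $1/2<x+1/2<\lambda x$. Since the secant slopes of the concave function $g$ are non-increasing,
\[
g(\lambda x)\;\le\; g(1/2)+\frac{\lambda x-1/2}{x}\bigl(g(x+1/2)-g(1/2)\bigr)\;=\;\mu\,g(x+1/2)+(1-\mu)\,g(1/2),
\]
where $\mu:=\lambda-1/(2x)$ and $\mu\in(1,\lambda)$ is exactly equivalent to $\lambda x>x+1/2$. Then I would bound the two terms separately. Since $1-\mu<0$ and $g(1/2)\ge -V^{-}(\omega,1/2)$, we get $(1-\mu)g(1/2)\le(\mu-1)V^{-}(\omega,1/2)\le 2\lambda V^{-}(\omega,1/2)$. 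For the other term, when $g(x+1/2)<0$ write $\mu g(x+1/2)-2\lambda g(x+1/2)=(2\lambda-\mu)V^{-}(\omega,x+1/2)$ and use $2\lambda-\mu\in(0,2\lambda)$ together with $V^{-}(\omega,x+1/2)\le V^{-}(\omega,1/2)$ (when $g(x+1/2)\ge0$ the left-hand side is non-positive), obtaining $\mu g(x+1/2)\le 2\lambda g(x+1/2)+2\lambda V^{-}(\omega,1/2)$. Adding the two bounds yields $g(\lambda x)\le 2\lambda g(x+1/2)+4\lambda V^{-}(\omega,1/2)=2\lambda\bigl(V(\omega,x+1/2)+C(\omega)\bigr)$, which is the claim.

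No serious obstacle is expected. The only points requiring care are the choice of the positive rational $1/2$ as the anchor (so that $g$ is finite there and Assumption \ref{vminus} is applicable), and the uniform control of the quantities $\mu-1$ and $2\lambda-\mu$ by $2\lambda$ throughout the sign bookkeeping in the concavity estimate; the measurability of $C$ and the bound $\sup_{P\in\mathcal{Q}}E_{P}C<\infty$ are then immediate.
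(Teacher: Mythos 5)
Your proof follows essentially the same concavity/secant-slope route as the paper: the paper anchors the estimate at $1/4$ and takes $C=V^{-}(\cdot,1/4)$, while you anchor at $1/2$ and take $C=2V^{-}(\cdot,1/2)$; since the proposition only asserts the existence of \emph{some} integrable non-negative $C$, this difference is immaterial, and your case bookkeeping ($x<0$; $\lambda x\le x+1/2$; $1/2<x+1/2<\lambda x$ with the weights $\mu$ and the bounds $\mu-1<2\lambda$, $0<2\lambda-\mu<2\lambda$) is correct.

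There is, however, one unjustified step: the claim that Lemma \ref{Vcontinuity} makes $g=V(\omega,\cdot)$ \emph{finite} on $(0,\infty)$. Membership of $\omega$ in $\Omega_{Dom}$ only excludes the value $-\infty$ at positive points; Assumption \ref{samedi} deliberately does not exclude improper $V$ (the paper stresses this right after Assumption \ref{samedi}), so $V(\omega,\cdot)$ may be identically $+\infty$ on $(0,\infty)$ for some $\omega\in\Omega_{Dom}$. For such $\omega$ your assertion $g(1/2)\in\mathbb{R}$ fails and the secant inequality involves $\infty-\infty$, so the displayed chain of estimates is not literally valid. The paper handles exactly this dichotomy: if $V(\omega,\cdot)$ is finite at one point of its domain, upper semicontinuity and concavity force $V(\omega,\cdot)<+\infty$ everywhere (so your computation applies verbatim), and otherwise $V(\omega,x+1/2)=+\infty$ for all $x\ge 0$, $C(\omega)=0$, and \eqref{elastic} holds trivially. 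Adding this one-line case distinction closes the gap; the rest of your argument is correct.
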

\begin{proof}
We use similar arguments as  \citep[Lemma 2]{RS06}. It is clear that \eqref{elastic} is  true if $x<0$.
We fix  $\o \in \Omega_{Dom}$, $x \geq \frac{1}{2}$ and $\l \geq 1$. Then $\mbox{Ri}(\mbox{Dom} \, V(\o,\cdot))=(0,\infty)$ (recall Lemma \ref{Vcontinuity}).
 We assume first that there exists some $x_0 \in \mbox{Dom} \, V(\o,\cdot)$ such that   $V(\o,x_0)<\infty$. Since $V(\o,\cdot)$ is usc and concave, using similar arguments as in  \citep[Corollary 7.2.1 p53]{cvx},  we get that $V(\o,\cdot)<\infty$ on $\mathbb{R}$.  Using  the fact that  $V(\o,\cdot)$ is concave and  non-decreasing we get that  (recall that $x \geq \frac{1}{2}$)
\begin{small}
 \begin{align}
\nonumber
V\left(\o, \l x\right)  & \leq    V\left(\o,x\right)  +
\frac{V\left(\o,x\right) - V\left(\o, \frac{1}{4}\right)}{x- \frac{1}{4}}(\l -1)x \leq
  V\left(\o,x\right) + 2 \left(\l - 1\right) \left(V\left(\o,x\right)+V^-\left(\o, \frac{1}{4}\right)\right) \\
\nonumber
& \leq     V\left(\o,x\right) + 2 \left(\l - \frac{1}{2}\right) \left(V\left(\o,x\right)+V^-\left(\o, \frac{1}{4}\right)\right) + V^{-}\left(\o, \frac{1}{4}\right)\\
 \label{step1}
&  \leq   2 \l \left(V\left(\o,x\right)+ V^{-}\left(\o, \frac{1}{4}\right)\right)
 \leq 2 \l \left(V\left(\o,x+ \frac{1}{2}\right)+ V^{-}\left(\o, \frac{1}{4}\right)\right).
\end{align}
\end{small}
Fix now $0 \leq x \leq  \frac{1}{2}$ and $\l \geq 1$. Using again that $V(\o,\cdot)$ is non-decreasing and the first inequality of \eqref{step1}, $
V(\o,\lambda x) \leq V\left(\o, \lambda \left(x+\frac{1}{2}\right)\right) \leq 2 \l \left(V\left(\o,x+ \frac{1}{2}\right)+ V^{-}\left(\o, \frac{1}{4}\right)\right)$,
and Proposition \ref{ae1} is proved setting $C(\o)=V^{-}\left(\o, \frac{1}{4}\right)$ (recall Assumption \ref{vminus}) when there exists some $x_0 \in \mbox{Dom} \, V(\o,\cdot)$ such that   $V(\o,x_0)<\infty$. Now, if this is not the case,  $V(\o,x)=\infty$ for all $x \in \mbox{Dom} \, V(\o,\cdot)$,  $C(\o)=V^{-}\left(\o, \frac{1}{4}\right)=0$ and \eqref{elastic}  also holds  true for all $x \geq 0$.
\end{proof}\\

\begin{lemma}
\label{fat}
Assume that  Assumptions \ref{D0}, \ref{AOAone}, \ref{samedi}, \ref{vminus} and \ref{dimanche}  hold true. Then there exists a non negative $\Gc$-measurable $L$ such that for all $P \in \mathcal{Q}$, $E_{P} (L)<\infty$ and for all $x \geq 0$ and $h\in \mathcal{H}_{x}$,
$
V^+(\cdot,x+hY(\cdot))\leq \left(4x+1\right)L(\cdot)  \; \mathcal{Q} \mbox{-q.s.}
$
\end{lemma}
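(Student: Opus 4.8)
The plan is to dominate the random wealth $x+hY(\cdot)$, uniformly over the admissible $h\in\Hc_{x}$, by a deterministic multiple of a single random variable of the form $1+gY(\cdot)$ with $g\in\Hc_{1}$ fixed, and then to feed this into the elasticity estimate of Proposition \ref{ae1}, whence integrability of the dominant follows from Assumption \ref{dimanche}. First I would reduce to $h\in D_{x}$: by Remark \ref{proj} one may replace $h$ by its orthogonal projection on $D$ without altering $hY$ $\mathcal{Q}$-q.s., and Lemma \ref{rast} then gives $|h|\le x/\a$. Writing $\mathbf{1}:=(1,\dots,1)\in\mathbb{R}^{d}$ and using $Y_{i}(\cdot)\ge-b$ (Assumption \ref{Yb}), one has pointwise $|Y_{i}(\cdot)|\le Y_{i}(\cdot)+2b$, hence $|Y(\cdot)|\le\mathbf{1}\,Y(\cdot)+2bd$. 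With $\b:=1+\tfrac{2bd}{\a}\ge1$ and $g_{0}:=\tfrac{1}{\a\b}\mathbf{1}$ this gives $1+\tfrac{|Y(\cdot)|}{\a}\le\b(1+g_{0}Y(\cdot))$ and, since $Y_{i}\ge-b$, also $g_{0}Y(\cdot)\ge-\tfrac12$, so that $g_{1}:=\tfrac12 g_{0}$ satisfies $1+g_{1}Y(\cdot)\ge\tfrac34$, i.e. $g_{1}\in\Hc_{1}$. Combining with $|h|\le x/\a$ yields, $\mathcal{Q}$-q.s.,
$$0\le x+hY(\cdot)\le x\Big(1+\tfrac{|Y(\cdot)|}{\a}\Big)\le x\b\,(1+g_{0}Y(\cdot)).$$

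Next I would fix $\o\in\O_{Dom}$ (the $\mathcal{Q}$-full measure set of Lemma \ref{Vcontinuity}) and split on the size of $x$. If $x\ge\tfrac{1}{2\b}$, write $x\b(1+g_{0}Y(\o))=(2x\b)\cdot\tfrac{1+g_{0}Y(\o)}{2}$ with $2x\b\ge1$ and apply Proposition \ref{ae1} (valid for an arbitrary real second argument), noting $\tfrac{1+g_{0}Y(\o)}{2}+\tfrac12=1+g_{1}Y(\o)$, to obtain $V\big(\o,x\b(1+g_{0}Y(\o))\big)\le4x\b\big(V(\o,1+g_{1}Y(\o))+C(\o)\big)$; taking positive parts and using $C\ge0$, monotonicity of $V$ and the envelope above gives $V^{+}(\o,x+hY(\o))\le4x\b\big(V^{+}(\o,1+g_{1}Y(\o))+C(\o)\big)$. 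If $0\le x\le\tfrac{1}{2\b}$, then $x+hY(\o)\le x\b(1+g_{0}Y(\o))\le\tfrac12(1+g_{0}Y(\o))=\tfrac12+g_{1}Y(\o)\le1+g_{1}Y(\o)$, so $V^{+}(\o,x+hY(\o))\le V^{+}(\o,1+g_{1}Y(\o))$. Taking $L:=\b\big(V^{+}(\cdot,1+g_{1}Y(\cdot))+C\big)$ and using $4x\b\le(4x+1)\b$ together with $L\ge V^{+}(\cdot,1+g_{1}Y(\cdot))$, both cases yield $V^{+}(\cdot,x+hY(\cdot))\le(4x+1)L(\cdot)$ $\mathcal{Q}$-q.s.

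It then remains to check that $L$ has the stated properties. It is non-negative and $\Gc$-measurable, since $V(\cdot,1+g_{1}Y(\cdot))$ is $\Gc$-measurable by Assumption \ref{samedi} and $C$ is $\Gc$-measurable by Proposition \ref{ae1}; and for every $P\in\mathcal{Q}$, $E_{P}L=\b\big(E_{P}V^{+}(\cdot,1+g_{1}Y(\cdot))+E_{P}C\big)<\infty$, the first term by Assumption \ref{dimanche} (as $g_{1}\in\Hc_{1}$) and the second because $\sup_{P\in\mathcal{Q}}E_{P}C<\infty$ by Proposition \ref{ae1}. The main obstacle is the calibration making the $+\tfrac12$ shift in Proposition \ref{ae1} land exactly on the base point $1$, which is the only value at which Assumption \ref{dimanche} provides integrability: this forces the scaling $\l=2x\b$ with argument $\tfrac{1+g_{0}Y}{2}$ in place of the naive one, and it is what makes it essential to exhibit the fixed envelope strategy $g_{0}$ inside $\Hc_{1}$ with $g_{0}Y\ge-\tfrac12$ — which is exactly where Assumption \ref{Yb} and the uniform bound $|h|\le x/\a$ from Lemma \ref{rast} enter.
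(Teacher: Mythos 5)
Your argument is sound and essentially self-contained, which is more than the paper provides at this point: the proof of Lemma \ref{fat} is omitted there, with a reference to the mono-prior analogue in \citep[Lemma 5.11]{BCR}, and your construction — calibrating Proposition \ref{ae1} with $\l=2x\b$ so that the $+\tfrac12$ shift lands exactly on the base point $1$, via the fixed envelope strategy $g_{0}=\tfrac{1}{\a\b}\mathbf{1}$ and $g_{1}=\tfrac12 g_{0}\in\Hc_{1}$ — is the natural adaptation of that lemma. The individual steps check out: the reduction from $\Hc_{x}$ to $D_{x}$ via Remark \ref{proj} (the projection stays in $\Hc_x$ since $hY=h'Y$ $\mathcal{Q}$-q.s.), the bound $|h|\le x/\a$ from Lemma \ref{rast}, the pointwise domination $|Y|\le \mathbf{1}Y+2bd$, the identity $\tfrac{1+g_{0}Y}{2}+\tfrac12=1+g_{1}Y$, and the two regimes in $x$ together with $4x\b\le(4x+1)\b$.

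Two caveats. First, you invoke Assumption \ref{Yb} ($Y_{i}\ge -b$), which is not in the hypothesis list of Lemma \ref{fat}. It is what produces a \emph{single} $g_{1}\in\Hc_{1}$ dominating all admissible positions; Assumption \ref{dimanche} by itself only gives integrability of $V^{+}(\cdot,1+hY)$ for each fixed $h\in\Hc_{1}$, and without a lower bound on $Y$ (or some substitute such as integrability of $V^{+}(\cdot,1+c|Y|)$) the passage from the $(x,h)$-dependent bound $4x\bigl(V^{+}(\cdot,1+\tfrac{h}{2x}Y)+C\bigr)$ to one uniform $L$ is not available. Since every use of Lemma \ref{fat} in the paper (Theorem \ref{main1}, and context $t+1$ in Proposition \ref{dyn1}, where \ref{Yb} follows from Assumption \ref{Sass2}) has \ref{Yb} in force, this is harmless in substance, but you should state explicitly that your proof uses it — as written, you prove the lemma under an extra hypothesis relative to its statement. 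Second, the measurability of $L$ is asserted too quickly: Assumption \ref{samedi} gives $\Gc$-measurability of $V(\cdot,x)$ only for each \emph{fixed} $x$, and the joint measurability needed for $\o\mapsto V^{+}(\o,1+g_{1}Y(\o))$ is precisely the pitfall illustrated by Example \ref{uscusc}. It does hold here because $V(\o,\cdot)$ is concave and usc: either apply Lemma \ref{412} to $-V$, or use that a nondecreasing usc function is right-continuous to write $V(\o,y)=\inf_{r\in\mathbb{Q},\,r>y}V(\o,r)$ and conclude by a countable infimum of $\Gc$-measurable functions. Add that one line and the proof is complete.
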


\begin{proof}
The proof is  a slight adaptation of the one of \citep[Lemma 5.11]{BCR} (see also \citep[Lemma 2.8]{Nutz}) and  is thus omitted. Note that the function $L$  is the one defined in \citep[Lemma 5.11]{BCR}.
\end{proof}\\

\begin{lemma}
\label{cont}
Assume that Assumptions \ref{D0},  \ref{AOAone},  \ref{samedi}, \ref{vminus} and \ref{dimanche} hold true. Let  $\Hc$ be the set valued function that assigns to each $x \geq 0$ the set $\mathcal{H}_x$. Then $\mbox{Graph}(\Hc)=\{(x,h) \in [0,+\infty) \times \mathbb{R}^{d},\; h \in \mathcal{H}_{x}\}$ is a closed and convex subset of $\mathbb{R}\times \mathbb{R}^{d}$.
Let $\psi: \mathbb{R} \times \mathbb{R}^{d} \to \mathbb{R} \cup \{\pm \infty\}$ be defined by
\begin{align*}
\psi(x,h):= \begin{cases}
\inf_{P \in \mathcal{Q}}E_{P} V(\cdot,x +h Y(\cdot))  \; \mbox{if $(x,h) \in \mbox{Graph}(\Hc)$},\\
-\infty \; \mbox{otherwise}.
\end{cases}
\end{align*}
Then $\psi$ is  usc and concave on $\mathbb{R} \times \mathbb{R}^{d}$, $\psi<+\infty$ on  $\mbox{Graph}(\Hc)$ and  $\psi(x,0) >-\infty$ for all $x> 0$. \end{lemma}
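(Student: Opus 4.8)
The plan is to establish the four assertions — closedness and convexity of $\mbox{Graph}(\Hc)$, concavity of $\psi$, upper semicontinuity of $\psi$, and the properness statements $\psi<+\infty$ on $\mbox{Graph}(\Hc)$ and $\psi(x,0)>-\infty$ for $x>0$ — in that order, since each builds on the previous.

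\textbf{Step 1: $\mbox{Graph}(\Hc)$ is closed and convex.} Write $\mbox{Graph}(\Hc)=\bigcap_{P\in\mathcal{Q}}\{(x,h):x\geq 0,\ x+hY\geq 0\ P\text{-a.s.}\}$. Each set in the intersection is an intersection over $\o$ (outside a $P$-null set) of the closed half-spaces $\{(x,h):x+hY(\o)\geq 0\}$ together with $\{x\geq 0\}$, hence closed and convex; an arbitrary intersection of closed convex sets is closed and convex. (Here I use Assumption \ref{Yb} only implicitly, to guarantee $\mathcal{H}_x\neq\emptyset$ — indeed $0\in\mathcal{H}_x$ for $x\geq 0$.)

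\textbf{Step 2: concavity of $\psi$.} On $\mbox{Graph}(\Hc)$, for each fixed $P$ the map $(x,h)\mapsto E_P V(\cdot,x+hY(\cdot))$ is concave: $V(\o,\cdot)$ is concave (Assumption \ref{samedi}), $(x,h)\mapsto x+hY(\o)$ is affine, so $(x,h)\mapsto V(\o,x+hY(\o))$ is concave for each $\o$, and concavity is preserved by taking expectations — this needs the integral to be well-defined, which follows because $V^-(\cdot,x+hY(\cdot))$ is $P$-integrable on $\mbox{Graph}(\Hc)$. Indeed by Lemma \ref{Vcontinuity} and monotonicity, $V^-(\cdot,x+hY(\cdot))\leq V^-(\cdot,r)$ for any rational $0<r$ small enough that $r\leq x+hY$ $\mathcal{Q}$-q.s. on a suitable neighbourhood — more simply, on $\mbox{Graph}(\Hc)$ one has $x+hY\geq 0$ so $V^-(\cdot,x+hY(\cdot))\leq V^-(\cdot,0^+)$; rigour here comes from an approximation by $V^-(\cdot,r)$, $r\downarrow 0$, $r\in\mathbb{Q}$, using Assumption \ref{vminus} and monotone convergence. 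Then $\inf_{P\in\mathcal{Q}}$ of concave functions is concave, and extending by $-\infty$ off the convex set $\mbox{Graph}(\Hc)$ preserves concavity on all of $\mathbb{R}\times\mathbb{R}^d$.

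\textbf{Step 3: upper semicontinuity of $\psi$.} This is the main obstacle. It suffices to show that $\psi$ is usc at every point; off $\mbox{Graph}(\Hc)$ this is automatic (a point not in the closed set $\mbox{Graph}(\Hc)$ has a neighbourhood where $\psi=-\infty$), so fix $(x_0,h_0)\in\mbox{Graph}(\Hc)$ and a sequence $(x_n,h_n)\to(x_0,h_0)$ in $\mbox{Graph}(\Hc)$. I want $\limsup_n\psi(x_n,h_n)\leq\psi(x_0,h_0)$. For each $P\in\mathcal{Q}$: $V(\o,\cdot)$ is usc (Assumption \ref{samedi}), so $\limsup_n V(\o,x_n+h_nY(\o))\leq V(\o,x_0+h_0Y(\o))$ for each $\o$; I would apply a reverse Fatou lemma to pass the $\limsup$ through $E_P$, which requires a $P$-integrable upper bound for $V^+(\cdot,x_n+h_nY(\cdot))$ uniform in $n$. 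Since $(x_n)$ is bounded, say $x_n\leq M$, and $h_n\in\mathcal{H}_{x_n}\subset\mathcal{H}_M$, Lemma \ref{fat} gives $V^+(\cdot,x_n+h_nY(\cdot))\leq(4M+1)L(\cdot)$ $\mathcal{Q}$-q.s.\ with $E_P L<\infty$ — this is exactly the uniform domination needed. Hence $\limsup_n E_P V(\cdot,x_n+h_nY(\cdot))\leq E_P V(\cdot,x_0+h_0Y(\cdot))$ for each $P$, and therefore $\limsup_n\psi(x_n,h_n)\leq\limsup_n E_P V(\cdot,x_n+h_nY(\cdot))\leq E_P V(\cdot,x_0+h_0Y(\cdot))$; taking the infimum over $P\in\mathcal{Q}$ on the right gives $\limsup_n\psi(x_n,h_n)\leq\psi(x_0,h_0)$. (One must also handle that $\psi(x_n,h_n)$ is an infimum, not a single expectation: bound $\psi(x_n,h_n)\leq E_P V(\cdot,x_n+h_nY(\cdot))$ for the fixed $P$ chosen, which is legitimate, then take $\limsup$, then $\inf_P$.)

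\textbf{Step 4: properness.} The bound $\psi<+\infty$ on $\mbox{Graph}(\Hc)$ follows from Lemma \ref{fat} again: for $(x,h)\in\mbox{Graph}(\Hc)$, $x\leq M$ say, $E_P V^+(\cdot,x+hY(\cdot))\leq(4M+1)E_P L<\infty$ for every $P$, so $E_P V(\cdot,x+hY(\cdot))<+\infty$ and a fortiori the infimum is $<+\infty$. For $\psi(x,0)>-\infty$ when $x>0$: $\psi(x,0)=\inf_{P\in\mathcal{Q}}E_P V(\cdot,x)$, and choosing a rational $r$ with $0<r\leq x$, monotonicity gives $E_P V(\cdot,x)\geq E_P V(\cdot,r)\geq -E_P V^-(\cdot,r)\geq -\sup_{P\in\mathcal{Q}}E_P V^-(\cdot,r)>-\infty$ by Assumption \ref{vminus}, uniformly in $P$; hence $\psi(x,0)\geq-\sup_{P\in\mathcal{Q}}E_P V^-(\cdot,r)>-\infty$. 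This completes the proof.
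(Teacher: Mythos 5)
Your proposal is correct in substance and follows the same route as the paper: the paper defines $\psi_P(x,h):=E_PV(\cdot,x+hY(\cdot))$ on $\mbox{Graph}(\Hc)$, $-\infty$ outside, and cites \citep[Lemma 5.12]{BCR} for the closedness/convexity of $\mbox{Graph}(\Hc)$, the upper semicontinuity of each $\psi_P$ and $\psi_P<\infty$ on $\mbox{Graph}(\Hc)$, then concludes by ``infimum of usc (resp.\ concave) functions is usc (resp.\ concave)'' and the same rational-$r$ argument for $\psi(x,0)>-\infty$. Your Steps 3--4 essentially reprove the content of that cited lemma, and the key ingredients are the right ones: the uniform bound $V^+(\cdot,x_n+h_nY(\cdot))\leq(4M+1)L$ from Lemma \ref{fat} (valid since $\mathcal{H}_{x_n}\subset\mathcal{H}_M$ and $\mathcal{Q}$-q.s.\ bounds are $P$-a.s.\ for every $P\in\mathcal{Q}$), reverse Fatou, and then passing the $\limsup$ through the infimum over $P$ exactly as you do. Your Step 1 phrasing (``intersection over $\o$ outside a $P$-null set of half-spaces'') is imprecise since the null set depends on $(x,h)$; it is harmless, as closedness follows either from the sequential argument with a countable union of null sets or by replacing the a.s.\ statement with the support of the law of $Y$ under $P$.

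The one genuinely incorrect step is the integrability claim in Step 2: it is \emph{not} true that $V^-(\cdot,x+hY(\cdot))$ is $P$-integrable on $\mbox{Graph}(\Hc)$. Assumption \ref{vminus} controls $E_PV^-(\cdot,r)$ only for each fixed rational $r>0$; for a boundary point of $\mbox{Graph}(\Hc)$ there is in general no $r>0$ with $x+hY\geq r$ q.s., and monotone convergence gives $E_PV^-(\cdot,0)=\lim_{r\downarrow0}E_PV^-(\cdot,r)$, which may well be $+\infty$ (take $V(\o,x)=\log x$ and $x+hY=0$ with positive $P$-probability). Fortunately this claim is not needed: by the convention $+\infty-\infty=+\infty$ of Remark \ref{verslinfinietaudela} the integral is always defined, and on $\mbox{Graph}(\Hc)$ the positive part is $P$-integrable by Lemma \ref{fat} (the very bound you invoke in Steps 3 and 4), so $E_PV(\cdot,x+hY(\cdot))\in[-\infty,\infty)$ and concavity of $(x,h)\mapsto E_PV(\cdot,x+hY(\cdot))$ follows from the pointwise concavity of $V(\o,\cdot)$ composed with an affine map, the inequality being trivial whenever an endpoint value is $-\infty$. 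With that substitution your argument is complete and matches the paper's.
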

\begin{proof}
For all $P \in \mathcal{Q}$, we define  $\psi_{P}: \mathbb{R} \times \mathbb{R}^{d} \to \mathbb{R} \cup \{\pm \infty\}$   by $\psi_{P}(x,h)=E_{P} V(\cdot,x +h Y(\cdot))$ if $(x,h) \in \mbox{Graph}(\Hc)$  and $-\infty$ otherwise. As  in  \citep[Lemma 5.12]{BCR}, $\mbox{Graph}(\Hc)$ is a closed convex subset of $\mathbb{R}\times\mathbb{R}^{d}$, $\psi_{P}$ is usc  on $\mathbb{R} \times \mathbb{R}^{d}$ and $\psi_{P} < \infty$  on $\mbox{Graph}(\Hc)$ for all $P \in \mathcal{Q}$. Furthermore the concavity of $\psi_{P}$ follows immediately from the one of $V$.
The function $\psi=\inf_{P \in \mathcal{Q}} \psi_{P}$ is then usc and concave. As $\psi_{P} <\infty$ on $\mbox{Graph}(\Hc)$ for all $P \in \mathcal{Q}$, it is  clear that $\psi<+\infty$ on $\mbox{Graph}(\Hc)$. Finally let $x >0$ be fixed and $r \in \mathbb{Q}$ be such that $r<x$, then we have $-\infty< \psi(r,0) \leq \psi(x,0)$ (see Assumptions \ref{samedi} and \ref{vminus}).
\end{proof}\\
We are now able to state the main result of this section.
\begin{theorem}
\label{main1}
Assume that Assumptions  \ref{Yb}, \ref{D0},  \ref{AOAone}, \ref{samedi}, \ref{vminus}   and \ref{dimanche} hold true.  Then for all $x \geq 0$, $v(x) < \infty$  and there exists some optimal strategy $\widehat{h} \in D_x$ such that
\begin{align}
\label{vopti}
v(x) = \inf_{P \in \mathcal{Q}}  E_{P}(V(\cdot,x + \widehat{h}Y(\cdot))).
\end{align}
Moreover $v$ is usc, concave, non-decreasing and $\mbox{Dom}  \,v=(0,\infty)$. For all $x \in \mathbb{R}$
\begin{align}
\label{VQ}
v(x)= v^{\mathbb{Q}}(x)= \mbox{Cl}(v^{\mathbb{Q}})(x).
\end{align}
\end{theorem}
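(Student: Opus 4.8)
The plan is to solve the one-period problem in three stages: (i) establish the finiteness bound $v(x)<\infty$; (ii) extract an optimal $\widehat h$ by a compactness plus upper-semicontinuity argument on $D_x$; (iii) identify $v$, $v^{\mathbb{Q}}$ and $\mbox{Cl}(v^{\mathbb{Q}})$.

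First I would prove $v(x)<\infty$. By \eqref{noam} it suffices to bound $\inf_{P\in\mathcal{Q}}E_{P}V(\cdot,x+hY(\cdot))$ for $h\in D_x$, and since this infimum is below $E_{P}V(\cdot,x+hY(\cdot))$ for any single $P$, it is enough to bound it from above $\mathcal{Q}$-q.s.\ by an integrable quantity uniformly in $h\in D_x$. Lemma \ref{fat} gives exactly such a bound: for $h\in\mathcal{H}_x\supset D_x$ one has $V^{+}(\cdot,x+hY(\cdot))\le(4x+1)L(\cdot)$ $\mathcal{Q}$-q.s.\ with $E_{P}L<\infty$ for all $P\in\mathcal{Q}$. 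Picking any fixed $P_0\in\mathcal{Q}$ yields $v(x)\le(4x+1)E_{P_0}L<\infty$.

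Next, the existence of an optimizer. By Lemma \ref{rast}, $D_x$ is convex and compact, so it suffices to show the map $h\mapsto \inf_{P\in\mathcal{Q}}E_{P}V(\cdot,x+hY(\cdot))$ is usc on $D_x$ and that $D_x$ is non-empty (it contains $0$ by Assumption \ref{D0}). Upper-semicontinuity is precisely the content of Lemma \ref{cont}: the function $\psi$ defined there is usc on $\mathbb{R}\times\mathbb{R}^{d}$, and on $D_x\subset\mathcal{H}_x$ (for $x\ge 0$, so $(x,h)\in\mbox{Graph}(\Hc)$) it agrees with our objective; hence the supremum over the compact set $D_x$ is attained at some $\widehat h\in D_x$, giving \eqref{vopti}. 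The delicate point here — the reason Lemma \ref{cont} is phrased via $\inf_{P\in\mathcal{Q}}$ of Fatou-usc functions $\psi_P$ — is that an infimum of usc functions stays usc, so no uniformity in $P$ is needed; the Fatou argument for each $\psi_P$ needs the q.s.\ integrable upper bound from Lemma \ref{fat} and the q.s.\ integrable lower bound controlled via $V^{-}(\cdot,r)$ from Assumption \ref{vminus} and Proposition \ref{ae1}.

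Finally, \eqref{VQ}. The inequalities $v^{\mathbb{Q}}(x)\le v(x)$ and $v^{\mathbb{Q}}(x)\le \mbox{Cl}(v^{\mathbb{Q}})(x)$ are immediate from the definitions and Lemma \ref{concaveandusc}. For $v\le v^{\mathbb{Q}}$, fix $x\ge 0$ and an optimal $\widehat h\in D_x$; I would approximate $\widehat h$ by rationals. The subtlety is that a rational $h_n\to\widehat h$ need not lie in $\mathcal{H}_x$, so one first restricts attention to $\mbox{Ri}\big(\{h\in\mathcal{H}_x:\psi(x,h)>-\infty\}\big)$, in which $\mathbb{Q}^{d}$ is dense (this is the density statement flagged in Remark \ref{Vminrem}, which uses Assumption \ref{vminus}), uses concavity plus upper-semicontinuity of $\psi(x,\cdot)$ from Lemma \ref{cont} to get $\psi(x,\widehat h)=\lim_n\psi(x,h_n)\le\limsup_n v^{\mathbb{Q}}(x)$ along rational $h_n$, possibly after perturbing to $x+\delta$ and using right-continuity; then $v(x)\le v^{\mathbb{Q}}(x)$, and since $v$ is usc (being attained, concave, finite with $\mbox{Dom}\,v=(0,\infty)$) and $v\ge v^{\mathbb{Q}}$, minimality of the closure gives $\mbox{Cl}(v^{\mathbb{Q}})\le v$, closing the loop. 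I expect this last density-and-approximation step to be the main obstacle, exactly as the remark warns, since it is where Assumption \ref{vminus} is genuinely used and where one must be careful that the rational approximants remain admissible.
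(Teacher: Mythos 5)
Your stages (i) and (ii) — finiteness via Lemma \ref{fat} with one fixed $P_0$, and attainment via compactness of $D_x$ (Lemma \ref{rast}) together with upper semicontinuity of $\psi$ from Lemma \ref{cont} — are exactly the paper's argument and are fine. The first genuine gap is your justification of the upper semicontinuity of $v$: it is \emph{not} a consequence of ``being attained, concave, finite with $\mbox{Dom}\,v=(0,\infty)$''. A non-decreasing concave function on $[0,\infty)$ that is finite on $(0,\infty)$ can jump down at $0$ (take $w(0)=0$ and $w(x)=1$ for $x>0$), and attainment of the supremum at each fixed $x$ does not exclude this; the delicate point is precisely usc at $x=0$ from the right. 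Since you use usc of $v$ both for the statement of the theorem and to deduce $\mbox{Cl}(v^{\mathbb{Q}})\le v$, this step cannot be waved through. The paper proves it by taking $x_n\to 0$, the optimizers $\widehat h_n\in D_{x_n}$, the bound $|\widehat h_n|\le x_n/\alpha$ from Lemma \ref{rast} (i.e.\ Assumption \ref{AOAone}) to extract a convergent subsequence $\widehat h_{n_k}\to\underline h^*$, the closedness of $\mbox{Graph}(\mathcal{H})$ (Lemma \ref{cont}) to get $\underline h^*\in\mathcal{H}_0$, and then the \emph{joint} usc of $\psi$ to conclude $\limsup_n v(x_n)=\lim_k\psi(x_{n_k},\widehat h_{n_k})\le\psi(0,\underline h^*)\le v(0)$. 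You have all these ingredients available but never assemble them.

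The second issue is the density of $\mathbb{Q}^d$ in $\mbox{Ri}\bigl(\mbox{Dom}\,\psi_x\bigr)$, which you rightly flag as the main obstacle but leave unproved, and which uses Assumption \ref{Yb} and not only Assumption \ref{vminus}: if $\mbox{Aff}(\mbox{Dom}\,\psi_x)$ were a proper (possibly irrational) subspace, $\mathbb{Q}^d$ need not even meet the relative interior. The paper's argument is: by Lemma \ref{rast2} (which uses $Y\ge -b$) every $h\in\mbox{Ri}(\mathcal{H}_x)$ lies in $\mathcal{H}^{r}_{x}$ for some rational $r>0$, so monotonicity of $V$ and Assumption \ref{vminus} give $\psi_x(h)\ge\psi(r,0)>-\infty$; hence $\mbox{Ri}(\mathcal{H}_x)\subset\mbox{Dom}\,\psi_x$, and since $\mbox{Ri}(\mathcal{H}_x)$ is open in $\mathbb{R}^d$ and $0\in\mbox{Dom}\,\psi_x$, one gets $\mbox{Aff}(\mbox{Dom}\,\psi_x)=\mathbb{R}^d$, so $\mbox{Ri}(\mbox{Dom}\,\psi_x)$ is open in $\mathbb{R}^d$ and rational points are dense there. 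With this supplied, your segment-plus-continuity argument for $v\le v^{\mathbb{Q}}$ (essentially the paper's Lemma \ref{supconvex} combined with continuity of the concave $\psi_x$ on $\mbox{Ri}(\mbox{Dom}\,\psi_x)$) does go through for $x>0$; note finally that $x=0$ requires the separate but trivial observation that either $Y=0$ $\mathcal{Q}$-q.s.\ or $D_0=\{0\}$, so that $\widehat h=0\in\mathbb{Q}^d$ and no approximation is needed.
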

\begin{proof}
Let $x  \geq  0$ be fixed. Fix some $P \in \mathcal{Q}$. Using Lemma \ref{fat} we have that $E_{P} V(\cdot,x + hY(\cdot)) \leq  E_{P} V^+(\cdot,x + hY(\cdot)) \leq \left(4x+1\right) E_{P} L(\cdot)<\infty$,
for all $h \in \Hc_x$. Thus  ${v}(x)  < \infty$. Now if $x>0$, $v(x) \geq \psi(x,0)>-\infty$ (see Lemma \ref{cont}). Using  Lemma \ref{concaveandusc}, $v$ is  concave and non-decreasing. Thus $v$ is continuous on $(0,\infty)$. \\
\noindent From Lemma \ref{cont}, $h   \to \psi(x,h)$ is usc on  $\mathbb{R}^{d}$ and thus on $D_x$ (recall that $D_{x}$ is closed and use  \citep[Lemma 7.11]{BCR}).  Since $D_x$ is compact (see Lemma \ref{rast}), recalling \eqref{noam} and applying \citep[Theorem 2.43 p44]{Hitch}, we find that there exists some  $\widehat{h} \in D_x$ such that \eqref{vopti} holds true.\\
We prove now that $v$ is usc in $0$ (the proof works as well for all $x^* \geq 0$). Let $(x_n)_{n\geq 0}$ be a sequence of non-negative numbers converging to $0$. Let $\widehat{h}_n \in D_{x_{n}}$ be  the optimal strategies associated to $x_{n}$ in \eqref{vopti}.
Let $(n_k)_{k\geq 1}$  be a subsequence such that $\limsup_{n} {v}(x_{n}) = \lim_{k} {v}(x_{n_k})$.
Using Lemma \ref{rast}, $|\widehat{h}_{n_k}| \leq x_{n_k}/\alpha \leq 1/\alpha$ for
$k$ big enough. So we can extract a subsequence (that we still denote by $(n_k)_{k\geq 1}$) such that there exists some  $\underline{h}^*$ with   $\widehat{h}_{n_k} \to \underline{h}^*$. As  $(x_{n_{k}},\hat{h}_{n_{k}})_{k\geq 1} \in \mbox{Graph}(\Hc)$ which is a closed  subset of  $\mathbb{R}\times\mathbb{R}^{d}$ (see Lemma \ref{cont}), $\underline{h}^* \in \Hc_0$.
Thus using that $\psi$ is usc, we get that
\begin{align*}
\limsup_{n} {v}(x_n)&=\lim_k \inf_{P \in \mathcal{Q}} E_{P} V(\cdot, x_{n_k}+ \widehat{h}_{n_k} Y(\cdot))= \lim_k \psi(x_{n_k},h_{n_k})\\
&\leq \psi(0,\underline{h}^*)= \inf_{P \in \mathcal{Q}} E_{P}V(\cdot, \underline{h}^* Y(\cdot)) \leq {v}(0).
\end{align*}
For $x<0$ all the equalities in \eqref{VQ} are trivial.  We prove the first equality in \eqref{VQ} for $x \geq 0$  fixed.  We start  with the case $x=0$. If $Y=0$ $\mathcal{Q}$-q.s. then  the first equality is trivial. If $Y \neq 0$ $\mathcal{Q}$-q.s., then it is clear that ${D}_{0}=\{0\}$ (recall Assumption \ref{D0}) and  the first equality  in \eqref{VQ} is true again.
We assume now that $x>0$. From Lemma \ref{cont}, $\psi_{x}: h  \to \psi(x,h)$ is concave,  $0 \in \mbox{Dom} \,\psi_{x}$. Thus $\mbox{Ri}(\mbox{Dom} \,\psi_{x}) \neq \emptyset$ (see  \citep[Theorem 6.2 p45]{cvx}) and  we can apply  Lemma \ref{supconvex}. Assume for a moment that we have proved that $\mathbb{Q}^{d}$ is dense in $\mbox{Ri}(\mbox{Dom} \, \psi_{x})$. As  $\psi_{x}$ is continuous on $\mbox{Ri}(\mbox{Dom} \, \psi_{x})$ (recall that $\psi_{x}$ is concave), we obtain that
\begin{align*}v(x)=\sup_{h \in \mathcal{H}_{x}} \psi_{x}(h)= \sup_{h \in {\textnormal{Dom}} \,\psi_{x}} \psi_{x}(h)&=\sup_{h \in \textnormal{Ri}(\textnormal{Dom} \, \psi_{x}) } \psi_{x}(h)\\
&=
\sup_{h \in \textnormal{Ri}(\textnormal{Dom} \, \psi_{x}) \cap \mathbb{Q}^{d}} \psi_{x}(h) \leq \sup_{h \in \mathcal{H}_{x} \cap \mathbb{Q}^{d}} \psi_{x}(h)=v^{\mathbb{Q}}(x),
\end{align*}
since  $\mbox{Ri}(\mbox{Dom} \, \psi_{x}) \subset \mathcal{H}_{x}$ and the first equality in \eqref{VQ} is proved.
 It remains to prove  that $\mathbb{Q}^{d}$ is dense in $\mbox{Ri}(\mbox{Dom} \, \psi_{x})$. Fix some $h \in \mbox{Ri}(\mathcal{H}_{x})$. From  Lemma \ref{rast2}, there is some $r \in \mathbb{Q}$, $r>0$ such that $h \in \mathcal{H}_{x}^{r}$. Using  Lemma \ref{cont} we obtain that $\psi_{x}(h) \geq \psi( r,0)>-\infty$ thus $h \in \mbox{Dom} \,\psi_{x}$ and   $\mbox{Ri}(\mathcal{H}_{x}) \subset \mbox{Dom} \, \psi_{x}$. Recalling that $0 \in \mbox{Dom}\, \psi_{x}$ and that $\mbox{Ri}(\mathcal{H}_{x})$ is an open set in $\mathbb{R}^{d}$ (see Lemma \ref{rast2}) we obtain that $\mbox{Aff}(\mbox{Dom} \, \psi_{x})=\mathbb{R}^{d}$. Then $\mbox{Ri} (\mbox{Dom} \, \psi_{x})$ is an open set in $\mathbb{R}^{d}$ and the fact that $\mathbb{Q}^{d}$ is dense  in $\mbox{Ri} (\mbox{Dom} \, \psi_{x})$ follows easily.\\
The second equality  in \eqref{VQ} follows immediately~: $v^\mathbb{Q}(x)=v(x)$ for all $x \geq 0$ and ${v}$ is usc on $[0,\infty)$ thus  $\mbox{Cl}(v^\mathbb{Q})(x)=v^\mathbb{Q}(x)$ for all $x \geq 0$.
\end{proof}\\

\subsection{\textbf{Multiperiod case}}
\label{secmulti}
\begin{proposition} \label{ae}
Assume that Assumption  \ref{Uminus} holds true. Then
there exists a non negative, $\mathcal{B}(\Omega^{T})$-measurable random variable $C_{T}$ such that $\sup_{P \in \mathcal{Q}^{T}} E_{P}( C_{T}) <\infty$ and for all $\omega^{T} \in \Omega_{Dom}^{T}$ (recall Lemma \ref{Ucontinuity}), $\lambda\geq 1$ and $x \in \mathbb{R}$, we have
\begin{small}
$$U(\o^{T},\lambda x)  \leq  2\lambda\left(U\left(\o^{T},x+\frac{1}{2}\right)+  C_{T}(\o^{T})\right) \; \mbox{and} \;
U^{+}(\o^{T},\lambda x)  \leq  2\lambda\left(U^{+}\left(\o^{T},x+\frac{1}{2}\right)+  C_{T}(\o^{T})\right).$$
\end{small}
\end{proposition}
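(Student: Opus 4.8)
The plan is to read Proposition \ref{ae} as the multi-period transcription of Proposition \ref{ae1}. On the Borel $\mathcal{Q}^T$-full measure set $\Omega^T_{Dom}$ the random utility $U$ has exactly the features of the function $V$ used there: by Definition \ref{utilite} and Lemma \ref{Ucontinuity}, for every $\omega^T \in \Omega^T_{Dom}$ the map $U(\omega^T,\cdot)$ is non-decreasing and concave with $\mbox{Dom}\, U(\omega^T,\cdot)=(0,\infty)$, it is real-valued there (being proper, by Definition \ref{utilite}, it is never $+\infty$ on $(0,\infty)$) and equals $-\infty$ on $(-\infty,0)$; meanwhile Assumption \ref{Uminus} supplies the integrability bound on $U^-(\cdot,r)$, $r \in \mathbb{Q}\cap(0,\infty)$, that plays the role of Assumption \ref{vminus}. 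Accordingly I would set $C_T(\omega^T):=U^-(\omega^T,\tfrac14)$ and first dispose of the requirements on $C_T$ itself: it is non-negative; since $\tfrac14\in\mathbb{Q}$ the map $U(\cdot,\tfrac14)$ is $\mathcal{B}(\Omega^T)$-measurable by Definition \ref{utilite}, hence so is $C_T$; and $\sup_{P\in\mathcal{Q}^T}E_P(C_T)=\sup_{P\in\mathcal{Q}^T}E_P U^-(\cdot,\tfrac14)<\infty$ is Assumption \ref{Uminus} with $r=\tfrac14$.

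For the first inequality I would rerun the chain of estimates \eqref{step1} with $\omega$ replaced by $\omega^T\in\Omega^T_{Dom}$ and $V$ replaced by $U$. For $x\geq\tfrac12$ and $\lambda\geq1$ this uses only concavity (the slope of $U(\omega^T,\cdot)$ between $\tfrac14$ and $x$ dominates that between $x$ and $\lambda x$), the elementary bound $x/(x-\tfrac14)\leq2$ on $[\tfrac12,\infty)$, the inequalities $-U(\omega^T,x)\leq U^-(\omega^T,x)\leq U^-(\omega^T,\tfrac14)$, and monotonicity $U(\omega^T,x)\leq U(\omega^T,x+\tfrac12)$; for $0\leq x\leq\tfrac12$ one first writes $U(\omega^T,\lambda x)\leq U(\omega^T,\lambda(x+\tfrac12))$ and applies the case just treated at $x+\tfrac12\geq\tfrac12$; for $x<0$ both sides are handled trivially since $U(\omega^T,\lambda x)=-\infty$. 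Unlike in Proposition \ref{ae1}, the degenerate branch where $V(\omega,\cdot)\equiv+\infty$ on its domain does not occur here, because $U(\omega^T,\cdot)$ is proper for every $\omega^T$, so a single computation suffices.

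The second inequality then follows from the first by a sign split. Fix $\omega^T\in\Omega^T_{Dom}$, $\lambda\geq1$, $x\in\mathbb{R}$. If $U(\omega^T,\lambda x)\leq0$ then $U^+(\omega^T,\lambda x)=0$, which is dominated by the non-negative right-hand side $2\lambda(U^+(\omega^T,x+\tfrac12)+C_T(\omega^T))$. If $U(\omega^T,\lambda x)>0$ then $U^+(\omega^T,\lambda x)=U(\omega^T,\lambda x)$, and applying the first inequality together with $U\leq U^+$ gives $U^+(\omega^T,\lambda x)\leq 2\lambda(U(\omega^T,x+\tfrac12)+C_T(\omega^T))\leq 2\lambda(U^+(\omega^T,x+\tfrac12)+C_T(\omega^T))$.

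I do not anticipate a genuine obstacle: everything is pointwise in $\omega^T$ on the Borel set $\Omega^T_{Dom}$, so no measurable selection is needed and the measurability of $C_T$ is immediate from Definition \ref{utilite}; the only care required is to match the numerical constants $2$ and $\tfrac12$ appearing in the statement with those produced by \eqref{step1}. As an equivalent route, one could first replace $U$ by the Borel, everywhere-usc, full-domain modification described in Remark \ref{uscoupas} and then quote Proposition \ref{ae1} directly with $\overline{\Omega}=\Omega^T$, $\mathcal{G}=\mathcal{B}(\Omega^T)$, $\mathcal{Q}=\mathcal{Q}^T$ and $V=U$, reading the conclusion off on $\Omega^T_{Dom}$, where the modification agrees with $U$.
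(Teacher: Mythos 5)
Your proposal is correct and follows essentially the same route as the paper: the paper's proof simply invokes Proposition \ref{ae1} with $V=U$, $\mathcal{G}=\mathcal{B}(\Omega^T)$ and $C_T(\cdot)=U^-(\cdot,\tfrac14)$ (Lemma \ref{Ucontinuity} supplying the properties on $\Omega^T_{Dom}$), and notes the second inequality follows since $C_T\geq 0$, which is exactly your sign-split. Your rerunning of \eqref{step1} pointwise, with the observation that the improper branch cannot occur since $U(\omega^T,\cdot)$ is proper, is just an unpacked version of that citation.
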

\begin{proof}
This is just Proposition \ref{ae1} for $V=U$ and $\mathcal{G}=\mathcal{B}(\O^{T})$  (recall Lemma \ref{Ucontinuity}), setting $C_{T}(\cdot)=U^{-}\left(\cdot, \frac{1}{4}\right)$. The second inequality follows immediately since $C_{T}$ is non-negative.
\end{proof}\\

\begin{proposition}
\label{propufini}
Let Assumptions   \ref{Uminus} and \ref{Uplus}   hold true and fix some $x \geq 0$. Then \begin{align*}
M_{x}:= \sup_{P \in \mathcal{Q}^{T}} \sup_{\phi \in \phi(x,P)} E_{P} U^+ (\cdot, V_{T}^{x,\phi}(\cdot))<\infty.\end{align*}
Moreover,
 $\Phi(x,U,P)= \Phi(x,P)$ for all $P \in \mathcal{Q}^{T}$ and thus $\Phi(x,U,\mathcal{Q}^{T})= \Phi(x,\mathcal{Q}^{T})$.
\end{proposition}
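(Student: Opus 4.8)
The plan is to reduce everything, via the elasticity estimate of Proposition~\ref{ae}, to Assumptions~\ref{Uplus} and~\ref{Uminus}. First I would record that for every $P \in \mathcal{Q}^{T}$ and $\phi \in \Phi(x,P)$ the expectation $E_{P}U^{+}(\cdot,V_{T}^{x,\phi}(\cdot))$ is well-defined in $[0,\infty]$ (the integrand is non-negative and universally-measurable, since $S$ is universally-adapted and $U$ is jointly measurable by Lemma~\ref{412}, using Definition~\ref{utilite} and Remark~\ref{uscoupas}; this is already implicit in Definition~\ref{defGeneral}). Then I would reduce to the case $x \geq 1$: if $\phi \in \Phi(x,P)$ then $V_{T}^{\max(x,1),\phi} = V_{T}^{x,\phi} + (\max(x,1)-x) \geq 0$ $P$-a.s., so $\phi \in \Phi(\max(x,1),P)$, and $V_{T}^{x,\phi} \leq V_{T}^{\max(x,1),\phi}$, which together with the monotonicity of $U(\o^{T},\cdot)$ gives $M_{x} \leq M_{\max(x,1)}$.

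So fix $x \geq 1$, $P \in \mathcal{Q}^{T}$ and $\phi \in \Phi(x,P)$, and pass to the rescaled strategy $\psi := \frac{1}{2x}\phi \in \Phi$. A direct computation gives $V_{T}^{1,\psi} = \frac{1}{2} + \frac{1}{2x}V_{T}^{x,\phi}$, hence $V_{T}^{1,\psi} \geq \frac{1}{2} > 0$ $P$-a.s. (since $V_{T}^{x,\phi} \geq 0$ $P$-a.s.), so $\psi \in \Phi(1,P)$, and also $2x\,V_{T}^{1,\psi} = x + V_{T}^{x,\phi} \geq V_{T}^{x,\phi}$. The reason for rescaling by $\frac{1}{2x}$ rather than by $\frac{1}{x}$ is precisely to keep the rescaled wealth $\geq \frac{1}{2}$, which lets me use the sharp form of the elasticity bound established in the proof of Proposition~\ref{ae} (equivalently of Proposition~\ref{ae1}): for $\o^{T} \in \Omega^{T}_{Dom}$, $\lambda \geq 1$ and $y \geq \frac{1}{2}$,
\[
U^{+}(\o^{T},\lambda y) \leq 2\lambda\big(U^{+}(\o^{T},y) + C_{T}(\o^{T})\big), \qquad C_{T} = U^{-}(\cdot,\tfrac{1}{4}).
\]
Applying this with $\lambda = 2x$ and $y = V_{T}^{1,\psi}(\o^{T})$, and using that $U(\o^{T},\cdot)$ is non-decreasing, yields, on the $P$-full measure set $\Omega^{T}_{Dom} \cap \{V_{T}^{1,\psi} \geq \tfrac{1}{2}\}$,
\[
U^{+}\big(\o^{T},V_{T}^{x,\phi}(\o^{T})\big) \leq U^{+}\big(\o^{T},2x\,V_{T}^{1,\psi}(\o^{T})\big) \leq 4x\Big(U^{+}\big(\o^{T},V_{T}^{1,\psi}(\o^{T})\big) + C_{T}(\o^{T})\Big).
\]

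Integrating with respect to $P$ then gives, by Assumption~\ref{Uplus} (recall $\psi \in \Phi(1,P)$) and Assumption~\ref{Uminus} (with $r = \tfrac{1}{4}$),
\[
E_{P}U^{+}(\cdot,V_{T}^{x,\phi}(\cdot)) \leq 4x\Big(\sup_{Q \in \mathcal{Q}^{T}}\sup_{\phi' \in \Phi(1,Q)}E_{Q}U^{+}(\cdot,V_{T}^{1,\phi'}(\cdot)) + \sup_{Q \in \mathcal{Q}^{T}}E_{Q}U^{-}(\cdot,\tfrac{1}{4})\Big) < \infty,
\]
and this bound is uniform in $P$ and in $\phi \in \Phi(x,P)$; hence $M_{x} < \infty$ for $x \geq 1$, and therefore for all $x \geq 0$. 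For the remaining assertions: for any $P \in \mathcal{Q}^{T}$ and $\phi \in \Phi(x,P)$ the above shows $E_{P}U^{+}(\cdot,V_{T}^{x,\phi}(\cdot)) \leq M_{x} < \infty$, so $\phi \in \Phi(x,U,P)$ by Definition~\ref{defGeneral}; together with the trivial inclusion $\Phi(x,U,P) \subset \Phi(x,P)$ this proves $\Phi(x,U,P) = \Phi(x,P)$ for every $P \in \mathcal{Q}^{T}$, and intersecting over $P \in \mathcal{Q}^{T}$ gives $\Phi(x,U,\mathcal{Q}^{T}) = \Phi(x,\mathcal{Q}^{T})$. I expect the only genuine difficulty to be the elasticity step: one must both choose the factor $\frac{1}{2x}$ and appeal to the unshifted form of the bound from the proof of Proposition~\ref{ae} — with the shifted bound (second argument $y + \frac{1}{2}$) the same manipulation runs into a circular estimate.
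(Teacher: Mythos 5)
Your proof is correct and follows essentially the same route as the paper: reduce to $x\geq 1$, rescale to $\psi=\frac{\phi}{2x}\in\Phi(1,P)$, apply the elasticity estimate of Proposition~\ref{ae} with $\lambda=2x$, and conclude with Assumptions~\ref{Uplus} and~\ref{Uminus} (with $r=\frac14$), exactly as in the paper. Only your closing remark is off: the shifted bound is not circular — the paper applies it at $y=\frac{1}{2}+\frac{1}{2x}V_{T}^{x,\phi}=V_{T}^{\frac12,\psi}$, so that $y+\frac{1}{2}=V_{T}^{1,\psi}$ and the stated form of Proposition~\ref{ae} suffices without invoking the unshifted inequality from its proof.
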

\begin{proof}
Fix some $P \in \mathcal{Q}^{T}$.  From Assumption \ref{Uplus} we know that $ \Phi(1,P)= \Phi(1,U,P)$ and $M_{1}<\infty$.  Let  $x \geq 0$ and $\phi \in \Phi(x,P)$ be fixed. If $x \leq 1$ then $V^{x,\phi}_T \leq V^{1,\phi}_T$, so from Definition \ref{utilite} we get that $M_{x} \leq M_{1} <\infty$ and $ \Phi(x,P)= \Phi(x,U,P)$. If $x \geq 1$, from Proposition \ref{ae} we get that for all $\omega^{T} \in \Omega_{Dom}^{T}$
\begin{small}
$$
 U^{+}(\omega^{T},V^{x,\phi}_T(\omega^{T}))  =  U^{+}\left(\omega^{T}, 2x\left(\frac{1}{2}+ \sum_{t=1}^{T} \frac{\phi_{t}(\o^{t-1})}{2x}\Delta S_{t}(\o^{t})\right)\right)
 \leq   4x \left( U^{+}(\omega^{T},V^{1,\frac{\phi}{2x}}_T(\o^T)) +C_{T}(\omega^{T})\right).$$
 \end{small}
As $\frac{\phi}{2x} \in  \Phi(\frac{1}{2},P) \subset \Phi(1,P)=\Phi(1,U,P)$, we get that $ M_{x} \leq 4x \left(M_{1}+ \sup_{P \in \mathcal{Q}^{T}} E_{P} C_{T}\right)<\infty$ (see  Proposition \ref{ae}). Thus $ \Phi(x,P)= \Phi(x,U,P)$ and the last assertion follows from \eqref{phix}.
\end{proof}\\

We introduce now the dynamic programming procedure. First we set for all $t\in\left\{0,\ldots,T-1\right\}$, $\o^t \in {\O}^t$, $P \in \mathfrak{P}(\O_{t+1})$ and $x \geq 0$
\begin{align}
\label{domainp}
	H_{x}^{t+1}(\o^{t},P)&:= \left\{h \in \mathbb{R}^d,\;  x+ h \Delta S_{t+1}(\o^t, \cdot) \geq 0 \; P\mbox{-a.s.}\right\},\\
\label{domaine}
	\Hc_x^{t+1}(\o^t)&:=  \left\{h \in \mathbb{R}^d,\;  x+ h \Delta S_{t+1}(\o^t, \cdot) \geq 0 \; \mathcal{Q}_{t+1}(\o^{t})\mbox{-q.s.}\right\}, \\
\label{domaineproj} \Dc_x^{t+1}(\o^t)&:= \Hc_x^{t+1}(\o^t) \cap D^{t+1}(\o^t),
\end{align}
where  $D^{t+1}$ was introduced in Definition \ref{DefD}.  For  all $t\in\left\{0,\ldots,T-1\right\}$, $\o^t \in {\O}^t$, $P \in \mathfrak{P}(\O_{t+1})$ and $x < 0$, we set $H_{x}^{t+1}(\o^{t},P)=\Hc_x^{t+1}(\o^t)=\emptyset$.
We  introduce now  the value functions $U_t$ from $\Omega^t \times \mathbb{R} \to \mathbb{R}\cup\{\pm \infty\}$  for all $t\in\{0,\ldots,T\}$. To do that we define the closure of a random function  $F: \Omega^{t} \times \mathbb{R} \to \mathbb{R} \cup \{\pm \infty\}$.  Fix  $\o^{t} \in \O^{t}$, then  $x  \to F_{\o^{t}}(x):=F(\o^{t},x)$ is a real-valued function and its closure is denoted by $\mbox{Cl}\left(F_{\o^{t}}\right)$.  Now  $\mbox{Cl}(F):\Omega^{t} \times \mathbb{R} \to \mathbb{R} \cup \{\pm \infty\}$ is defined  by $
\mbox{Cl}(F)(\o^{t},x):=\mbox{Cl}\left(F_{\o^{t}}\right)(x)$. For $ 0 \leq t \leq T$,  we   set for all $x \in \mathbb{R}$  and  $\o^{t} \in {\Omega}^{t}$
\begin{small}
\begin{align}
\nonumber
  \mathcal{U}_T(\o^{T},x)&:=  U (\o^{T},x)1_{\Omega^T_{Dom} \times [0,\infty) \cup \Omega^T \times (-\infty,0)}(\o^{T},x)\\
 \mathcal{U}_{t}(\o^{t},x)&:=
 \begin{cases}
 \sup_{h \in \mathcal{H}^{t+1}_{x}(\o^{t}) \cap \mathbb{Q}^{d}}\inf_{P \in \mathcal{Q}_{t+1}(\o^{t})}\int_{\O_{t+1}}U_{t+1}(\o^t,\o_{t+1},x+h\Delta S_{t+1}(\o^t,\o_{t+1}))P(d\o_{t+1}),\\
 \mbox{ if $x \geq  0$ and $-\infty$, if $x<0$} \label{vanek}
  \end{cases}\\
 \label{vanek2}
 U_{t}(\o^{t},x)&:=\mbox{Cl}( \mathcal{U}_{t}) (\o^{t},x).
  \end{align}
  \end{small}
Since $\mathcal{U}_T$ is usc (recall Lemma \ref{Ucontinuity}), it is clear that
$U_T= \mathcal{U}_T$. As already mentioned for $t=0$ we drop the dependency in $\o_{0}$ and note  $ U_{0}(x)=U_{0}(\o^{0},x)$. The convention $\infty-\infty=\infty$ is used in the integral in \eqref{vanek} (recall Remark \ref{verslinfinietaudela}), where the intersection with $\mathbb{Q}^d$ is taken since measurability issues are better handled in this way, see the discussion before \citep[Lemma 3.6]{Nutz}.
We  introduce the function $I_{t}: \Omega^{t} \times \mathbb{R} \to [0,\infty]$ which allow us to remove the boundedness assumption of \cite{Nutz} and will be used for integrability issues.
We set $I_T:= U_T^{+}$, then for all $ 0 \leq t \leq T-1$ , $x \in \mathbb{R}$  and  $\o^{t} \in {\Omega}^{t}$
\begin{small}
\begin{align}
 I_{t}(\o^{t},x)&:= 1_{[0,\infty)}(x)
\sup_{h \in \mathbb{R}^{d}} \sup_{P \in \mathcal{Q}_{t+1}(\o^{t})} 1_{ {H}^{t+1}_{x}(\o^{t},P)}(h)\int_{\O_{t+1}}I_{t+1}(\o^t,\o_{t+1},x+1+h\Delta S_{t+1}(\o^t,\o_{t+1}))P(d\o_{t+1}).
\label{Vvanek}
 \end{align}
\end{small}
\begin{lemma}
\label{Dhmes}
Assume that Assumptions \ref{Qanalytic} and \ref{Sass} hold true.
Let $0 \leq t \leq T-1$ be fixed, $G$ be a fixed non-negative, real-valued, $\mathcal{B}_{c}(\Omega^{t})$-measurable random variable
and  consider the following random sets $\Hc^{t+1}: (\o^{t},x) \twoheadrightarrow \Hc^{t+1}_{x}(\o^t) $ and $\Dc_{G}^{t+1}: \o^{t} \twoheadrightarrow \Dc_{G(\o^{t})}^{t+1}(\o^t)$. They are closed valued, $\mbox{Graph}(\mathcal{H}^{t+1}) \in \mathcal{C}A(\Omega^{t} \times \mathbb{R}\times \mathbb{R}^{d})$ and $\mbox{Graph}(\mathcal{D}_{G}^{t+1}) \in  \mathcal{B}_{c}(\O^{t})\otimes \mathcal{B}(\mathbb{R}^{d})$. Moreover $(\o^{t},P,h,x) \to 1_{H^{t+1}_{x}(\o^{t},P)}(h)$ is  $\mathcal{B}(\O^{t})\otimes  \mathcal{B}(\mathfrak{P}(\O_{t+1}))\otimes \mathcal{B}(\mathbb{R}^{d})\otimes \mathcal{B}(\mathbb{R})$-measurable.
\end{lemma}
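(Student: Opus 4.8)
The plan is to encode every object through the single ``robust conditional essential infimum''
\[
g(\o^{t},h):=\inf_{P\in\mathcal{Q}_{t+1}(\o^{t})}\,{\rm ess}\inf_{P}\left(h\Delta S_{t+1}(\o^{t},\cdot)\right)\ \in[-\infty,+\infty),
\]
and to exploit its concavity in $h$. The elementary but crucial observation is that, for $x\geq 0$, $h\in\Hc^{t+1}_{x}(\o^{t})$ iff $h\Delta S_{t+1}(\o^{t},\cdot)\geq -x$ $P$-a.s. for every $P\in\mathcal{Q}_{t+1}(\o^{t})$, i.e. iff ${\rm ess}\inf_{P}(h\Delta S_{t+1}(\o^{t},\cdot))\geq -x$ for every such $P$, i.e. iff $g(\o^{t},h)\geq -x$; thus $\Hc^{t+1}_{x}(\o^{t})=\{h\in\mathbb{R}^{d}:\ -g(\o^{t},h)\leq x\}$ for $x\geq0$ (and $=\emptyset$ for $x<0$), while $\Dc^{t+1}_{G(\o^{t})}(\o^{t})=\Hc^{t+1}_{G(\o^{t})}(\o^{t})\cap D^{t+1}(\o^{t})$.

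First I would record the pointwise structure of $g$. Since $\Delta S_{t+1}$ is finite-valued, for fixed $\o^{t}$ and $P$ the map $h\mapsto{\rm ess}\inf_{P}(h\Delta S_{t+1}(\o^{t},\cdot))$ is positively homogeneous and superadditive, hence concave on $\mathbb{R}^{d}$, and upper semicontinuous (if $h_{n}\to h$ with $h_{n}\Delta S_{t+1}(\o^{t},\cdot)\geq c$ $P$-a.s. along a subsequence, passing to the limit on a common full-measure set gives $h\Delta S_{t+1}(\o^{t},\cdot)\geq c$ $P$-a.s.). Infimising over $P\in\mathcal{Q}_{t+1}(\o^{t})$ preserves concavity and upper semicontinuity, so $h\mapsto-g(\o^{t},h)$ is convex, lower semicontinuous and $(-\infty,+\infty]$-valued. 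In particular $\Hc^{t+1}_{x}(\o^{t})=\{-g(\o^{t},\cdot)\leq x\}$ is closed for every $x$, and $\Dc^{t+1}_{G(\o^{t})}(\o^{t})$ is closed, being the intersection of $\Hc^{t+1}_{G(\o^{t})}(\o^{t})$ with the closed set $D^{t+1}(\o^{t})$ (Lemma \ref{Dmeasurability}); this gives the closed-valuedness.

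The measurability claims all reduce to properties of $\z(\o^{t},P,h):={\rm ess}\inf_{P}(h\Delta S_{t+1}(\o^{t},\cdot))$ and of $g$. Because $S$ is Borel-adapted (Assumption \ref{Sass}), for each rational $c'$ the map $(\o^{t},P,h)\mapsto P(h\Delta S_{t+1}(\o^{t},\cdot)\geq c')$ is Borel on $\O^{t}\times\mathfrak{P}(\O_{t+1})\times\mathbb{R}^{d}$ (integration of a Borel function against a Borel stochastic kernel, see \citep[Proposition 7.29]{BS}); since $\{\z\geq c\}=\bigcap_{c'\in\mathbb{Q},\ c'<c}\{P(h\Delta S_{t+1}(\o^{t},\cdot)\geq c')=1\}$, $\z$ is Borel. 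Hence $1_{H_{x}^{t+1}(\o^{t},P)}(h)=1_{[0,\infty)}(x)\,1_{\{\z(\o^{t},P,h)+x\geq 0\}}$ is Borel-measurable, which (Borel $\sigma$-algebras of Polish spaces being stable under countable products) is the last assertion. Next, since $\mbox{Graph}(\mathcal{Q}_{t+1})$ is analytic (Assumption \ref{Qanalytic}), the set $\{g<c\}$ is the projection onto $\O^{t}\times\mathbb{R}^{d}$ of the analytic set $\{(\o^{t},P,h):\ P\in\mathcal{Q}_{t+1}(\o^{t}),\ \z(\o^{t},P,h)<c\}$, hence analytic; so $g$ is lower semianalytic on $\O^{t}\times\mathbb{R}^{d}$, $(\o^{t},x,h)\mapsto g(\o^{t},h)+x$ is lower semianalytic on $\O^{t}\times\mathbb{R}\times\mathbb{R}^{d}$, $\{g(\o^{t},h)+x<0\}$ is analytic, and $\mbox{Graph}(\Hc^{t+1})=(\O^{t}\times[0,\infty)\times\mathbb{R}^{d})\cap\{g(\o^{t},h)+x\geq0\}$ is coanalytic.

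The real obstacle is the membership $\mbox{Graph}(\Dc^{t+1}_{G})\in\mathcal{B}_{c}(\O^{t})\otimes\mathcal{B}(\mathbb{R}^{d})$: coanalyticity of $\mbox{Graph}(\Hc^{t+1})$ only places it in $\mathcal{B}_{c}(\O^{t}\times\mathbb{R}\times\mathbb{R}^{d})$, which is strictly larger than the product $\sigma$-algebra, and one cannot pull it back through the merely $\mathcal{B}_{c}(\O^{t})$-measurable substitution $x\mapsto G(\o^{t})$ while staying coanalytic. This is exactly the situation repaired by the corrected version of \citep[Lemma 4.12]{BN}: $-g(\cdot,h)$ is $\mathcal{B}_{c}(\O^{t})$-measurable for each fixed $h$ (a section of the lower semianalytic function $g$ is lower semianalytic), and $-g(\o^{t},\cdot)$ is convex and lower semicontinuous for each $\o^{t}$, so Lemma \ref{412} yields that $-g$ is $\mathcal{B}_{c}(\O^{t})\otimes\mathcal{B}(\mathbb{R}^{d})$-measurable. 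Therefore $(\o^{t},h)\mapsto-g(\o^{t},h)-G(\o^{t})$ is $\mathcal{B}_{c}(\O^{t})\otimes\mathcal{B}(\mathbb{R}^{d})$-measurable ($G$ being real-valued and $\mathcal{B}_{c}(\O^{t})$-measurable), so $\{(\o^{t},h):\ h\in\Hc^{t+1}_{G(\o^{t})}(\o^{t})\}=\{-g-G\leq 0\}\in\mathcal{B}_{c}(\O^{t})\otimes\mathcal{B}(\mathbb{R}^{d})$; intersecting with $\mbox{Graph}(D^{t+1})\in\mathcal{B}_{c}(\O^{t})\otimes\mathcal{B}(\mathbb{R}^{d})$ (Lemma \ref{Dmeasurability}) gives $\mbox{Graph}(\Dc^{t+1}_{G})\in\mathcal{B}_{c}(\O^{t})\otimes\mathcal{B}(\mathbb{R}^{d})$.
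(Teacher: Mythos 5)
Your proof is correct, and for the only genuinely delicate claim -- $\mbox{Graph}(\mathcal{D}_{G}^{t+1}) \in \mathcal{B}_{c}(\O^{t})\otimes \mathcal{B}(\mathbb{R}^{d})$ -- it takes a different route from the paper. The paper stays with the probabilistic quantities $P\left(x+h\Delta S_{t+1}(\o^t,\cdot)\geq 0\right)$ (Lemma \ref{LemmaAA2} gives the last assertion and, via \citep[Proposition 7.47 p179]{BS}, the coanalyticity of $\mbox{Graph}(\mathcal{H}^{t+1})$), and then obtains product-measurability by a rather heavy combinatorial argument: truncation $\Delta S_{k,t+1}$, the identity $\mbox{Ri}(\mathcal{H}^{t+1}_{k,x})=\bigcup_{r\in\mathbb{Q},r>0}\mathcal{H}^{r,t+1}_{k,x}$ (which uses Lemma \ref{rast2} and needs the boundedness of $\Delta S_{k,t+1}$), Lemma \ref{lambda} $(i)$--$(ii)$ to pass from coanalytic graphs of open-valued sets to $\mathcal{B}_{c}(\O^{t})\otimes\mathcal{B}(\mathbb{R}^{d})$ and back through closures, and finally a rational sandwich of $G$ exploiting the monotonicity of $x\mapsto \mathcal{H}^{t+1}_x(\o^t)$. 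You instead encode everything in the robust essential infimum $g(\o^t,h)$: closedness and the last assertion come out as in the paper (Borel integration against the kernel $P$, \citep[Proposition 7.29 p144]{BS}), coanalyticity of $\mbox{Graph}(\mathcal{H}^{t+1})$ from lower semianalyticity of $g$, and then the joint $\mathcal{B}_{c}(\O^{t})\otimes\mathcal{B}(\mathbb{R}^{d})$-measurability of $-g$ follows in one stroke from its convexity and lower semicontinuity in $h$ via the corrected \citep[Lemma 4.12]{BN}, i.e.\ Lemma \ref{412}, after which the substitution $x=G(\o^{t})$ is immediate (here you rightly use that $G\geq 0$ and is real-valued, and that sections of lsa functions are lsa, Lemma \ref{partial}). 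This is a cleaner localization of the measurability difficulty and a fitting use of Lemma \ref{412}, whereas the paper's argument avoids invoking the normal-integrand machinery at this spot and works directly with the sets $\mathcal{H}^{r,t+1}_{k,x}$; your pointwise verifications (usc and superadditivity of $h\mapsto \esssup$-type essential infimum, stability of concavity and usc under the infimum over $P$, the $x<0$ convention) are all sound, so I see no gap.
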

\begin{proof}
It is clear that  $\Hc^{t+1}$ and $\mathcal{D}^{t+1}_{G}$ are closed valued. Lemma \ref{LemmaAA2}  will be in force. First it allows to prove the last assertion since $ \left\{ (\o^{t},P,h,x),\; P(x+ h \Delta S_{t+1}(\o^t, \cdot) \geq 0)=1\right\}    \in \mathcal{B}(\O^{t}) \otimes \mathcal{B}(\mathfrak{P}(\O_{t+1})) \otimes \mathcal{B}(\mathbb{R}^{d}) \otimes \mathcal{B}(\mathbb{R})$. Then it shows that
\begin{align*}
\mbox{Graph}(\mathcal{H}^{t+1}) & = & \left\{(\o^{t},x,h),  \inf_{P\in \mathcal{Q}_{t+1}(\o^{t})} P\left(x+ h \Delta S_{t+1}(\o^t, \cdot) \geq 0\right)=1\right\} \in \mathcal{C}A(\Omega^{t} \times \mathbb{R}\times \mathbb{R}^{d}).
\end{align*}
Fix some $x \in \mathbb{R}$. For any integer $k \geq 1$, $r \in \mathbb{Q}$, $r > 0$ we introduce the following $\mathbb{R}^{d}$-valued random variable and random set $ \Delta{S}_{k,t+1}(\cdot):= \Delta S_{t+1}(\cdot)1_{\{|\Delta S_{t+1}(\cdot)| \leq k\}}(\cdot)$  and   $\mathcal{H}^{r,t+1}_{k,x}(\o^t) := \left\{h \in \mathbb{R}^d,\;  x+  \Delta S_{k,t+1}(\o^t, \cdot) \geq r \; \mathcal{Q}_{t+1}(\o^{t})\mbox{-q.s.}\right\}$  for all  $\o^{t} \in \O^{t}$. In the sequel, we will write $ \mathcal{H}^{t+1}_{k,x}(\o^t)$ instead of   $\mathcal{H}^{0,t+1}_{k,x}(\o^t)$. We first prove that $\mbox{Graph} \left(\mathcal{H}^{t+1}_{x}\right)\in \mathcal{B}_{c}(\O^{t})  \otimes \mathcal{B}(\mathbb{R}^{d})$ (recall \eqref{domaine}).
Since  $\mathcal{H}^{t+1}_{x}(\cdot)= \bigcap_{\substack{k \in \mathbb{N},\;k\geq 1}} \mathcal{H}^{t+1}_{k,x}(\cdot)$, it is enough to prove that $\mbox{Graph} \left( \mbox{Ri}(\mathcal{H}^{t+1}_{k,x})\right) \in \mathcal{B}_{c}(\O^{t}) \otimes \mathcal{B}(\mathbb{R}^{d})$ for any fixed  $k \geq 1$. Indeed from Lemma \ref{rast2}, for all $\o^{t} \in \O^{t}$, $\overline{\mbox{Ri}(\mathcal{H}^{t+1}_{k,x})(\o^{t})}=\mathcal{H}^{t+1}_{k,x}(\o^{t})$   and Lemma \ref{lambda} $i)$ applies. Since
$\Delta S_{k,t+1}$  is bounded, we also get for all $\o^{t} \in \O^{t}$ that
$
\mbox{Ri}(\mathcal{H}^{t+1}_{k,x})(\o^t) =  \bigcup_{\substack{r \in \mathbb{Q},\;r > 0}} \mathcal{H}^{r,t+1}_{k,x}(\o^{t})$.
Using  Lemmata \ref{LemmaAA2} and \ref{partial}  we obtain that for all $r \in \mathbb{Q}$, $r > 0$, $
 \mbox{Graph} \left( \mathcal{H}^{r,t+1}_{k,x}\right) $ and also  $\mbox{Graph} \left( \mbox{Ri}(\mathcal{H}^{t+1}_{k,x})\right)$ are coanalytic sets. Lemma \ref{lambda} $ii)$ implies that $\mbox{Graph} \left( \mbox{Ri}(\mathcal{H}^{t+1}_{k,x})\right) \in \mathcal{B}_{c}(\O^{t}) \otimes \mathcal{B}(\mathbb{R}^{d})$.  \\
Now let
$ \Hc_{G}^{t+1}: \o^{t}   \twoheadrightarrow \Hc^{t+1}_{G(\o^{t})}(\o^t)$ then   it is easy to see that
\begin{align*}
 \mbox{Graph}(\mathcal{H}_{G}^{t+1})&= \bigcap_{\substack{n \in \mathbb{N},\;n\geq 1}} \bigcup_{\substack{q \in \mathbb{Q},\;q \geq0}} \left\{(\o^{t},h) \in \O^{t} \times \mathbb{R}  \times \mathbb{R}^{d},\; q \leq G(\o^{t}) \leq q+\frac{1}{n}, h \in \mbox{Graph}\left(\mathcal{H}^{t+1}_{q+\frac{1}{n}}\right)\right\}\\
 &\in \mathcal{B}_{c}(\O^{t})\otimes \mathcal{B}(\mathbb{R}^{d}),
 \end{align*}
since $G$ is $\mathcal{B}_{c}(\O^{t})$-measurable. So using Lemma \ref{Dmeasurability} and that $\mbox{Graph}(\mathcal{D}_{G}^{t+1})=\mbox{Graph}(\mathcal{H}_{G}^{t+1}) \cap \mbox{Graph}({D}^{t+1})$, we obtain that   $\mbox{Graph}(\mathcal{D}_{G}^{t+1}) \in \mathcal{B}_{c}(\O^{t})\otimes \mathcal{B}(\mathbb{R}^{d})$, which concludes the proof.
\end{proof}\\
We introduce for all $r \in \mathbb{Q}$, $r>0$
\begin{align}
\label{Crec0}
J^{r}_T(\o^T) &:=  U_T^{-}(\o^T,r), \mbox{ for } \o^T \in {\O}^T, \\
\label{Crec}
J^{r}_{t}(\o^{t}) & := \sup_{P \in \mathcal{Q}_{t+1}(\o^{t})} \int_{\O_{t+1} } J^{r}_{t+1}(\omega^{t}, \o_{t+1})P(d\omega_{t+1}) \mbox{ for } t \in \{0,\dots,T-1\}, \o^t \in \O^{t}.
\end{align}
As usual we will write $J_{0}^{r}=J_{0}^{t}(\o^{0})$.
\begin{proposition}
\label{ToolJC}
Assume that Assumptions \ref{Qanalytic} and \ref{Uminus} hold true.
Then for any $t \in \{0,\dots,T\}$, $r \in \mathbb{Q}$, $r>0$,  the function
 $\o^{t}   \to J^{r}_{t}(\o^{t})$ is well defined, non-negative, usa and verifies  $\sup_{P \in \mathcal{Q}^{t}} E_{P} J^{r}_{t} <\infty$. Furthermore, there exists some $\mathcal{Q}^{t}$-full measure set $ \widehat{\Omega}^{t} \in \mathcal{C}A(\O^{t})$ on which   $J^{r}_{t}(\cdot)<\infty$.
 \end{proposition}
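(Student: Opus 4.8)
The plan is to prove all the assertions simultaneously by backward induction on $t\in\{T,T-1,\dots,0\}$. The base case $t=T$ is immediate: by Lemma \ref{Ucontinuity} and the definition of $\mathcal{U}_T$, the function $J^r_T=U_T^-(\cdot,r)$ satisfies $0\le J^r_T\le U^-(\cdot,r)$, is $\mathcal{B}(\Omega^T)$-measurable hence usa, and --- since $r\in\mathbb{Q}$, $r>0$ --- Assumption \ref{Uminus} gives $\sup_{P\in\mathcal{Q}^T}E_P J^r_T\le\sup_{P\in\mathcal{Q}^T}E_P U^-(\cdot,r)<\infty$; the coanalytic full-measure set is then obtained exactly as in the general step below.

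Now assume the statement at level $t+1$. First I would establish the measurability of $J^r_t$. By the induction hypothesis $J^r_{t+1}$ is a non-negative usa function on $\Omega^{t+1}=\Omega^t\times\Omega_{t+1}$, so $-J^r_{t+1}\le0$ is lsa and \citep[Proposition 7.48 p180]{BS} applies to it without any $+\infty-\infty$ issue; hence $g(\o^t,P):=\int_{\Omega_{t+1}}J^r_{t+1}(\o^t,\o_{t+1})P(d\o_{t+1})\in[0,\infty]$ defines a non-negative usa function on $\Omega^t\times\mathfrak{P}(\Omega_{t+1})$. For every $c\in\mathbb{R}$,
\[
\{\o^t\in\Omega^t:\,J^r_t(\o^t)>c\}=\mbox{proj}_{\Omega^t}\Big(\mbox{Graph}(\mathcal{Q}_{t+1})\cap\{(\o^t,P):\,g(\o^t,P)>c\}\Big),
\]
which is the projection of the intersection of two analytic sets (Assumption \ref{Qanalytic} and $g$ usa), hence analytic; since $\mathcal{Q}_{t+1}(\o^t)\neq\emptyset$ for all $\o^t$ (Assumption \ref{Qanalytic}), $J^r_t$ is well defined, non-negative and usa.

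Next I would handle the integrability. Fix $Q_t=Q_1\otimes q_2\otimes\cdots\otimes q_t\in\mathcal{Q}^t$. For each $n\ge1$ apply the $\varepsilon$-optimal selection theorem \citep[Proposition 7.50 p184]{BS} with $\varepsilon=1/n$ to the analytic set $\mbox{Graph}(\mathcal{Q}_{t+1})$ and the usa function $g$: this yields an analytically (hence universally) measurable $q_{t+1}^n:\Omega^t\to\mathfrak{P}(\Omega_{t+1})$ with $q_{t+1}^n(\o^t)\in\mathcal{Q}_{t+1}(\o^t)$ and $g(\o^t,q_{t+1}^n(\o^t))\ge\min(J^r_t(\o^t),n)-1/n$ for all $\o^t$, the truncation absorbing the case $J^r_t(\o^t)=\infty$. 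Then $q_{t+1}^n\in\mathcal{S}K_{t+1}$ and $Q_t\otimes q_{t+1}^n\in\mathcal{Q}^{t+1}$ by \eqref{Qstar}, so Fubini's theorem \citep[Proposition 7.45 p175]{BS} applied to the non-negative $\mathcal{B}_c(\Omega^{t+1})$-measurable function $J^r_{t+1}$ gives
\[
\int_{\Omega^t}\Big(\min(J^r_t(\o^t),n)-\tfrac1n\Big)Q_t(d\o^t)\le\int_{\Omega^t}g(\o^t,q_{t+1}^n(\o^t))Q_t(d\o^t)=E_{Q_t\otimes q_{t+1}^n}J^r_{t+1}\le\sup_{P\in\mathcal{Q}^{t+1}}E_P J^r_{t+1}=:K.
\]
Letting $n\to\infty$ and using monotone convergence yields $E_{Q_t}J^r_t\le K$, with $K<\infty$ by the induction hypothesis; since $Q_t\in\mathcal{Q}^t$ was arbitrary, $\sup_{P\in\mathcal{Q}^t}E_P J^r_t\le K<\infty$. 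Finally, for every $t$, set $\widehat\Omega^t:=\{\o^t\in\Omega^t:\,J^r_t(\o^t)<\infty\}$; its complement $\{J^r_t=\infty\}=\bigcap_{n\ge1}\{J^r_t>n\}$ is a countable intersection of analytic sets, hence analytic, so $\widehat\Omega^t\in\mathcal{C}A(\Omega^t)\subset\mathcal{B}_c(\Omega^t)$, and $E_P J^r_t<\infty$ forces $P(J^r_t=\infty)=0$ for all $P\in\mathcal{Q}^t$, so $\{J^r_t=\infty\}$ is $\mathcal{Q}^t$-polar and $\widehat\Omega^t$ is of $\mathcal{Q}^t$-full measure.

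I expect the delicate point to be the integrability step: there is in general no measurable maximiser $\o^t\mapsto q_{t+1}^*(\o^t)\in\mathcal{Q}_{t+1}(\o^t)$ of $g(\o^t,\cdot)$, so one must route the estimate through an $\varepsilon$-optimal analytically measurable selector, with the truncation $\min(\cdot,n)$ needed because $J^r_t(\o^t)$ may a priori equal $+\infty$ before the outer supremum over $\mathcal{Q}^t$ is taken; the point that makes the bound propagate is that the concatenation of this selector with $Q_t$ genuinely belongs to $\mathcal{Q}^{t+1}$, so that the induction hypothesis at level $t+1$ can be invoked.
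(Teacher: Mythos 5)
Your proof is correct and follows essentially the same route as the paper: backward induction, \citep[Propositions 7.48 and 7.50 pp180-184]{BS} together with Assumption \ref{Qanalytic} for the upper-semianalyticity of $(\o^t,P)\to\int J^r_{t+1}dP$ and of $J^r_t$, an analytically-measurable $\varepsilon$-optimal selector, the concatenation $Q_t\otimes q^n_{t+1}\in\mathcal{Q}^{t+1}$ and Fubini; your truncation $\min(J^r_t,n)$ with $\varepsilon=1/n$ and monotone convergence merely reorders the paper's argument, which instead first proves by contradiction (via the $1/\varepsilon$ branch of the selector) that $\{J^r_t<\infty\}$ has $\mathcal{Q}^t$-full measure and then deduces the expectation bound. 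The only adjustment needed is that $\widehat{\Omega}^{t}$ must not depend on $r$ (it is used later simultaneously for all $r$), so set $\widehat{\Omega}^{t}:=\bigcap_{r\in\mathbb{Q},\,r>0}\{J^{r}_{t}<\infty\}$, which remains coanalytic and of $\mathcal{Q}^{t}$-full measure.
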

\begin{proof}
We proceed by induction on $t$. Fix some $r \in \mathbb{Q}$, $r>0$.  For $t=T$, $J^{r}_T(\cdot)=U_T^{-}(\cdot,r)$ is non negative and usa (see Definition \ref{utilite}, Lemma \ref{Ucontinuity} and \eqref{analyticset}). We have that\\ $\sup_{P \in \mathcal{Q}^{T}} E_{P} (J^{r}_T)<\infty$  by Assumption \ref{Uminus}.   Using Lemma \ref{Ucontinuity},   $ \widehat{\Omega}^{T}:= \Omega^{T}_{Dom}  \in \mathcal{B}(\O^{T}) \subset \mathcal{C}A(\O^T)$ (see \eqref{analyticset}), $P\left( \widehat{\Omega}^{T}\right)=1$ for all $P \in \mathcal{Q}^{T}$ and  $J_{T}^{r}<\infty$ on $\widehat{\Omega}^{T}$.
Assume
now that for some $ t\leq T-1$, $J^{r}_{t+1}$ is non negative, usa and that $\sup_{P \in \mathcal{Q}^{t+1}}E_{P}(J^{r}_{t+1})<\infty$. As $J^{r}_{t+1}(\cdot) \geq 0$, it is clear that  $J^{r}_{t}(\cdot) \geq 0$.
We  apply  \citep[Proposition 7.48 p180]{BS} \footnote{As we will often use similar arguments in the rest of the paper, we provide some details at this stage.} with $X=\O^{t} \times \mathfrak{P}(\O_{t+1})$,  $Y=\O_{t+1}$, $f(\o^{t},P,\o_{t+1})=J^{r}_{t+1}(\o^{t},\o_{t+1})$ and $q(d\o_{t+1}| \o^{t},P)=P(d\o_{t+1})$.  Indeed $f$ is usa (see \citep[Proposition 7.38 p165]{BS}) ,  $(\o^{t},P) \to P(d\o_{t+1}) \in  \mathfrak{P}(\O_{t+1})$ is a $\mathcal{B}(\O^{t}) \otimes \mathcal{B}(\mathfrak{P}(\O_{t+1}))$-measurable stochastic kernel. So we get that  $j^{r}_{t}:(\omega^{t},P) \to \int_{\Omega_{t+1}} J^{r}_{t+1}(\o^{t},\o_{t+1})P(d\o_{t+1}) $ is usa.  As Assumption \ref{Qanalytic} holds true ($\mbox{Proj}_{\O^{t}} \left(\mbox{Graph} (\mathcal{Q}_{t+1})\right)=\O^{t}$),  \citep[Proposition 7.47 p179]{BS} applies and $\o^{t}  \to  \sup_{P \in \mathcal{Q}_{t+1}(\o^{t})} j^{r}_{t}(\o^{t},P) =J^{r}_{t}(\omega^{t})$ is usa.
We  set  $\Omega_{r}^{t} :=\{ \o^{t} \in \O^{t},\; J^{r}_{t}(\o^t)<\infty\}$, then  ${\Omega}_{r}^{t}  =\bigcup_{n \geq 1}  \{ \o^{t} \in \O^{t},\; J^{r}_{t}(\o^t) \leq n \} \in \mathcal{C}A(\Omega^{t})$.
Fix some $\varepsilon>0$. From  \citep[Proposition 7.50 p184]{BS} (recall Assumption  \ref{Qanalytic}), there exists some analytically-measurable  $p_{\varepsilon}: \o^{t} \to \mathfrak{P}(\O_{t+1})$ ($p_{\varepsilon} \in  \mathcal{S}K_{t+1}$), such that  $p_{\varepsilon}(\cdot,\o^{t}) \in \mathcal{Q}_{t+1}(\o^{t})$ for all $\o^{t} \in \O^{t}$ and
\begin{align}
\label{eqjt}
j^{r}_{t}(\o^{t},p_{\varepsilon})=\int_{\Omega_{t+1}} J^{r}_{t+1}(\o^{t},\o_{t+1})p_{\varepsilon}(d\o_{t+1},\o^{t}) \geq \begin{cases} J^{r}_{t}(\o^t)- \varepsilon \mbox{ if $\o^t \in {\Omega}_{r}^{t}$}\\
\frac{1}{\varepsilon} \; \mbox{otherwise}.
\end{cases}
\end{align} Assume that $\Omega_{r}^{t}$ is not a $\mathcal{Q}^{t}$-full measure set. Then there exists some $P^{*} \in \mathcal{Q}^{t}$ such that $P^{*}({\Omega}_{r}^{t} )<1$.
Set $P^{*}_{\varepsilon}:=P^{*}\otimes p_{\varepsilon} $ then $P^{*}_{\varepsilon} \in \mathcal{Q}^{t+1}$ (see \eqref{Qstar}) and we have that
\begin{align*}
\sup_{P \in \mathcal{Q}^{t+1}} E_{P} J^{r}_{t+1} \geq E_{P^{*}_{\varepsilon}} J^{r}_{t+1}
& \geq \frac{1}{\varepsilon}  (1-P^{*}({\Omega}_{r}^{t} )) - {\varepsilon}P^{*}({\Omega}_{r}^{t} ) .
\end{align*}
As the previous inequality holds true for all $\varepsilon>0$, letting $\varepsilon$ go to $0$ we  obtain that $\sup_{P \in \mathcal{Q}^{t+1}}E_{P}(J^{r}_{t+1})=+\infty$ :  a contradiction and  $\Omega_{r}^{t}$ is a $\mathcal{Q}^{t}$-full measure set.
Now,  for all  $P \in \mathcal{Q}^{t}$, we set $P_{\varepsilon}=P\otimes p_{\varepsilon} \in \mathcal{Q}^{t+1}$ (see \eqref{Qstar}).  Then, using  \eqref{eqjt} we get that
 $$E_{P} J^{r}_{t} -\varepsilon = E_{P} 1_{\Omega^{t}_r} J^{t}_{t} -\varepsilon \leq E_{P_{\varepsilon}} J^{r}_{t+1} \leq  \sup_{P \in \mathcal{Q}^{t+1}}E_{P}(J^{r}_{t+1}).$$
Again, as this is true for all $\varepsilon>0$ and all $P \in \mathcal{Q}^{t}$ we obtain that
$ \sup_{P \in \mathcal{Q}^{t}}E_{P}(J^{r}_{t}) \leq \sup_{P \in \mathcal{Q}^{t+1}}E_{P}(J^{r}_{t+1})<\infty.$
Finally we set
$ \widehat{\Omega}^{t}= \bigcap_{r \in \mathbb{Q},\; r>0} \Omega^{t}_{r}.$
It is clear that $ \widehat{\Omega}^{t}   \in \mathcal{C}A(\Omega^{t}) $ is a $\mathcal{Q}^{t}$-full measure set and that $J_{t}^{r} (\cdot)<\infty$ on $\widehat{\Omega}^{t}$ for all $r \in \mathbb{Q}$, $r>0$. \end{proof}\\

Let $1\leq t\leq T$ be fixed.  We introduce the following notation: for any $\mathcal{B}_{c}(\O^{t-1})$-measurable random variable $G$ and any $P \in \mathcal{Q}^{t}$, $\phi_{t}(G,P)$ is the set of all  $\mathcal{B}_{c}(\O^{t-1})$-measurable random variable $\xi$ (one-step strategy),  such that $G(\cdot) + \xi \Delta S_{t}(\cdot) \geq 0$ $P$-a.s. Propositions \ref{dyn1} to \ref{dyn3}  solve the dynamic programming procedure and hold true under the following set of conditions.
 \begin{align}
\label{amiens}
& \forall \,\o^{t} \in \O^{t},\; U_{t} \left(\o^{t},\cdot\right)  :   \mathbb{R} \rightarrow\mathbb{R} \cup \{\pm \infty\} \mbox{ is non-decreasing, usc and concave on $\mathbb{R}$,}\\
\label{amiens2}
&\forall \,\o^{t} \in \O^{t},\; I_{t} \left(\o^{t},\cdot\right)  :   \mathbb{R} \rightarrow\mathbb{R}  \cup \{+ \infty\} \mbox{ is non-decreasing and non-negative on $\mathbb{R}$},\\
\label{reims}
&U_{t}  \in \mathcal{L}SA(\Omega^{t}\times\mathbb{R}),\\
\label{reims2}
&I_{t} \in  \mathcal{U}SA(\Omega^{t}\times\mathbb{R}),\\
\label{reims3}
&\mbox{$U_{t} \left(\o^{t} ,x\right)  \leq I_{t}(\o^{t},x+1)$ for all $(\o^{t},x) \in \Omega^{t} \times \mathbb{R}$},\\
\label{dimancheR}
 &\sup_{P \in \mathcal{Q}^{t}} \sup_{\xi \in \phi_{t}(G,P)} \int_{\O^{t}}I_{t}(\o^{t},G(\o^{t-1})+ \xi(\o^{t-1})\Delta S_{t}(\o^{t}))P(d\o^{t}) < \infty, \\
\nonumber &\mbox{for any  $G:=x+ \sum_{s=1}^{t-1} \phi_s \Delta S_s$,  where $x \geq 0$, $\left(\phi_s \right)_{1 \leq s \leq t-1}$ is universally-predictible}, \\
\label{bordeaux}
&U_{t}(\o^{t}, r)  \geq  -J^r_{t}(\o^{t})\; \mbox{ for all $\o^{t} \in \O^{t}$, all $r \in \mathbb{Q}$, $r>0$}.
\end{align}
\begin{proposition}\label{dyn1}
Let $0\leq t\leq T-1$ be fixed. Assume that the $NA(\mathcal{Q}^{T})$ condition, that Assumptions \ref{Qanalytic}, \ref{Sass}, \ref{Sass2} hold true and that
  \eqref{amiens}, \eqref{amiens2}, \eqref{reims}, \eqref{reims2}, \eqref{reims3}, \eqref{dimancheR} and \eqref{bordeaux} hold true at stage $t+1$.
Then there exists some $\mathcal{Q}^{t}$-full measure set $\widetilde{\Omega}^{t} \in  \mathcal{B}_{c}(\O^{t})$ such that for all $\o^{t} \in \widetilde{\Omega}^{t}$  the function $(\omega_{t+1},x) \to U_{t+1}(\o^{t},\o_{t+1},x)$ satisfies the assumptions of Theorem \ref{main1} (or Lemmata \ref{fat} and \ref{cont}) with $\overline{\Omega}=\Omega_{t+1}$, $\mathcal{G}=\mathcal{B}_{c}(\O_{t+1})$, $\mathcal{Q}=\mathcal{Q}_{t+1}(\o^{t})$,  $Y(\cdot)=\Delta S_{t+1}(\o^t,\cdot)$, $V(\cdot,\cdot)=U_{t+1}(\omega^{t},\cdot,\cdot)$ where $V$ is defined on $\Omega_{t+1} \times \mathbb{R}$ (shortly called context $t+1$ from now).
\end{proposition}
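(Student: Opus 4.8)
The idea is simply to verify, one after the other, that Assumptions~\ref{Yb}, \ref{D0}, \ref{AOAone}, \ref{samedi}, \ref{vminus} and \ref{dimanche} hold in context~$t+1$ for $\omega^t$ in a suitable $\mathcal{Q}^t$-full measure set, and to take at the very end $\widetilde\Omega^t\in\mathcal{B}_c(\Omega^t)$ as the (finite) intersection of the full measure sets produced along the way. Assumptions~\ref{Yb} and \ref{samedi} will hold for \emph{every} $\omega^t\in\Omega^t$. For \ref{Yb}: Assumption~\ref{Sass2} gives $\Delta S^i_{t+1}(\omega^t,\cdot)\geq -s-S^i_t(\omega^t)$ with $S^i_t(\omega^t)<\infty$, so one takes $b(\omega^t):=s+\sum_i|S^i_t(\omega^t)|+1>0$. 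For \ref{samedi}: the concavity, monotonicity and upper semicontinuity of $x\mapsto U_{t+1}(\omega^t,\omega_{t+1},x)$ are \eqref{amiens} at stage $t+1$; the equality $U_{t+1}(\omega^t,\omega_{t+1},x)=-\infty$ for $x<0$ follows from \eqref{vanek}--\eqref{vanek2} (the closure of a concave function is $-\infty$ outside the closure of its domain); and, for each fixed $x$, the map $\omega_{t+1}\mapsto U_{t+1}(\omega^t,\omega_{t+1},x)$ is a section of the lsa function $U_{t+1}$ (see \eqref{reims}), hence lsa on $\Omega_{t+1}$ and in particular $\mathcal{B}_c(\Omega_{t+1})$-measurable. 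Assumptions~\ref{D0} and \ref{AOAone} will hold on the set $\Omega^t_{NA}\in\mathcal{B}_c(\Omega^t)$ of Proposition~\ref{thelemmamultiper}: there $D^{t+1}(\omega^t)$ is a vector space, and \eqref{valaki} provides $\alpha_t(\omega^t)\in(0,1]$ such that every $h\in D^{t+1}(\omega^t)$ admits $P_h\in\mathcal{Q}_{t+1}(\omega^t)$ with $P_h(h\Delta S_{t+1}(\omega^t,\cdot)\leq-\alpha_t(\omega^t)|h|)\geq P_h(\frac{h}{|h|}\Delta S_{t+1}(\omega^t,\cdot)<-\alpha_t(\omega^t))>\alpha_t(\omega^t)$, which is exactly Assumption~\ref{AOAone} with $\alpha=\alpha_t(\omega^t)$.

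For Assumption~\ref{vminus}, I would fix $r\in\mathbb{Q}$, $r>0$: by \eqref{bordeaux} at stage $t+1$ and the non-negativity of $J^r_{t+1}$ (Proposition~\ref{ToolJC}), $U^-_{t+1}(\omega^t,\omega_{t+1},r)\leq J^r_{t+1}(\omega^t,\omega_{t+1})$, hence by \eqref{Crec}, $\sup_{P\in\mathcal{Q}_{t+1}(\omega^t)}E_P U^-_{t+1}(\omega^t,\cdot,r)\leq \sup_{P\in\mathcal{Q}_{t+1}(\omega^t)}\int_{\Omega_{t+1}}J^r_{t+1}(\omega^t,\cdot)\,dP=J^r_t(\omega^t)$. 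Proposition~\ref{ToolJC} furnishes a $\mathcal{Q}^t$-full measure set $\widehat\Omega^t\in\mathcal{C}A(\Omega^t)$ on which $J^r_t(\omega^t)<\infty$ for every $r\in\mathbb{Q}$, $r>0$, so Assumption~\ref{vminus} holds on $\widehat\Omega^t$.

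Assumption~\ref{dimanche} is the one requiring real work. Fix $P\in\mathcal{Q}_{t+1}(\omega^t)$ and $h\in\mathcal{H}^{t+1}_1(\omega^t)\subset H^{t+1}_1(\omega^t,P)$. Then \eqref{reims3} at stage $t+1$ gives $U^+_{t+1}(\omega^t,\cdot,1+h\Delta S_{t+1}(\omega^t,\cdot))\leq I_{t+1}(\omega^t,\cdot,2+h\Delta S_{t+1}(\omega^t,\cdot))$, and the definition \eqref{Vvanek} of $I_t$ at $x=1$ yields $\int_{\Omega_{t+1}}I_{t+1}(\omega^t,\cdot,2+h\Delta S_{t+1}(\omega^t,\cdot))\,dP\leq I_t(\omega^t,1)$. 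Hence Assumption~\ref{dimanche} holds as soon as $I_t(\omega^t,1)<\infty$, and it remains to prove that $\{I_t(\cdot,1)<\infty\}$ is $\mathcal{Q}^t$-full measure; this is the only step where the \emph{global} condition \eqref{dimancheR} has to be tied back to the \emph{pointwise-in-$\omega^t$} definition \eqref{Vvanek}, and it is the main obstacle.

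I would handle it by a measurable-selection argument patterned on the proof of Proposition~\ref{ToolJC}. First, using \eqref{reims2} at stage $t+1$ together with Lemma~\ref{Dhmes} (graph of $\mathcal{H}^{t+1}$ coanalytic, $(\omega^t,P,h,x)\mapsto 1_{H^{t+1}_x(\omega^t,P)}(h)$ Borel) and \citep[Propositions 7.47, 7.48 p179--180]{BS}, one checks that $I_t$ is upper-semianalytic, so $\{I_t(\cdot,1)=\infty\}\in\mathcal{C}A(\Omega^t)\subset\mathcal{B}_c(\Omega^t)$. If this set were not $\mathcal{Q}^t$-polar there would be $P^*\in\mathcal{Q}^t$ with $P^*(I_t(\cdot,1)=\infty)>0$; applying \citep[Proposition 7.50 p184]{BS} to the supremum defining $I_t(\cdot,1)$ produces, for each $\varepsilon>0$, a universally measurable $\widehat h$ and a kernel $q_{t+1}\in\mathcal{S}K_{t+1}$ with $q_{t+1}(\cdot,\omega^t)\in\mathcal{Q}_{t+1}(\omega^t)$ and $1+\widehat h(\omega^t)\Delta S_{t+1}(\omega^t,\cdot)\geq0$ $q_{t+1}(\cdot,\omega^t)$-a.s. for all $\omega^t$, such that $\int_{\Omega_{t+1}}I_{t+1}(\omega^t,\cdot,2+\widehat h(\omega^t)\Delta S_{t+1}(\omega^t,\cdot))\,q_{t+1}(d\cdot,\omega^t)\geq 1/\varepsilon$ on $\{I_t(\cdot,1)=\infty\}$. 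Since $P^*\otimes q_{t+1}\in\mathcal{Q}^{t+1}$, $\widehat h$ is $\mathcal{B}_c(\Omega^t)$-measurable and $2+\widehat h\Delta S_{t+1}\geq1>0$ $(P^*\otimes q_{t+1})$-a.s., one has $\widehat h\in\phi_{t+1}(2,P^*\otimes q_{t+1})$; integrating the displayed lower bound against $P^*$ (Fubini) and invoking \eqref{dimancheR} at stage $t+1$ with $G\equiv2$ gives $\frac{1}{\varepsilon}P^*(I_t(\cdot,1)=\infty)\leq \sup_{P'\in\mathcal{Q}^{t+1}}\sup_{\xi\in\phi_{t+1}(2,P')}E_{P'}I_{t+1}(\cdot,2+\xi\Delta S_{t+1})<\infty$, a contradiction as $\varepsilon\downarrow0$. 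Hence $\{I_t(\cdot,1)<\infty\}$ is a $\mathcal{Q}^t$-full measure set in $\mathcal{B}_c(\Omega^t)$, and $\widetilde\Omega^t:=\Omega^t_{NA}\cap\widehat\Omega^t\cap\{I_t(\cdot,1)<\infty\}$ does the job. The delicate points to watch are the joint measurable selection of $\widehat h$ and of the kernel under the $P$-dependent constraint $h\in H^{t+1}_1(\omega^t,P)$, and the verification that $\widehat h$ indeed lands in $\phi_{t+1}(2,P^*\otimes q_{t+1})$ (which hinges on shifting the budget level from $1$ to $2$).
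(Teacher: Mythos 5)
Your proposal is correct in substance and follows the paper's overall strategy: Assumptions \ref{Yb} and \ref{samedi} are checked for every $\o^{t}$ (via Assumption \ref{Sass2}, \eqref{amiens}, \eqref{vanek}--\eqref{vanek2} and \eqref{reims}), Assumptions \ref{D0} and \ref{AOAone} on $\Omega^{t}_{NA}$ via Proposition \ref{thelemmamultiper}, Assumption \ref{vminus} on $\widehat{\Omega}^{t}$ via \eqref{bordeaux} and Proposition \ref{ToolJC}, and all the difficulty sits in Assumption \ref{dimanche}. Where you diverge is in how the full-measure set for \ref{dimanche} is produced. The paper introduces the random set $I^{t}(\o^{t})$ of ``bad pairs'' $(h,P)$ with $P\in\mathcal{Q}_{t+1}(\o^{t})$, $P(1+h\Delta S_{t+1}(\o^{t},\cdot)\geq 0)=1$ and $\int_{\O_{t+1}} I_{t+1}(\o^{t},\cdot,2+h\Delta S_{t+1})\,dP=\infty$, shows $\mbox{Graph}(I^{t})$ is analytic, picks a Jankov--von Neumann selector on $\{I^{t}\neq\emptyset\}$ and contradicts \eqref{dimancheR} at $t+1$; you instead note that \ref{dimanche} holds wherever $I_{t}(\o^{t},1)<\infty$ (a clean pointwise consequence of \eqref{reims3} and the definition \eqref{Vvanek}) and prove that $\{I_{t}(\cdot,1)<\infty\}$ has $\mathcal{Q}^{t}$-full measure by an $\varepsilon$-optimal selection (\citep[Proposition 7.50 p184]{BS}) combined with \eqref{dimancheR}. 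This is legitimate, and it is in fact the same mechanism the paper uses later, in the proof of Proposition \ref{dyn3}, to propagate \eqref{dimancheR} from $t+1$ to $t$; your good set is contained in the paper's one, since $I_{t}(\o^{t},1)<\infty$ forces $I^{t}(\o^{t})=\emptyset$, so nothing is lost.

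Two technical points need patching, both of which you partly anticipated. First, applying \citep[Proposition 7.50 p184]{BS} to the indicator-weighted integrand of \eqref{Vvanek} does not give a selector with $1+\widehat{h}(\o^{t})\Delta S_{t+1}(\o^{t},\cdot)\geq 0$ $q_{t+1}(\cdot,\o^{t})$-a.s.\ \emph{for all} $\o^{t}$: off the constraint the integrand equals $0$, not $-\infty$, so the budget constraint is only forced where the guaranteed value is strictly positive, in particular on $\{I_{t}(\cdot,1)=\infty\}$ where it is at least $1/\varepsilon$. Either replace $\widehat{h}$ by $\widehat{h}\,1_{\{I_{t}(\cdot,1)=\infty\}}$ (universal measurability is preserved, $\widehat{h}\in\phi_{t+1}(2,P^{*}\otimes q_{t+1})$ then holds, and the lower bound $1/\varepsilon$ on $\{I_{t}(\cdot,1)=\infty\}$ is untouched), or proceed as the paper does with \eqref{overlinevt} and assign the value $-\infty$ off the constraint before selecting, so that the selector lands in the constraint automatically. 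Second, $\{I_{t}(\cdot,1)=\infty\}=\bigcap_{n\geq 1}\{I_{t}(\cdot,1)>n\}$ is analytic, not coanalytic as written; this is harmless since it is universally measurable either way. With these one-line corrections your argument is complete.
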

Note  that under the assumptions of Proposition \ref{dyn1}, for  all  $\o^{t} \in\widetilde{\Omega}^{t}$   and $x \geq 0$ we have that (see \eqref{VQ}, \eqref{vanek} and  \eqref{vanek2})
\begin{small}
\begin{align}
\label{Clot}
\nonumber U_{t}(\o^t,x)&= \mathcal{U}_{t}(\o^t,x)\\
 &=\sup_{h \in \Hc_x^{t+1}(\o^t)} \inf_{P \in \mathcal{Q}_{t+1}(\o^{t})} \int_{\O_{t+1} }{U}_{t+1}(\o^t, \o_{t+1}, x+ h \Delta S_{t+1}(\o^t, \o_{t+1}))P(d\o_{t+1}).
\end{align}
\end{small}
\begin{proof}
To prove the proposition we will review one by one the assumptions needed to apply Theorem \ref{main1}  in the context  $t+1$.
First  from Assumption \ref{Sass2} for  $\o^{t} \in \O^{t}$ fixed we have  that $Y_i(\cdot)= \Delta S_{t+1}^{i}(\o^{t},\cdot)  \geq -b:= -\max(1+s+S_{t}^{i}(\o^{t}), i \in \{1,\dots, d\})$ and $0 < b < \infty$:   Assumption \ref{Yb} holds true.
From \eqref{amiens} at $t+1$  for all $\o^{t} \in  \Omega^{t}$ and   $\o_{t+1}\in \Omega_{t+1}$,  $U_{t+1}(\o^{t},\o_{t+1},\cdot)$ is  non-decreasing, usc and concave on $\mathbb{R}$. From \eqref{reims} at $t+1$,  $U_{t+1}$ is $\mathcal{B}_{c}(\O^{t+1} \times \mathbb{R})$-measurable. Fix some $x \in \mathbb{R}$ and $\o^{t} \in \O^{t}$, then  $\o_{t+1}  \to U_{t+1}(\o^{t},\o_{t+1},x)$ is $\mathcal{B}_{c}(\O_{t+1})$-measurable, see  \citep[Lemma 7.29 p174]{BS}. Thus Assumption \ref{samedi} is satisfied in the context $t+1$. \\
We  now prove  the assumptions that are verified for $\o^{t}$  in some well chosen $\mathcal{Q}^{t}$-full measure set. First from  Proposition \ref{thelemmamultiper},  for all $\o^{t} \in \Omega^{t}_{NA}$, Assumptions \ref{D0} and \ref{AOAone}  hold true in the context $t+1$. Fix $\o^{t} \in  \widehat{\Omega}^{t}$ and some $r \in \mathbb{Q}$, $r>0$. Using  \eqref{bordeaux} at $t+1$ and Proposition \ref{ToolJC}, we get that
$$\sup_{P \in \mathcal{Q}_{t+1}(\o^{t})} \int_{\Omega_{t+1}} U^{-}_{t+1}(\o^{t},\o_{t+1}, r)  P(d\o^{t}) \leq \sup_{P \in \mathcal{Q}_{t+1}(\o^{t})} \int_{\Omega_{t+1}} J^{r}_{t+1}(\o^{t},\o_{t+1}) P(d\o^{t})=J^{r}_{t}(\o^{t}) <\infty,$$
and  Assumption \ref{vminus} in context $t+1$ is verified for all $\o^{t} \in   \widehat{\Omega}^{t}$.
We finish with Assumption  \ref{dimanche} in context $t+1$ whose proof is more involved. We want to show that for $\o^{t}$ in some $\mathcal{Q}^{t}$-full measure set to be determined, for all $h \in \mathcal{H}_{1}^{t+1}(\o^{t})$ and $P \in \mathcal{Q}_{t+1}(\o^{t})$ we have that  \begin{align}
\label{aim}
\int_{\Omega_{t+1}} U^{+}_{t+1}(\o^{t},\o_{t+1},1 + h \Delta S_{t+1}(\o^{t},\o_{t+1})) P(d\o_{t+1}) <\infty.
\end{align}
Let $i_{t}(\o^{t},h,P) = \int_{\Omega_{t+1}} I_{t+1}(\o^{t},\o_{t+1},2 + h \Delta S_{t+1}(\o^{t},\o_{t+1})) P(d\o_{t+1})$ and \\
$I^t(\o^{t}):=\left\{ (h,P) \in \mathbb{R}^{d} \times  \mathcal{Q}_{t+1}(\o^{t}) ,\;P\left(1+ h \Delta S_{t+1}(\o^t, \cdot) \geq 0\right) = 1,\;i_{t}(\o^{t},h,P)=\infty\right\}$. Fix some $\o^t \in \Omega^t$, then
using \eqref{amiens2} and \eqref{reims3} at $t+1$ we have that if $h \in \mathcal{H}_{1}^{t+1}(\o^{t})$ and $P \in \mathcal{Q}_{t+1}(\o^{t})$ are such that
\eqref{aim} does not hold true then $(h,P) \in I^t(\o^{t})$. Thus \eqref{aim} holds true for all $h \in \mathcal{H}_{1}^{t+1}(\o^{t})$ and $P \in \mathcal{Q}_{t+1}(\o^{t})$ if  $\o^{t} \in \{ I^t=\emptyset\}$ and if this set is of $\mathcal{Q}^{t}$-full measure, Assumption  \ref{dimanche} in context $t+1$ is proved.
We first prove that $\mbox{Graph}({I}^t) \in \mathcal{A}(\O^{t}\times \mathbb{R}^{d}  \times \mathfrak{P}(\O_{t+1}) )$.
From  \eqref{reims2} at $t+1$,  Assumption \ref{Sass} and   \citep[Lemma 7.30 (3) p178]{BS}, $(\o^{t},h,\o_{t+1}) \to I_{t+1}(\o^{t},\o_{t+1},2 + h \Delta S_{t+1}(\o^{t},\o_{t+1}))$ is usa. Then using \citep[Proposition 7.48 p180]{BS} (which can be used  with similar arguments as in the proof of Proposition \ref{ToolJC}), we get that  $i_{t}$ is usa.
It  follows that
$$i_{t}^{-1}(\{\infty\})= \bigcap_{n \geq 1} \left\{(\o^{t},h,P),\; i_{t}(\o^{t},h,P) > n \right\} \in \mathcal{A}(\O^{t} \times \mathbb{R}^{d}\times  \mathfrak{P}(\O_{t+1})).$$
Now using Assumption \ref{Qanalytic} together with   Lemma  \ref{LemmaAA2}
we get that
$$\left\{(\o^{t},h,P),\;  P \in \mathcal{Q}_{t+1}(\o^{t}),\; P\left(1+ h \Delta S_{t+1}(\o^t, \cdot) \geq 0\right) = 1 \right\}  \in \mathcal{A}(\O^{t} \times \mathbb{R}^{d}\times  \mathfrak{P}(\O_{t+1}))$$
and  the fact that $\mbox{Graph}({I}^t)$  and $Proj_{\O^{t}} \left(\mbox{Graph}({I}^t) \right)=\{I^t \neq \emptyset\} $ are analytic sets (recall  \citep[Proposition 7.39 p165]{BS}) follows immediately.
Applying  the Jankov-von Neumann Projection Theorem \citep[Proposition 7.49 p182]{BS}, we obtain that there exists some analytically-measurable and therefore $ \mathcal{B}_{c}(\O^{t})$-measurable function $ \o^{t} \in \{I^t \neq \emptyset\} \to (h^{*}(\o^{t}),p^{*}(\cdot, \o^t) ) \in \mathbb{R}^{d} \times   \mathfrak{P}(\O_{t+1}) $ such that for all $\o^{t} \in \{I^t \neq \emptyset\} $, $(h^{*}(\o^{t}),p^{*}(\cdot,\o^t)) \in I^t(\o^{t})$.  We may and will extend  $h^{*}$ and $p^{*}$ on all $\Omega^{t}$ so that  $h^{*}$ and $p^*$ remain $ \mathcal{B}_{c}(\O^{t})$-measurable.\\
We prove now  by contradiction that $\{ I^t=\emptyset\}$ is a $\mathcal{Q}^{t}$-full measure set. Assume that there exists some $\widetilde{P} \in \mathcal{Q}^{t}$ such that $\widetilde{P}( \{I^t \neq \emptyset\}) >0$ and set  $\widetilde{P}^{*}=\widetilde{P} \otimes p^{*}$. Since $p^* \in SK_{t+1}$ and $p^{*}(\cdot,\o^{t}) \in \mathcal{Q}_{t+1}(\o^{t})$ for all $\o^{t} \in \O^{t}$,  $\widetilde{P}^{*} \in \mathcal{Q}^{t+1}$ (see \eqref{Qstar}). It is also clear that  $\widetilde{P}^{*}\left(2+h^{*}(\cdot) \Delta S_{t+1}(\cdot) \geq 0 \right)=1$.  Now for all $\o^{t} \in \{I^t \neq \emptyset\}$, we have that
$i_{t}(\o^{t},h^{*}(\o^{t}),p^{*}(\cdot,\o^{t}))=\infty$ and thus
\begin{align*}
\int_{\Omega^{t+1}}& I_{t+1}\left(\o^{t+1},2 + h^{*}(\o^{t}) \Delta S_{t+1}(\o^{t+1})\right) \widetilde{P}^{*} (d\o^{t+1}) \geq  \int_{ \{I^t \neq \emptyset\}} (+\infty)  \widetilde{P} (d\o^{t})= +\infty
\end{align*}
a contradiction with \eqref{dimancheR} at $t+1$.

We can now define
$ \widetilde{\Omega}^{t}:={\{I^t = \emptyset\}} \cap  \widehat{\Omega}^{t} \cap \Omega^{t}_{NA} \subset  \widehat{\Omega}^{t}.$
 It is clear, recalling  Propositions \ref{thelemmamultiper} and \ref{ToolJC},  that $\widetilde{\Omega}^{t} \in  \mathcal{B}_{c}(\Omega^{t})$ is a $\mathcal{Q}^{t}$-full measure set and the proof is complete.
\end{proof}\\
The next proposition enables us to initialize the induction procedure that will be carried on in the proof of the main theorem.
\begin{proposition}\label{dyn2}
 Assume  that the $NA(\mathcal{Q}^{T})$ condition, Assumptions \ref{Uminus} and \ref{Uplus}  hold true. Then  \eqref{amiens}, \eqref{amiens2}, \eqref{reims}, \eqref{reims2}, \eqref{reims3}, \eqref{dimancheR} and   \eqref{bordeaux} hold true  for $t=T$.
 \end{proposition}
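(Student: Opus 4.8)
\textit{Proof (plan).}
The plan is to verify the seven properties \eqref{amiens}--\eqref{bordeaux} at $t=T$ directly from the definitions of $\mathcal U_T$, $U_T$, $I_T$ and $J^r_T$; recall that $\mathcal U_T(\o^T,\cdot)$ is usc for every $\o^T$ (Lemma \ref{Ucontinuity}), so $U_T=\mbox{Cl}(\mathcal U_T)=\mathcal U_T$, that $I_T=U^+_T$ and $J^r_T=U^-_T(\cdot,r)$. For \eqref{amiens}, fix $\o^T$: if $\o^T\in\Omega^T_{Dom}$ then $U_T(\o^T,\cdot)=U(\o^T,\cdot)$ on $\mathbb R$, which is non-decreasing and concave by Definition \ref{utilite} and usc by Lemma \ref{Ucontinuity}; if $\o^T\notin\Omega^T_{Dom}$ then $U_T(\o^T,\cdot)$ equals $0$ on $[0,\infty)$ and $-\infty$ on $(-\infty,0)$, again non-decreasing, concave and usc. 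Then \eqref{amiens2} is immediate since $I_T=U^+_T\ge0$ is non-decreasing, being the positive part of a non-decreasing function. Likewise \eqref{reims3} follows from monotonicity, $U_T(\o^T,x)\le U_T(\o^T,x+1)\le U^+_T(\o^T,x+1)=I_T(\o^T,x+1)$, and \eqref{bordeaux} from the elementary inequality $y\ge-y^-$ (valid for every $y\in\mathbb R\cup\{\pm\infty\}$) applied to $y=U_T(\o^T,r)$, which gives $U_T(\o^T,r)\ge-U^-_T(\o^T,r)=-J^r_T(\o^T)$ for all $\o^T\in\O^T$ and all rational $r>0$.

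For \eqref{reims}--\eqref{reims2} I would prove that $U_T$, hence also $I_T=\max(U_T,0)$, is $\mathcal B(\O^T)\otimes\mathcal B(\mathbb R)$-measurable; by \eqref{analyticset} this gives at once $U_T\in\mathcal{L}SA(\O^T\times\mathbb R)$ and $I_T\in\mathcal{U}SA(\O^T\times\mathbb R)$ (note that lower-semianalyticity of $U_T$ alone would not suffice for the latter). Since $\Omega^T_{Dom}\in\mathcal B(\O^T)$ (Lemma \ref{Ucontinuity}), split $\O^T\times\mathbb R$ into the three Borel pieces $\O^T\times(-\infty,0)$, $(\O^T\setminus\Omega^T_{Dom})\times[0,\infty)$ and $\Omega^T_{Dom}\times[0,\infty)$: on the first two $U_T$ is the constant $-\infty$, resp.\ $0$; on the third $U_T=U$ and, since $U(\o^T,\cdot)$ is non-decreasing and right-continuous on $[0,\infty)$ for $\o^T\in\Omega^T_{Dom}$ (Lemma \ref{Ucontinuity}), one has $U_T(\o^T,x)=\inf_{q\in\mathbb Q,\,q>x}U(\o^T,q)$, whose sublevel sets are $\bigcup_{q\in\mathbb Q,\,q>0}(\{U(\cdot,q)<c\}\cap\Omega^T_{Dom})\times[0,q)$ and hence Borel, each $U(\cdot,q)$ being $\mathcal B(\O^T)$-measurable (Definition \ref{utilite}). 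This joint-measurability step is the only delicate part of the proof; it is an elementary instance of the corrected statement of Lemma \ref{412}, the continuity in the second variable being supplied here by concavity together with Lemma \ref{Ucontinuity}.

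It remains to prove \eqref{dimancheR}, which I reduce to Proposition \ref{propufini}. First, $I_T(\o^T,y)\le U^+(\o^T,y)$ for every $y$: for $y<0$ both sides vanish, and for $y\ge0$ the value $\mathcal U_T(\o^T,y)$ equals $U(\o^T,y)$ on $\Omega^T_{Dom}$ and $0$ off it, while $U^+\ge0$. Now fix $x\ge0$ and a universally-predictable $(\phi_s)_{1\le s\le T-1}$, put $G:=x+\sum_{s=1}^{T-1}\phi_s\Delta S_s$, and fix $P\in\mathcal Q^T$ together with $\xi\in\phi_T(G,P)$; then $\phi:=(\phi_1,\dots,\phi_{T-1},\xi)\in\Phi$ satisfies $V^{x,\phi}_T=G+\xi\Delta S_T\ge0$ $P$-a.s., i.e.\ $\phi\in\Phi(x,P)$, so that
\begin{align*}
\int_{\O^T}I_T\big(\o^T,G(\o^{T-1})+\xi(\o^{T-1})\Delta S_T(\o^T)\big)P(d\o^T)\le E_P U^+\big(\cdot,V^{x,\phi}_T(\cdot)\big)\le M_x<\infty,
\end{align*}
where $M_x<\infty$ by Proposition \ref{propufini} (which relies precisely on Assumptions \ref{Uminus} and \ref{Uplus}). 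Taking the supremum over $\xi\in\phi_T(G,P)$ and over $P\in\mathcal Q^T$ yields \eqref{dimancheR}, which completes the proof; apart from the joint-measurability argument of the previous paragraph, everything is bookkeeping around the definition of $\mathcal U_T$ and an appeal to Proposition \ref{propufini}.
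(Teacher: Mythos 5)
Your proposal is correct and follows essentially the same route as the paper: direct verification of \eqref{amiens}, \eqref{amiens2}, \eqref{reims3} and \eqref{bordeaux} from the definitions of $U_T$, $I_T=U_T^+$ and $J^r_T$, joint Borel-measurability of $U_T$ (hence \eqref{reims} and \eqref{reims2} via \eqref{analyticset}), and reduction of \eqref{dimancheR} to Proposition \ref{propufini} by completing $(\phi_1,\dots,\phi_{T-1},\xi)$ into a strategy in $\Phi(x,P)$. The only cosmetic difference is that you prove the joint-measurability step by hand (using monotonicity and right-continuity of $U(\o^T,\cdot)$ on $\Omega^T_{Dom}$) where the paper cites \citep[Lemma 7.16]{BCR}, and you bound $\int I_T\,dP$ by $E_P U^+$ via the pointwise inequality $I_T\leq U^+$ rather than the paper's a.s. equality; both are fine.
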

\begin{proof}
As
$U_T = U 1_{\Omega^T_{Dom} \times [0,\infty) \cup \Omega^T \times (-\infty,0)}$ and $I_{T}=U_T^{+}$ , using Definition \ref{utilite}, \eqref{amiens2},  \eqref{reims3}  and  \eqref{bordeaux} (recall \eqref{Crec0}) for $t=T$ are true.
For all $\o^{T} \in \O^{T}$, $U_T(\o^{T},\cdot)$ is also right-continuous and usc (see Lemma \ref{Ucontinuity}), thus \eqref{amiens} also holds true. Moreover  $U_T(\cdot,x)$ is $\mathcal{B}(\O^{T})$-measurable for all $x \in \mathbb{R}$, thus $U_T$ is $\mathcal{B}(\O^{T})\otimes \mathcal{B}(\mathbb{R})$-measurable (see \citep[Lemma 7.16]{BCR}) and  \eqref{reims} and \eqref{reims2} hold true for $t=T$.
It remains to  prove that \eqref{dimancheR} is true for $t=T$. Let $G:=x+ \sum_{t=1}^{T-1} \phi_t \Delta S_t$ where $x \geq 0$ and $\left(\phi_s \right)_{1 \leq s \leq T-1}$ is universally-predictable. Fix some $P \in \mathcal{Q}^{T}$ and $\xi \in \phi_{T}(G,P)$.
Let $(\phi^{\xi}_{i})_{1\leq i\leq T}  \in \Phi$ be defined by $\phi^{\xi}_{T}=\xi $ and $\phi^{\xi}_{s}=\phi_s$ for $1 \leq s \leq T-1$ then $V_{T}^{x,\phi^{\xi}}= G + \xi \Delta S_{T}$, $\phi^{\xi} \in \Phi(x,P)$, {$\int_{\O^{T}} I_{T}\left(\o^{T}, G(\o^{T-1}) + \xi(\o^{T-1}) \Delta S_{T}(\o^{T})\right)P(d\o^{T})= E_{P} U^{+} (\cdot, V_{T}^{x, \phi^{\xi}}(\cdot))$} and \eqref{dimancheR} follows from  Proposition \ref{propufini}.
 \end{proof}\\

The next proposition proves the induction step.
\begin{proposition}\label{dyn3}
Let $0\leq t\leq T-1$ be fixed.  Assume that the $NA(\mathcal{Q}^{T})$ condition  holds true as well as Assumptions \ref{Qanalytic},  \ref{Sass}, \ref{Sass2}  and  \eqref{amiens}, \eqref{amiens2}, \eqref{reims}, \eqref{reims2}, \eqref{reims3},  \eqref{dimancheR} and  \eqref{bordeaux} at $t+1$. Then  \eqref{amiens},  \eqref{amiens2}, \eqref{reims}, \eqref{reims2}, \eqref{reims3}, \eqref{dimancheR} and \eqref{bordeaux} are true for $t$. \\
 Moreover for all $X=x+\sum_{s=1}^{t} \phi_s \Delta S_s $, where $x \geq 0$, $\left(\phi_s \right)_{1 \leq s \leq t}$ is universally-predictable and  $\{X\geq 0\}$ is  $\mathcal{Q}^{t}$-full measure set, there exists some $\mathcal{Q}^{t}$-full measure set ${\O}^{t}_{X} \in \mathcal{B}_{c}(\O^{t})$, such that ${\O}^{t}_{X} \subset \widetilde{\O}^{t}$ (see Proposition \ref{dyn1} for the definition of $\widetilde{\O}^{t}$)  and some $\mathcal{B}_{c}(\O^{t})$-measurable random variable $\widehat{h}^X_{t+1}$  such that for all $\o^{t} \in {\O}^{t}_{X}$, $\widehat{h}^X_{t+1}(\o^{t}) \in \Dc_{X(\o^{t})}^{t+1}(\o^t)$ and
 \begin{small}
 \begin{align}
\label{noisette}
U_{t}(\omega^t,X(\omega^t)) & = \inf_{P \in \mathcal{Q}_{t+1}(\o^{t})}\int_{\O_{t+1} }U_{t+1}(\o^t, \o_{t+1}, X(\o^t)+ \widehat{h}^X_{t+1}(\o^t) \Delta S_{t+1}(\o^t, \o_{t+1}))P(d\o_{t+1}).
\end{align}
\end{small}
 \end{proposition}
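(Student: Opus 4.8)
The plan is to verify the seven properties \eqref{amiens}--\eqref{bordeaux} at stage $t$ in three groups and then to produce the maximiser $\widehat{h}^X_{t+1}$ by a measurable selection argument. The guiding principle is the dichotomy already used in Proposition \ref{dyn1}: as in that proof, the structural part of Assumption \ref{samedi} in context $t+1$ (measurability of $\omega_{t+1}\mapsto U_{t+1}(\omega^t,\omega_{t+1},x)$, monotonicity, concavity, upper semicontinuity) holds for \emph{every} $\omega^t\in\Omega^t$, whereas the no-arbitrage and integrability inputs (Assumptions \ref{D0}, \ref{AOAone}, \ref{vminus}, \ref{dimanche} in context $t+1$) hold only for $\omega^t$ in the $\mathcal{Q}^t$-full measure set $\widetilde{\Omega}^t\in\mathcal{B}_c(\Omega^t)$ of Proposition \ref{dyn1}; this is exactly why Lemmata \ref{concaveandusc} and \ref{IsupV} were isolated under Assumption \ref{samedi} alone.

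\textbf{Structural properties and bounds.} First I would apply Lemma \ref{concaveandusc} in context $t+1$ for every fixed $\omega^t$: it gives that $\mathcal{U}_t(\omega^t,\cdot)$ is concave and non-decreasing and that $U_t(\omega^t,\cdot)=\mbox{Cl}(\mathcal{U}_t)(\omega^t,\cdot)=\lim_{\delta\downarrow0}\mathcal{U}_t(\omega^t,\cdot+\delta)=\inf_{n\geq1}\mathcal{U}_t(\omega^t,\cdot+\tfrac1n)$, so $U_t(\omega^t,\cdot)$ is also usc, which is \eqref{amiens}; property \eqref{amiens2} is read off \eqref{Vvanek} from \eqref{amiens2} at $t+1$. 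For \eqref{reims3} I would apply Lemma \ref{IsupV} in context $t+1$ with $V=U_{t+1}(\omega^t,\cdot,\cdot)$ and $I=I_{t+1}(\omega^t,\cdot,\cdot+1)$ (legitimate by \eqref{amiens2} and \eqref{reims3} at $t+1$): the conclusion $\mbox{Cl}(v^{\mathbb{Q}})(x)\leq i(x+1)$ is exactly $U_t(\omega^t,x)\leq I_t(\omega^t,x+1)$ after unwinding \eqref{vanek}--\eqref{Vvanek}. For \eqref{bordeaux}, taking the admissible choice $h=0$ in \eqref{vanek} gives, for $r\in\mathbb{Q}$, $r>0$, $U_t(\omega^t,r)\geq\mathcal{U}_t(\omega^t,r)\geq\inf_{P\in\mathcal{Q}_{t+1}(\omega^t)}\int U_{t+1}(\omega^t,\omega_{t+1},r)P(d\omega_{t+1})\geq-\sup_{P\in\mathcal{Q}_{t+1}(\omega^t)}\int J^r_{t+1}(\omega^t,\omega_{t+1})P(d\omega_{t+1})=-J^r_t(\omega^t)$, using \eqref{bordeaux} at $t+1$, $J^r_{t+1}\geq0$ and \eqref{Crec}. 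Finally \eqref{dimancheR} at $t$ is obtained by unrolling one step of the recursion \eqref{Vvanek}: given $G=x+\sum_{s\leq t-1}\phi_s\Delta S_s$, $P\in\mathcal{Q}^t$, $\xi\in\phi_t(G,P)$ and $\varepsilon>0$, I would use \citep[Proposition 7.50 p184]{BS} to choose an analytically-measurable near-optimal selector $\omega^t\mapsto(\widehat{h}_{t+1}(\omega^t),\widehat{q}_{t+1}(\cdot,\omega^t))$ with $\widehat{q}_{t+1}(\cdot,\omega^t)\in\mathcal{Q}_{t+1}(\omega^t)$ and $\widehat{h}_{t+1}\in\phi_{t+1}(G',P\otimes\widehat{q}_{t+1})$ for $G':=(x+1)+\sum_{s\leq t-1}\phi_s\Delta S_s+\xi\Delta S_t$, so that, by Fubini and $P\otimes\widehat{q}_{t+1}\in\mathcal{Q}^{t+1}$ (see \eqref{Qstar}), $\int_{\Omega^t}I_t(\omega^t,G(\omega^{t-1})+\xi(\omega^{t-1})\Delta S_t(\omega^t))P(d\omega^t)\leq\varepsilon+\int_{\Omega^{t+1}}I_{t+1}(\omega^{t+1},G'(\omega^t)+\widehat{h}_{t+1}(\omega^t)\Delta S_{t+1}(\omega^{t+1}))(P\otimes\widehat{q}_{t+1})(d\omega^{t+1})$, whose right-hand side is bounded uniformly in $P,\xi,\varepsilon$ by \eqref{dimancheR} at $t+1$ with initial wealth $x+1$; letting $\varepsilon\downarrow0$ finishes.

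\textbf{Measurability --- the main obstacle.} For \eqref{reims2} I would argue straight from \eqref{Vvanek}: $I_{t+1}$ is usa and non-negative and $\Delta S_{t+1}$ is Borel (Assumption \ref{Sass}), so $(\omega^t,x,h,P,\omega_{t+1})\mapsto I_{t+1}(\omega^t,\omega_{t+1},x+1+h\Delta S_{t+1}(\omega^t,\omega_{t+1}))$ is usa by \citep[Lemma 7.30 (3) p178]{BS}; integrating against $P(d\omega_{t+1})$ (\citep[Proposition 7.48 p180]{BS}) and multiplying by the Borel indicator $1_{H^{t+1}_x(\omega^t,P)}(h)$ (last claim of Lemma \ref{Dhmes}) keeps the function usa; the supremum over the correspondence $(\omega^t,x)\twoheadrightarrow\mathbb{R}^d\times\mathcal{Q}_{t+1}(\omega^t)$, which has analytic graph by Assumption \ref{Qanalytic}, is usa by \citep[Proposition 7.47 p179]{BS}, and multiplying by $1_{[0,\infty)}(x)$ gives $I_t\in\mathcal{U}SA(\Omega^t\times\mathbb{R})$. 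For \eqref{reims}, having shown above that $U_t=\inf_{n\geq1}\mathcal{U}_t(\cdot,\cdot+\tfrac1n)$, it suffices to prove $\mathcal{U}_t\in\mathcal{L}SA(\Omega^t\times\mathbb{R})$. The first three links of the chain go through as before --- $U_{t+1}$ lsa (\eqref{reims} at $t+1$) $\Rightarrow$ integrand lsa $\Rightarrow$ $(\omega^t,x,h,P)\mapsto\int U_{t+1}(\omega^t,\cdot,x+h\Delta S_{t+1})P$ lsa (\citep[Proposition 7.48 p180]{BS}, using the convention $+\infty-\infty=+\infty$ of Remark \ref{verslinfinietaudela}) $\Rightarrow$ $(\omega^t,x,h)\mapsto\inf_{P\in\mathcal{Q}_{t+1}(\omega^t)}(\cdot)$ lsa (projection of an analytic set, Assumption \ref{Qanalytic}). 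The difficulty is the remaining supremum, taken only over the \emph{countable} set $h\in\mathcal{H}^{t+1}_x(\omega^t)\cap\mathbb{Q}^d$: since $\mbox{Graph}(\mathcal{H}^{t+1})$ is merely \emph{coanalytic} (Lemma \ref{Dhmes}), one cannot simply multiply by $1_{\mathcal{H}^{t+1}}$, because a coanalytic set intersected with an analytic one need not be analytic. I expect the way around this to be to show that a rational $h$ violating $h\in\mathcal{H}^{t+1}_x(\omega^t)$ contributes $-\infty$ to $\inf_{P\in\mathcal{Q}_{t+1}(\omega^t)}\int U_{t+1}(\omega^t,\cdot,x+h\Delta S_{t+1})P$ off a $\mathcal{Q}^t$-polar set --- using that $U_{t+1}(\omega^{t+1},y)=-\infty$ for $y<0$ together with the bound \eqref{reims3} just established and the finiteness of the one-period value function on $\widetilde{\Omega}^t$ (Theorem \ref{main1}) --- so that $\mathcal{U}_t$ coincides off a polar set with the genuinely lsa function $(\omega^t,x)\mapsto1_{[0,\infty)}(x)\sup_{h\in\mathbb{Q}^d}\inf_{P\in\mathcal{Q}_{t+1}(\omega^t)}\int U_{t+1}(\omega^t,\cdot,x+h\Delta S_{t+1})P$; upgrading ``off a polar set'' to ``everywhere'' while retaining lsa is precisely the tricky bookkeeping the introduction flags (and where the subtlety around \citep[Lemma 4.12]{BN} bites). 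This is the step I expect to cost the most work.

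\textbf{The maximiser.} Fix $X=x+\sum_{s\leq t}\phi_s\Delta S_s$ with $\{X\geq0\}$ of $\mathcal{Q}^t$-full measure and put $\Omega^t_X:=\widetilde{\Omega}^t\cap\{X\geq0\}\in\mathcal{B}_c(\Omega^t)$, a $\mathcal{Q}^t$-full measure subset of $\widetilde{\Omega}^t$. For $\omega^t\in\Omega^t_X$, Proposition \ref{dyn1} lets Theorem \ref{main1} be applied in context $t+1$ at initial wealth $X(\omega^t)$: by \eqref{VQ} and \eqref{Clot} there is $h\in\mathcal{D}^{t+1}_{X(\omega^t)}(\omega^t)$ (the set $\Dc_x^{t+1}$ of \eqref{domaineproj}) with $U_t(\omega^t,X(\omega^t))=\inf_{P\in\mathcal{Q}_{t+1}(\omega^t)}\int U_{t+1}(\omega^t,\omega_{t+1},X(\omega^t)+h\Delta S_{t+1})P(d\omega_{t+1})$. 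I would then consider the argmax correspondence $\omega^t\twoheadrightarrow\{h\in\mathcal{D}^{t+1}_{X(\omega^t)}(\omega^t):\ \inf_{P\in\mathcal{Q}_{t+1}(\omega^t)}\int U_{t+1}(\omega^t,\cdot,X(\omega^t)+h\Delta S_{t+1})P=U_t(\omega^t,X(\omega^t))\}$ on $\Omega^t_X$: it is non-empty, compact-valued by Lemma \ref{rast}, and has graph in $\mathcal{B}_c(\Omega^t)\otimes\mathcal{B}(\mathbb{R}^d)$, because $\mbox{Graph}(\mathcal{D}^{t+1}_{X\vee0})\in\mathcal{B}_c(\Omega^t)\otimes\mathcal{B}(\mathbb{R}^d)$ by Lemma \ref{Dhmes} and both $(\omega^t,h)\mapsto\inf_{P}\int U_{t+1}(\omega^t,\cdot,X(\omega^t)+h\Delta S_{t+1})P$ and $\omega^t\mapsto U_t(\omega^t,X(\omega^t))$ are suitably measurable (from \eqref{reims} and composition with the Borel $\Delta S_{t+1}$ and the $\mathcal{B}_c$-measurable $X$). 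The Aumann / Jankov--von Neumann selection theorem (\citep[Proposition 7.49 p182]{BS}) then yields a $\mathcal{B}_c(\Omega^t)$-measurable selector, which I extend by $0$ off $\Omega^t_X$ to obtain $\widehat{h}^X_{t+1}$; by construction $\widehat{h}^X_{t+1}(\omega^t)\in\mathcal{D}^{t+1}_{X(\omega^t)}(\omega^t)$ and \eqref{noisette} holds for all $\omega^t\in\Omega^t_X$.
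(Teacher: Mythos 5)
Most of your outline tracks the paper's own proof: \eqref{amiens} via Lemma \ref{concaveandusc} applied for every fixed $\o^{t}$, \eqref{amiens2} and \eqref{reims3} via Lemma \ref{IsupV}, \eqref{bordeaux} by taking $h=0$, \eqref{reims2} by the usa-composition/integration/projection chain, and \eqref{dimancheR} by an $\varepsilon$-optimal analytically-measurable selector (the paper applies \citep[Proposition 7.50 p184]{BS} to the penalized function of \eqref{overlinevt}, and its two-case bound $\frac{1}{\varepsilon}$ versus $I_{t}-\varepsilon$ is what rules out $P(I_{t}(\cdot,G(\cdot))=\infty)>0$, a case you gloss over but which is handled by the same device you invoke). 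The genuine gap is exactly the step you defer, \eqref{reims}, and the route you sketch for it would not work. For an inadmissible rational $h\notin \mathcal{H}^{t+1}_{x}(\o^{t})$ the quantity $\inf_{P\in\mathcal{Q}_{t+1}(\o^{t})}\int U_{t+1}(\o^{t},\cdot,x+h\Delta S_{t+1})\,dP$ need not be $-\infty$, not even off a $\mathcal{Q}^{t}$-polar set: on $\{x+h\Delta S_{t+1}<0\}$ one has $U_{t+1}=-\infty$, but under the convention $+\infty-\infty=+\infty$ of Remark \ref{verslinfinietaudela} the integral equals $+\infty$ whenever $\int U_{t+1}^{+}\,dP=+\infty$, and neither \eqref{reims3} nor \eqref{dimancheR} controls $U_{t+1}^{+}$ along inadmissible $h$. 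So the unpenalized function $\sup_{h\in\mathbb{Q}^{d}}\inf_{P}(\cdot)$ may strictly dominate $\mathcal{U}_{t}$, your "coincidence off a polar set" fails, and the upgrade to "everywhere" that you yourself flag as unresolved never gets off the ground. The paper's fix is different and simpler: it keeps the constraint by penalization, defining $\widehat{u}_{t}$ in \eqref{hatu} to be $u_{t}$ when $h\in\mathcal{H}^{t+1}_{x}(\o^{t})$ and $-\infty$ otherwise, and observes that $\widehat{u}_{t}$ is lsa because for every $c$ the sublevel set $\{\widehat{u}_{t}<c\}$ is the union of the analytic set $\{u_{t}<c\}$ with the complement of $\mbox{Graph}(\mathcal{H}^{t+1})\times\mathfrak{P}(\O_{t+1})$, which is \emph{analytic} precisely because the graph is coanalytic (Lemma \ref{Dhmes}). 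In other words, assigning the value $-\infty$ off a coanalytic set preserves lower semianalyticity, since the penalty sits at the bottom and the whole analytic complement is swallowed by every sublevel set; the obstruction you perceived ("cannot multiply by the indicator of a coanalytic set") is thus circumvented without any polar-set bookkeeping. Then $\widetilde{u}_{t}=\inf_{P}\widehat{u}_{t}$ is lsa by \citep[Proposition 7.47 p179]{BS}, the countable supremum over all $h\in\mathbb{Q}^{d}$ equals $\mathcal{U}_{t}$ exactly on $\O^{t}\times[0,\infty)$ (inadmissible rational $h$ now contribute $-\infty$ by construction), and $U_{t}=\lim_{n}\mathcal{U}_{t}(\cdot,\cdot+\frac{1}{n})$ is lsa.

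Two further points in your selection step need repair, both secondary but not cosmetic. First, the joint $\mathcal{B}_{c}(\O^{t})\otimes\mathcal{B}(\mathbb{R}^{d})$-measurability of $(\o^{t},h)\mapsto u_{X}(\o^{t},h)$ does not follow from "composition" alone: composing the lsa function $\widetilde{u}_{t}$ with the $\mathcal{B}_{c}$-measurable map $(\o^{t},h)\mapsto(\o^{t},h,X(\o^{t}))$ only gives measurability with respect to $\mathcal{B}_{c}(\O^{t}\times\mathbb{R}^{d})$, which is strictly larger than the product sigma-algebra required by the projection and selection theorems. The paper obtains the product measurability via the normal-integrand criterion \citep[Proposition 14.39 p666, Corollary 14.34 p664]{rw}, using measurability in $\o^{t}$ for each fixed $h$ together with concavity and upper semicontinuity of $u_{X}(\o^{t},\cdot)$ — this is precisely the point where the corrected \citep[Lemma 4.12]{BN} issue bites, so it cannot be waved through. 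Second, since $\mbox{Graph}(\psi_{X})$ is only product-universally-measurable and not analytic, the selector comes from the Projection Theorem \citep[Theorem 3.23 p75]{CV77} combined with Aumann's theorem \citep[Corollary 1]{bv}, not from Jankov--von Neumann \citep[Proposition 7.49 p182]{BS} as you cite. With these corrections your construction of $\widehat{h}^{X}_{t+1}$ and of $\O^{t}_{X}=\widetilde{\O}^{t}\cap\{X\geq0\}$ agrees with the paper's.
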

\begin{proof}
First we prove that \eqref{amiens} is true at $t$. We fix some $\o^{t} \in \O^{t}$. From \eqref{amiens} at $t+1$,   the function $U_{t+1}(\o^{t},\o_{t+1},\cdot)$ is  usc, concave and non-decreasing on $\mathbb{R}$ for all  $\o_{t+1} \in \O_{t+1}$. From \eqref{vanek} and \eqref{vanek2}, $U_{t+1}(\o^{t},\o_{t+1},x)=-\infty$ for all $x<0$ and    $\o_{t+1} \in \O_{t+1}$. Then using  \eqref{reims} at $t+1$ and Lemma \ref{partial}, we find that $U_{t+1}(\o^{t},\cdot,x)$ is $\mathcal{B}_{c}(\O_{t+1})$-measurable for all $x \in \mathbb{R}$.
Hence, Assumption \ref{samedi} of Lemma \ref{concaveandusc}  holds true in the context $t+1$ and we obtain that
 $ x \to U_{t}(\o^{t},x)=\mbox{Cl}( \mathcal{U}_{t}) (\o^{t},x)$  (see \eqref{vanek} and \eqref{vanek2}) is usc, concave and non-decreasing. As this is true for all $\o^{t} \in \O^{t}$, \eqref{amiens} at $t$ is proved.  Note that we also obtain that $x \to  \mathcal{U}_{t} (\o^{t},x)$ is non decreasing for all $\o^t \in \Omega^t$.
Now  we prove \eqref{reims} at ${t}$. Since integrals might not always be well defined we need to be a bit cautious.
We introduce first  $u_{t}$ and $\widehat{u}_{t} : \Omega^{t} \times \mathbb{R}^{d}\times [0,\infty) \times \mathfrak{P}(\O_{t+1}) \to \mathbb{R} \cup \{\pm \infty\}$
\begin{align}
\nonumber
 u_{t}(\o^{t},h,x,P)&= \int_{\Omega_{t+1}} U_{t+1}(\o^{t},\o_{t+1}, x+h\Delta S_{t+1}(\omega^{t},\o_{t+1})) P(d\o_{t+1}) \\
 \label{hatu}
\widehat{u}_{t}(\o^{t},h,x,P) & =1_{\mathcal{H}^{t+1}_{x}(\o^{t})} (h) {u}_{t} (\o^{t},h,x,P) + (-\infty) 1_{\mathbb{R}^{d} \backslash{\mathcal{H}^{t+1}_{x}(\o^{t})}} (h).
\end{align}
As  $U_{t+1}$ is lsa (see \eqref{reims} at $t+1$) and Assumption \ref{Sass} holds true,  \citep[Lemma 7.30 (3) p177]{BS} implies that $(\o^{t},\o_{t+1},h,x)  \to   U_{t+1}(\o^{t},\o_{t+1}, x+h\Delta S_{t+1}(\omega^{t},\o_{t+1}))$ is lsa. So \citep[Proposition 7.48 p180]{BS} (recall the convention $\infty-\infty=\infty$, see Remark \ref{verslinfinietaudela}) shows that
 $u_{t}$ is lsa.   Fix some $c \in \mathbb{R}$ and set
$\widehat{C}:= \widehat{u}_{t}^{-1}((-\infty, c))$, ${C}:= {u}_{t}^{-1}((-\infty, c))$, $
A:=   \left\{ (\o^{t},h,x),\; h \in \mathcal{H}^{t+1}_{x}(\o^{t})\right\} \times  \mathfrak{P}(\O_{t+1})$ and
$A^{c} :=  \left\{ (\o^{t},h,x),\; h \notin \mathcal{H}^{t+1}_{x}(\o^{t})\right\} \times  \mathfrak{P}(\O_{t+1})$,
then
$\widehat{C}= \left( C \cap  A\right ) \cup  A^{c} = C \cup  A^{c}.$
As $u_{t}$ is lsa, $C$ is an analytic set. Lemma \ref{Dhmes} implies that
$
A^c= \{(\o^{t},h,x), \; (\o^{t},x,h)
\notin  \mbox{Graph}( \mathcal{H}^{t+1})\} \times  \mathfrak{P}(\O_{t+1})$, and thus $\widehat{C}$, are analytic sets and $\widehat{u}_{t}$ is lsa.
Using Assumption \ref{Qanalytic} and  \citep[Proposition 7.47 p179]{BS}, we get that
 \begin{align}
\label{widetildeu}
\widetilde{u}_{t}:(\o^{t},h,x)  \to  \inf_{P \in \mathcal{Q}_{t+1}(\o^{t})} \widehat{u}_{t} (\o^{t},h,x,P) \in \mathcal{L}SA(\Omega^{t}\times\mathbb{R}^{d}\times\mathbb{R}).
\end{align}
Then \citep[Lemma 7.30 (2) p178]{BS} implies that
$
\widetilde{\mathcal{U}}_{t}: (\o^{t},x)  \to \sup_{h \in \mathbb{Q}^{d}} \widetilde{u}_{t}(\o^{t},h,x)$ is lsa and since $\widetilde{\mathcal{U}}_{t}=\mathcal{U}_{t}$ on $\O^{t} \times [0,\infty)$, it follows that $\mathcal{U}_{t}$ is lsa.
We have already seen that $\o^{t} \in \O^{t}$, $ \mathcal{U}_{t}(\o^{t},\cdot)$ is non-decreasing, thus, for all $\o^{t} \in \O^{t}$ and $x \in \mathbb{R}$ we get that (recall \eqref{vanek2})
$$U_{t}(\o^{t},x)= \mbox{Cl}(\mathcal{U}_{t})(\o^{t},x)= \limsup_{y \to x} \mathcal{U}_{t}(\o^{t},y)= \lim_{n \to \infty} \mathcal{U}_{t}\left(\o^{t},x+\frac{1}{n}\right).$$
As $(\o^{t},x) \to \mathcal{U}_{t}(\o^{t},x+\frac{1}{n}) $ is lsa,  \citep[Lemma 7.30 (2) p178]{BS} implies that $U_{t}$ is also lsa.
We prove now that \eqref{reims2} holds true for $t$. We introduce $\hat{\i}_{t}: \O^{t} \times \mathbb{R}^{d} \times [0,\infty) \times \mathfrak{P}(\O_{t+1}) \to \mathbb{R} \cup \{+ \infty\}$ (recall  \eqref{domainp})
\begin{small}
\begin{align}
\label{itusa}
\hat{\i}_{t}(\o^{t}, h,x,P) =  1_{H_{x}^{t+1}(\o^{t},P)}(h) \int_{\Omega_{t+1}}  I_{t+1}(\o^{t},\o_{t+1},x+1+h \Delta S_{t+1}(\o^{t},\o_{t+1})) P(d\o_{t+1}).
\end{align}
\end{small}
Note that, using \eqref{amiens2} at $t+1$, the integral in \eqref{itusa}  is well defined (potentially infinite valued).
Using Assumption \ref{Sass}, \eqref{reims2} at $t+1$ and  \citep[Lemma 7.30 (3) p177]{BS} we find that $(\o^{t+1},h,x,P) \to I_{t+1}(\o^{t},\o_{t+1},x+1+h \Delta S_{t+1}(\o^{t},\o_{t+1}))$ is usa.  Thus \citep[Proposition 7.48 p180]{BS}  applies\footnote{As already mentioned,  \citep[Proposition 7.48 p180]{BS} relies on  \citep[Lemma 7.30 (4) p177]{BS} applied for upper-semianalytic functions where the convention $-\infty +\infty=-\infty$ needs to be used.  But here, as we deal with a non-negative function the convention is useless.} and
\begin{small}
$$(\o^{t}, h,x,P) \to \int_{\Omega_{t+1}} I_{t+1}(\o^{t},\o_{t+1},x+1+h \Delta S_{t+1}(\o^{t},\o_{t+1})) P(d\o_{t+1}) \in  \mathcal{U}SA(\O^{t}\times \mathbb{R}^{d} \times \mathbb{R} \times \mathfrak{P}(\O_{t+1})).$$
\end{small}
Lemma \ref{Dhmes} together with \citep[Lemma 7.30 (4) p177]{BS}   imply  that $\hat{\i}_{t}$ is usa. Finally as
$\{(\o^t,h,x,P), \, P \in \mathcal{Q}_{t+1}(\o^{t})\}$ is analytic (see Assumption \ref{Qanalytic}),
 \citep[Proposition 7.47 p179,  Lemma  7.30 (4) p178]{BS} applies and recalling  \eqref{Vvanek} and \eqref{itusa},  we get that \\
$I_{t}(\o^{t},x) = 1_{[0,\infty)}(x)\sup_{h \in \mathbb{R}^{d}} \sup_{P \in \mathcal{Q}_{t+1}(\o^{t})} \hat{\i}_{t}(\o^{t},h,x,P)$ is usa
and \eqref{reims2}  for $t$ is proved.\\
For later purpose, we set $\overline{\i}_{t}: \O^{t} \times \mathbb{R}^{d} \times[0,\infty) \times \mathfrak{P}(\O_{t+1}) \to \mathbb{R} \cup \{\pm\infty\}$
\begin{align}
\label{overlinevt}
 \overline{\i}_{t}(\o^{t},h,x,P):=  \hat{\i}_{t}(\o^{t},h,x,P)+ (-\infty) 1_{\mathbb{R}^{d} \backslash{H_x^{t+1}(\o^{t},P)}}(h).
 \end{align}
Using Lemma \ref{Dhmes},  $\overline{\i}_{t}$ is usa  and 
$\overline{I}_{t}(\o^{t},x) : = 1_{[0,\infty)}(x)\sup_{h \in \mathbb{R}^{d}} \sup_{P \in \mathcal{Q}_{t+1}(\o^{t})} \overline{\i}_{t}(\o^{t},h,x,P) $ is usa as before. Furthermore as  $\hat{\i}_{t} \geq 0$  we have  that $\overline{I}_{t}=I_t$.
To prove  \eqref{amiens2} and \eqref{reims3} at $t$, we apply  Lemma \ref{IsupV} to $V(\o_{t+1},x)= U_{t+1}(\o^{t},\o_{t+1},x)$, $I(\o_{t+1},x)=I_{t+1}(\o^{t},\o_{t+1},x+1)$ (recall \eqref{Vvanek}) and $\mathcal{G}=\mathcal{B}_{c}(\O_{t+1})$ for any fixed $\o^{t} \in \O^{t}$. Indeed we have already proved (see the proof of \eqref{amiens} at $t$) that  Assumption \ref{samedi} holds true for $V$.  From \eqref{amiens2} and \eqref{reims3} at $t+1$, $I(\o_{t+1}, \cdot)$ is non-decreasing and non-negative on $\mathbb{R}$ for all $\o_{t+1}$ and $V \leq I$. Finally using Assumption \ref{Sass} and  \eqref{reims2} at $t+1$ together with \citep[Lemma 7.30 p177]{BS}, we get that
$\o_{t+1} \to I_{t+1}(\o^{t},\o_{t+1},x+1+h \Delta S_{t+1}(\o^{t},\o_{t+1}))$ is $\mathcal{B}_{c}(\O_{t+1})$-measurable. \\
We prove now \eqref{bordeaux} at $t$. Fix some $r \in \mathbb{Q}$, $r>0$. We have from the definition of $U_{t}$ (see \eqref{vanek}, and \eqref{vanek2}), \eqref{bordeaux} at $t+1$  and the definition of $J^{r}_{t}$ (see \eqref{Crec})  that for all $\o^t \in \Omega^t$
\begin{align*}
U_{t}\left(\o^{t}, r\right) \geq \mathcal{U}_{t}\left(\o^{t}, r\right) & \geq   \inf_{P \in \mathcal{Q}_{t+1}(\o^{t})} \int_{\Omega_{t+1}} U_{t+1}\left(\o^{t},\o_{t+1},r\right) P(d\o_{t+1}) \\
 &  \geq   \inf_{P \in \mathcal{Q}_{t+1}(\o^{t})}\int_{\Omega_{t+1}} -J^{r}_{t+1}(\o^{t},\o_{t+1})P(d\o_{t+1})= -J^{r}_{t}(\o^{t}).
 \end{align*}
We prove now \eqref{dimancheR} at $t$. Choose  $x \geq 0$, $\left(\phi_s \right)_{1 \leq s \leq t-1}$  universally-predictable  random variables and  set $\overline{G}:=x+ \sum_{s=1}^{t-1} \phi_s \Delta S_s$. Furthermore, fix some $P \in \mathcal{Q}^{t}$,  $\xi \in \phi_{t}(\overline{G},P)$, $\varepsilon>0$ and set $G(\cdot):=\overline{G}(\cdot) + \xi(\cdot) \Delta S_{t}(\cdot)$. We  apply  \citep[Proposition 7.50 p184]{BS} to $\overline{\i}_{t}$ (see \eqref{overlinevt}) in order to obtain  $S^{\varepsilon}: (\o^{t},x)  \to (h^{\varepsilon}(\o^{t},x), p^{\varepsilon}(\cdot,\o^{t},x)) \in \mathbb{R}^{d} \times \mathfrak{P}(\O_{t+1})$ that is analytically-measurable such that  $p^{\varepsilon}(\cdot,\o^{t},x) \in \mathcal{Q}_{t+1}(\o^{t})$ for all $\o^{t} \in \O^{t}$, $x \geq 0$ and (recall that $\overline{I}_{t}=I_t$)
\begin{align}
\label{Vt2t}
\overline{\i}_{t}(\o^{t},h^{\varepsilon}(\o^{t},x),x,p^{\varepsilon}(\cdot,\o^{t},x))  \geq \begin{cases}
 \frac{1}{\varepsilon}, \;\mbox{if ${I}_{t}(\o^{t},x)=\infty$}\\
 {I}_{t}(\o^{t},x)-\varepsilon, \; \mbox{otherwise}.
 \end{cases}
\end{align}
Let   $h^{\varepsilon}_{G}(\o^{t}):=h^{\varepsilon}(\o^{t},1_{\{G \geq 0\}}(\o^{t})G(\o^{t}))$ and $p^{\varepsilon}_{G}(\cdot,\o^{t}):=p^{\varepsilon}(\cdot,\o^{t},1_{\{G \geq 0\}}(\o^{t})G(\o^{t}))$.
Using  \citep[Proposition 7.44 p172]{BS}, both $h^{\varepsilon}_{G}$ and $p^{\varepsilon}_{G}$ are  $\mathcal{B}_{c}(\O^{t})$-measurable. For some $\o^{t} \in \O^{t}$, $y \geq 0$ fixed, if $h^{\varepsilon}(\o^{t},y) \notin  H_y^{t+1}(\o^{t},p^{\varepsilon}(\cdot,\o^{t},y))$,  using \eqref{overlinevt}, we have $\overline{\i}_{t}(\o^{t},h^{\varepsilon}(\o^{t},y),y,p^{\varepsilon}(\cdot,\o^{t},y))=-\infty< \min\left(\ \frac{1}{\varepsilon},I_{t}(\o^{t},y)-\varepsilon\right)$ (indeed from  \eqref{amiens2} at $t$, $I_{t} \geq 0$). This contradicts \eqref{Vt2t}   and  therefore $h^{\varepsilon}(\o^{t},y) \in H^{t+1}_{y}(\o^{t},p^{\varepsilon}(\cdot,\o^{t},y))$ and also $h^{\varepsilon}_{G} (\o^{t}) \in H^{t+1}_{G(\o^{t})}(\o^{t},p_{G}^{\varepsilon}(\cdot,\o^{t}))$
for $\o^{t} \in \left\{G \geq 0\right\}$. We set $P^{\varepsilon}_{G}:=P \otimes p^{\varepsilon}_{G} \in \mathcal{Q}^{t+1}$ (see \eqref{Qstar}) and get that
\begin{small}
\begin{align*}
P^{\varepsilon}_{G}(G(\cdot)+ h^{\varepsilon}_{G} (\cdot) \Delta S_{t+1}(\cdot) \geq 0) &= \int_{\{G \geq 0\}} \int_{\Omega_{t+1}} p^{\varepsilon}_{G}(G(\o^{t})+ h^{\varepsilon}_{G} (\o^{t}) \Delta S_{t+1}(\o^{t},\o_{t+1}) \geq 0,\o^{t}) P(d\o^{t})=1,
\end{align*}
\end{small}
\noindent since $\left\{G \geq 0\right\}$ is a $\mathcal{Q}^{t}$-full measure set, $h^{\varepsilon}_{G} \in \phi_{t+1}(G,P^{\varepsilon}_{G})$ follows.
Using \eqref{itusa} and \eqref{overlinevt},
\begin{scriptsize}
\begin{align*}
\int_{\O^{t}} \overline{\i}_{t}(\o^{t},h^{\varepsilon}_{G}(\o^{t}),p^{\varepsilon}_{G}(\o^{t}), G(\o^{t})) P(d\o^{t})  &=\int_{\Omega^{t+1}} I_{t+1}(\o^{t+1},G(\o^{t})+1+h^{\varepsilon}_{G}(\o^{t}) \Delta S_{t+1}(\o^{t+1})) P^{\varepsilon}_{G}(d\o^{t+1})
 \leq A, \end{align*}
\end{scriptsize}
where
$A:=\sup_{P \in \mathcal{Q}^{t+1}} \sup_{\xi \in \phi_{t+1}(G+1,P)} \int_{\Omega^{t+1}} I_{t+1}\left(\o^{t+1},G(\o^{t})+1+\xi(\o^{t}) \Delta S_{t+1}(\o^{t+1})\right) P(d\o^{t+1})$ and $A<\infty$ using \eqref{dimancheR} at $t+1$  ($\phi_{t+1}(G,P) \subset \phi_{t+1}(G+1,P)$).
Combining  with \eqref{Vt2t} we find that
\begin{align}
\nonumber \frac{1}{\varepsilon} \int_{\left\{I_{t}(\cdot, G(\cdot))=\infty\right\}} P(d\o^{t}) &+ \int_{\left\{I_{t}(\cdot, G(\cdot))<\infty\right\}} \left(I_{t}(\o^{t},G(\o^{t}))-\varepsilon\right) P(d\o^{t})\\
\label{star}
&  \leq  \int_{\O^{t}} \overline{\i}_{t}(\o^{t},h^{\varepsilon}_{G}(\o^{t}),G(\o^{t}),p^{\varepsilon}_{G}(\cdot,\o^{t})) P(d\o^{t})
 \leq A <\infty.
 \end{align}
As this is true for all $\varepsilon>0$,  $P(\left\{I_{t}(\cdot, G(\cdot))=\infty\right\})=0$ follows.  Using again \eqref{star}, we get that  $\int_{\Omega^{t}} I_{t}(\o^{t},\overline{G}(\o^{t-1}) + \xi(\o^{t-1}) \Delta S_{t}(\o^{t}))P(d\o^{t}) \leq A$ and as this is true for all $P \in \mathcal{Q}^{t}$ and $\xi \in \phi_{t}(\overline{G},P)$,
\eqref{dimancheR} is true for $t$.\\
We are left with the proof of  \eqref{noisette} for $U_t$.
Let $X=x+\sum_{s=1}^{t-1} \phi_s \Delta S_{s+1} $, with $x \geq 0$ and  $\left(\phi_s \right)_{1 \leq s \leq t-1}$ some  universally-predictable
random variables, be fixed such that $X\geq 0$ $\mathcal{Q}^{t}$-q.s. Let
$
{\Omega}^{t}_{X}:=\widetilde{\Omega}^{t} \cap \{X(\cdot) \geq 0\}.$ Then
${\Omega}^{t}_{X} \in \mathcal{B}_{c}(\O^{t})$  is a $\mathcal{Q}^{t}$-full measure set.
We introduce the following random set  $ {\psi}_{X}: \Omega^t \twoheadrightarrow \mathbb{R}^{d}$
\begin{scriptsize}
$$
 {\psi}_{X}(\o^{t}):= \left\{h \in \Dc_{X(\o^{t})}^{t+1}(\o^t),\; U_{t}(\o^{t}, X(\o^{t}))= \inf_{P \in \mathcal{Q}_{t+1}(\o^{t})} \int_{\Omega^{t+1}} U_{t+1}\left(\o^{t},\o_{t+1}, X(\o^{t})+h \Delta S_{t+1}(\o^{t},\o_{t+1})\right) P(d\o_{t+1}) \right\},
$$
\end{scriptsize}
for $\o^{t} \in {\Omega}^{t}_{X}$ and  ${\psi}_{X}(\o^{t})= \emptyset$ otherwise  ($\mathcal{D}^{t+1}_{X(\o^{t})}(\o^{t})$ is defined in  \eqref{domaineproj}).
To prove \eqref{noisette}, it is enough to find some $\mathcal{B}_{c}(\O^{t})$-measurable selector for ${\psi}_{X}$ and to show that $\Omega^{t}_{X} \subset \{\psi_{X} \neq \emptyset\}$.  The last point follows   from Proposition \ref{dyn1} and Theorem \ref{main1} (see \eqref{vopti}, \eqref{VQ}, \eqref{vanek}, \eqref{vanek2} and recall that $\Omega^{t}_{X} \subset \widetilde{\Omega}^{t}$).
Let  $u_{X}: \Omega^{t} \times \mathbb{R}^{d} \to \mathbb{R} \cup \{\pm\ \infty\}$ be defined by  (recall \eqref{widetildeu}) $u_{X}(\o^{t},h)=1_{{\Omega}_{X}^{t}}(\o^{t}) \widetilde{u}_{t}(\o^{t},h,X(\o^{t})).$
Using \citep[Proposition 14.39 p666,  Corollary 14.34 p664]{rw} we first prove that $-u_{X}$ is a  $\mathcal{B}_{c}(\O^t)$-normal integrand (see  \citep[Definition 14.27 p661]{rw}) and that  $u_{X}$ is $\mathcal{B}_{c}(\O^t)\otimes \mathcal{B}(\mathbb{R}^{d})$-measurable. Indeed  we show that for all $h \in \mathbb{R}^{d}$,  $u_{X}(\cdot,h)$ is $\mathcal{B}_{c}(\O^{t})$-measurable and for all $\o^{t} \in {\Omega}^{t}$, $u_{X}(\o^{t},\cdot)$ is usc and concave.
The first point follows from the fact that $\widetilde{u}_{t}$ is lsa, $X$ is $\mathcal{B}_{c}(\O^{t})$-measurable, ${\Omega}^{t}_{X} \in \mathcal{B}_{c}(\O^{t})$ and \citep[Proposition 7.44 p172]{BS}.  Now we fix $\o^{t} \in \O^{t}$. If $\o^{t} \notin {\Omega}_{X}^{t}$, it is clear that $u_{X}(\o^{t},\cdot)$ is usc and concave. If $\o^{t} \in  \Omega^{t}_{X} \subset \widetilde{\Omega}^{t}$, we know from Proposition  \ref{dyn1} that Lemma \ref{cont} applies and that    $\phi_{\o^{t}}(\cdot,\cdot)$ is usc and concave where  $
\phi_{\o^{t}}(x,h)=
\inf_{P \in \mathcal{Q}_{t+1}(\o^{t})}\int_{\O_{t+1} }U_{t+1}(\o^t, \o_{t+1}, x+ h \Delta S_{t+1}(\o^t, \o_{t+1}))P(d\o_{t+1})$ if $x \geq 0$ and $h\in \mathcal{H}_{x}^{t+1}(\o^{t})$ and  $-\infty$ otherwise.  In particular for $\o^{t} \in \O^{t}_{X}$ and  $x=X(\o^{t})$ we get that $ \phi_{\o^{t}}(X(\o^t),\cdot)=u_{X}(\o^{t},\cdot)$ is usc and concave.
Now, from the definitions of ${\psi}_{X}$ and $u_{X}$ for $\o^{t} \in \O^{t}_{X}$, we have that
$$ {\psi}_{X}(\o^{t})= \left\{h \in \mathcal{D}^{t+1}_{X(\o^{t})}(\o^{t}),\; U_{t}(\o^{t}, X(\o^{t}))= u_{X}(\o^{t},h)  \right\}.$$
Lemma \ref{Dhmes} implies that  $\mbox{Graph}\left(\mathcal{D}^{t+1}_{X}\right)  \in \mathcal{B}_{c}(\O^{t})\otimes \mathcal{B}(\mathbb{R}^{d})$.  Since $U_{t}$ is lsa,  $U_{t}$ is $\mathcal{B}_{c}(\O^{t} \times \mathbb{R})$-measurable and   \citep[Lemma 7.29 p174]{BS}  implies that $U_{t}(\cdot,x)$  is $\mathcal{B}_{c}(\O^{t})$-measurable for $x \in \mathbb{R}$ fixed. From \eqref{amiens} $U_{t}(\o^{t},\cdot)$ is usc and nondecreasing for any fixed  $\o^{t} \in \O^{t}$, so  \citep[Lemmata 7.12, 7.16]{BCR} implies that $U_{t}$ is $\mathcal{B}_{c}(\O^t)\otimes \mathcal{B}(\mathbb{R})$-measurable. As $X$ is  $\mathcal{B}_{c}(\O^{t})$-measurable, we obtain that $U_{t}(\cdot,X(\cdot))$ is $\mathcal{B}_{c}(\O^{t})$-measurable (see  \citep[Proposition 7.44 p172]{BS}). It follows that
 $\mbox{Graph}(\psi_{X}) \in \mathcal{B}_{c}(\O^{t})\otimes \mathcal{B}(\mathbb{R}^{d})$, we can apply  the Projection Theorem (see  \citep[Theorem 3.23 p75]{CV77}) and we get that $\left\{\psi_{X} \neq \emptyset \right\} \in \mathcal{B}_{c}(\O^{t})$. Using Auman Theorem (see  \citep[Corollary 1]{bv})  there exists some $\mathcal{B}_{c}(\O^{t})$-measurable $\widehat{h}_{t+1}^{X}: \left\{\psi_{X} \neq \emptyset \right\}\to \mathbb{R}^{d}$  such that for all $\o^{t} \in \left\{\psi_{X} \neq \emptyset \right\}$, $\widehat{h}_{t+1}^{X}(\o^{t}) \in \psi_{X}(\o^{t})$. This concludes the proof of \eqref{noisette} extending $\widehat{h}_{t+1}^{X}$ on all $\Omega^{t}$ ($\widehat{h}_{t+1}^{X}=0$ on $ \Omega^t \setminus \left\{\psi_{X} \neq \emptyset \right\}$).

\end{proof}\\

\begin{proof}\emph{of Theorem \ref{main}.} We proceed in three steps. First, we handle some integrability issues that are essential to the proof and where not required in \citep{Nutz}. In particular we show that it is possible to apply Fubini Theorem. Then, we build by induction a candidate for the optimal strategy and  finally we establish its optimality. The proof of the two last steps is very similar to the one of \citep{Nutz}. \\
\textbf{Integrability Issues}\\
First from Proposition \ref{propufini} and \eqref{eq:OP}, $u(x)\leq M_x<\infty$.
We fix some $x \geq 0$ and $\phi \in \Phi(x, \mathcal{Q}^{T})=\Phi(x,U,\mathcal{Q}^{T})$ (see again Proposition \ref{propufini}). From Proposition \ref{dyn2}, we can apply by backward induction Proposition \ref{dyn3} for $t=T-1,T-2, \dots, 0$. In particular, we get that \eqref{reims3} and \eqref{dimancheR}  hold true for all $ 0 \leq t \leq T$ and choosing $G=V_{t-1}^{x+1,\phi}$ and  $\xi=\phi_{t}$ (use Lemma \ref{AOAT} since $\phi \in \Phi(x, \mathcal{Q}^{T})$),  we get for all $P \in \mathcal{Q}^{t}$,
\begin{align}
\label{fubiniforphi0}
\int_{\Omega^{t}} U^{+}_{t}\left(\omega^{t},V_{t}^{x,\phi} (\o^t)\right) P(d\o^{t})<\infty.
\end{align}
So for all $P=P_{t-1} \otimes p \in \mathcal{Q}^{t}$ (see \eqref{Qstar}) \citep[Proposition 7.45 p175]{BS} implies that \begin{small}\begin{align}
\label{fubiniforphi}
\int_{\Omega^{t}} U_{t}\left(\omega^{t},V_{t}^{x,\phi} (\o^t)\right) P(d\o^{t})=\int_{\Omega^{t-1}} \int_{\Omega_{t}} U_{t}\left(\omega^{t-1},\o_{t},V_{t}^{x,\phi} (\o^{t-1},\o_{t})\right)p(d\o_{t}, \o^{t-1}) P_{t-1}(d\o^{t-1}).
\end{align}
\end{small}
\noindent \textbf{Construction of $\phi^{*}$}\\
We fix some $x \geq 0$ and  build by  induction our candidate $\phi^*$ for the optimal strategy which will verify that
\begin{footnotesize}
\begin{align}
\label{travi}
U_{t}\left(\omega^{t},V_{t}^{x,\phi^*} (\o^t)\right)= \inf_{P \in \mathcal{Q}_{t+1}(\o^{t})}\int_{\Omega_{t+1}} U_{t+1}\left(\omega^{t},\omega_{t+1},V_{t}^{x,\phi^*} (\o^t) + \phi^*_{t+1}(\o^t) \Delta S_{t+1}(\o^t,\o_{t+1})\right)P(d\o_{t+1}).
\end{align}
\end{footnotesize}
We start at $t=0$ and use \eqref{noisette} in Proposition \ref{dyn3} with $X=x \geq 0$. We set $\phi^*_1:=\widehat{h}^x_{1} \in \mathcal{D}^{1}_{x}$ and we obtain that  $P_{1}(x+ \phi^*_1 \Delta S_1(.)\geq 0)=1$ for all $P \in \mathcal{Q}^1$ and that \eqref{travi} holds true for $t=0$.
Assume that until some $t \geq 1$ we have found some  universally-predictable  random variables $\left(\phi^*_s \right)_{1 \leq s \leq t}$  and some sets $\left(\overline{\Omega}^s \right)_{1 \leq s \leq t-1}$ such that $\overline{\Omega}^{s} \in \mathcal{B}_{c}(\O^{s})$ is a $\mathcal{Q}^{s}$-full measure set,
 $\phi^{*}_{s+1}(\o^s) \in D^{s+1}(\o^{s})$ for all $\o^{s} \in \overline{\Omega}^s$,  $\{V^{x,\phi^*}_{s+1}(\cdot)\geq 0\}$ is a  ${Q}^{s+1}$-full measure set  and  \eqref{travi} holds true at $s$ for all $\o^{s} \in \overline{\Omega}^{s}$ where $s=0,\dots,t-1$.
 We  apply Proposition \ref{dyn3} with  $X= V_{t}^{x,\phi^*}$  and there exists $\mathcal{Q}^{t}$-full measure set $\overline{\Omega}^{t}:={\Omega}^{t}_{V_{t}^{x,\phi^*}} \in \mathcal{B}_{c}(\O^{t})$  and  some $\mathcal{B}_{c}(\O^{t})$-measurable random variable $\phi^*_{t+1}:=\widehat{h}^{V^{x,\phi^*}_t}_{t+1}$  such that  $\phi^{*}_{t+1}(\o^{t}) \in \mathcal{D}^{t+1}_{V^{x,\phi^*}_{t}(\o^{t})}(\o^{t})$ for all $\omega^{t} \in \overline{\Omega}^{t}$  and \eqref{travi} holds true at $t$.  Let  $P^{t+1}=P \otimes p \in \mathcal{Q}^{t+1}$ where $P \in \mathcal{Q}^{t}$ and $p \in \mathcal{S}K_{t+1}$ with  $p(\cdot,\o^{t}) \in \mathcal{Q}_{t+1}(\o^{t})$ for all $\o^{t} \in \overline{\O}^{t}$ (see \eqref{Qstar}).
From \citep[Proposition 7.45 p175]{BS} we get
$$P_{t+1}(V_{t+1}^{x,\phi^*} \geq 0)=  \int_{\Omega^{t}}p(V_{t}^{x,\phi^*}(\o^{t}) + \phi^*_{t+1}(\o^{t}) \Delta S_{t+1}(\o^{t},\cdot) \geq 0, \o^{t}) P(d\omega^{t})=1,$$
where we have used that  $\phi^{*}_{t+1}(\o^{t}) \in \mathcal{H}^{t+1}_{V^{x,\phi^*}_{t}(\o^{t})}(\o^{t})$ for all $\omega^{t} \in \overline{\Omega}^{t}$ and $P(\overline{\O}^{t})=1$
and we can continue the recursion.  Thus, we have found  that $\phi^{*} \in \Phi(x, \mathcal{Q}^{T})$ and  from Proposition \ref{propufini}, $\phi^{*} \in \Phi(x,U, \mathcal{Q}^{T})$.\\
 \textbf{Optimality of $\phi^*$}\\
We fix some $P=P_{T-1} \otimes p_{T} \in \mathcal{Q}^{T}$. Using \eqref{fubiniforphi}, $P_{T-1}(\overline{\Omega}^{T-1})=1$ and \eqref{travi} for $t=T-1$  we get that
\begin{scriptsize}
\begin{align*}E_{P}U(\cdot,V_T^{x,\phi^*}(\cdot))
&=\int_{\overline{\Omega}^{T-1}} \int_{\Omega_{T}} U_{T}\left(\omega^{T-1}, \omega_{T}, V_{T-1}^{x,\phi^*}(\omega^{T-1}) +\phi^{*}_{T}(\omega^{T-1})  \Delta S_T(\omega^{T-1}, \omega_{T})\right) p_{T}(d\omega_{T},\o^{T-1}) P_{T-1}(d\o^{T-1})\\
& \geq \int_{{\Omega}^{T-1}} U_{T-1}\left(\omega^{T-1}, V_{T-1}^{x,\phi^*}(\omega^{T-1})\right) P_{T-1}(d\o^{T-1}).
\end{align*}
\end{scriptsize}
We iterate the process by backward induction and obtain that (recall that $\Omega^{0}:=\{\o_{0}\}$)
$
U_0(x) \leq E_{P}U(\cdot,V_T^{x,\phi^*}(\cdot))$.
As the preceding equality holds true for all $P \in \mathcal{Q}^{T}$ and as $\phi^* \in \Phi(x,U, \mathcal{Q}^{T})$, we get that  $U_0(x)\leq u(x)$ (see \eqref{eq:OP}). So $\phi^* $ will be optimal if $U_0(x)\geq u(x).$
We fix  some $\phi \in \Phi(x,U, \mathcal{Q}^{T})$ and show that
\begin{align}
\label{last}
 \inf_{P \in \mathcal{Q}^{t+1}} E_{P}  U_{t+1}(\cdot,V_{t+1}^{x,\phi}(\cdot))  \leq  \inf_{Q \in \mathcal{Q}^{t}}E_{Q} U_{t}(\cdot,V_{t}^{x,\phi}(\cdot)), \; t\in\{0,\ldots,T-1\}.
 \end{align}
Then $\inf_{P \in \mathcal{Q}^{T}} E_{P}  U_{T}(\cdot,V_{T}^{x,\phi}(\cdot)) \leq \inf_{Q \in \mathcal{Q}^{1}} E_{Q}  U_{1}(\cdot,V_{1}^{x,\phi}(\cdot))  \leq U_{0}(x)$ is obtained recursively (recall \eqref{Clot}).
As this is true for all $\phi \in \Phi(x,U, \mathcal{Q}^{T})$,
$ u(x) \leq U_{0}(x)$
and the proof is complete.\\
We fix some $t\in\{0,\ldots,T-1\}$ and prove \eqref{last}. As $U_{t+1}$ is lsa (see \eqref{reims})  and Assumption \ref{Sass} holds true,  \citep[Lemma 7.30 (3) p177, Proposition 7.48 p180]{BS} imply that $f$ is lsa where
$$f(\o^{t},y,h,P) := \int_{\O_{t+1}} U_{t+1}(\o^{t},\o_{t+1},y+ h \Delta S_{t+1}(\o^{t},\o_{t+1})) P(d\o_{t+1}).$$
Let $f^*(\o^{t},y,h) = \inf_{P \in \mathcal{Q}_{t+1}(\o^{t})} f(\o^{t},y,h,P)$ and fix some $\varepsilon>0$. Then since
$\{(\o^{t},y,h,P),  P \in  \mathcal{Q}_{t+1}(\o^{t})\}$ is an analytic set (recall Assumption \ref{Qanalytic}), \citep[Proposition 7.50 p184]{BS} implies that there exists some universally-measurable  $\widetilde{p}^{\varepsilon}_{t+1}: (\o^{t},y,h)  \to \mathfrak{P}(\O_{t+1})$ such that $\widetilde{p}^{\varepsilon}_{t+1}(\cdot,\o^{t},y,h) \in  \mathcal{Q}_{t+1}(\o^{t})$ for all $(\o^{t},y,h) \in \O^{t} \times \mathbb{R} \times \mathbb{R}^{d}$ and
\begin{align}
\label{equt1}
 f(\o^{t},y,h,\widetilde{p}^{\varepsilon}_{t+1}(\cdot,\o^{t},y,h))
&\leq \begin{cases} f^*(\o^{t},y,h) + \varepsilon,
 \mbox{ if $ f^*(\o^{t},y,h)>-\infty$}\\
-\frac{1}{\varepsilon}, \; \mbox{otherwise}.
\end{cases}
\end{align}
Let $p^{\varepsilon}_{t+1}(\cdot,\o^{t})= \widetilde{p}^{\varepsilon}_{t+1}\left(\cdot,\o^{t},V_{t}^{x,\phi}(\o^{t}), \phi_{t+1}(\o^{t})\right)$: \citep[Proposition 7.44 p172]{BS} implies that $p^{\varepsilon}_{t+1}$ is  $\mathcal{B}_{c}(\O^{t})$-measurable.
For all $\o^{t} \in \widetilde{\Omega}^{t} \cap \{\;V_{t}^{x,\phi}(\cdot) \geq 0\}$, $ f^*(\o^{t},V_{t}^{x,\phi}(\o^{t}),\phi_{t+1}(\o^{t})) \leq
\sup_{h \in \mathcal{H}^{t+1}_{V_{t}^{x,\phi}(\o^{t})}(\o^{t}) }  f^*(\o^{t},V_{t}^{x,\phi}(\o^{t}),h) =  U_{t}(\o^{t}, V_{t}^{x,\phi}(\o^{t}))$
(use  Lemma \ref{AOAT} since $\phi \in \Phi(x, \mathcal{Q}^{T})$ and recall \eqref{Clot}). Choosing $y= V_{t}^{x,\phi}(\o^{t})$, $h=\phi_{t+1}(\o^{t})$ in \eqref{equt1}, we find that   for all $\o^{t} \in \widetilde{\Omega}^{t} \cap \{V_{t}^{x,\phi}(\cdot) \geq 0\}$
\begin{align}
\label{equt2}
\int_{\Omega_{t+1}} U_{t+1}(\o^{t},\o_{t+1},V_{t+1}^{x,\phi}(\o^{t},\o_{t+1}))p^{\varepsilon}_{t+1}(d\o_{t+1},\o^{t}) -\varepsilon \leq \max \left(U_{t}(\o^{t},V_{t}^{x,\phi}(\o^{t})), -\frac{1}{\varepsilon} -\varepsilon \right).
\end{align}
Fix some $Q \in \mathcal{Q}^{t}$  and set $P^{\varepsilon}:=Q \otimes p^{\varepsilon}_{t+1} \in \mathcal{Q}^{t+1}$ (see \eqref{Qstar}).  Using \eqref{fubiniforphi} and since $\widetilde{\Omega}^{t} \cap \{V_{t}^{x,\phi}(\cdot) \geq 0\}$ is a $\mathcal{Q}^{t}$ full measure set (recall again that $\phi \in \Phi(x,\mathcal{Q}^{T})$ and Lemma \ref{AOAT}) , we get
$$
\inf_{P \in \mathcal{Q}^{t+1}} E_{P}  U_{t+1}(\cdot,V_{t+1}^{x,\phi}(\cdot)) -\varepsilon  \leq  E_{P^{\varepsilon}} U_{t+1}(\cdot,V_{t+1}^{x,\phi}(\cdot)) -\varepsilon
 \leq E_{Q} \max \left(U_{t}(\cdot,V_{t}^{x,\phi}(\cdot)), -\frac{1}{\varepsilon}-\varepsilon\right).
$$
Since for all $0< \varepsilon<1$,  $\max\left(U_{t}(\cdot,V_{t}^{x,\phi}(\cdot)), -\frac{1}{\varepsilon}-\varepsilon\right) \leq -1+ U^{+}_{t}(\cdot,V_{t}^{x,\phi}(\cdot))$, recalling \eqref{fubiniforphi0}, letting $\varepsilon$ go to zero and applying Fatou's Lemma, we obtain that
$ \inf_{P \in \mathcal{Q}^{t+1}} E_{P}  U_{t+1}(\cdot,V_{t+1}^{x,\phi}(\cdot))  \leq    E_{Q} U_{t}(\cdot,V_{t}^{x,\phi}(\cdot)).$
As this holds true for all $Q \in \mathcal{Q}^{t}$,  \eqref{last} is proved.
\end{proof}\\

\begin{proof}\emph{of Theorem \ref{main2}.}
Since the $sNA(\mathcal{Q}^{T})$ condition holds true, the $NA(\mathcal{Q}^{T})$ condition is also verified and  to apply Theorem \ref{main} it remains to  prove that Assumption \ref{Uplus} is satisfied. We fix some $P \in \mathcal{Q}^{T}$ $x \geq 0$  and some $\phi \in \phi(x,P)$.
Since the $NA(P)$ condition holds true, using similar arguments  as in the proof of  \citep[Theorem 4.17]{BCR} we find that for ${P}_{t}$-almost all $\o^{t} \in \O^{t}$, $
|V_{t}^{x,\phi}(\o^{t})| \leq   \prod_{s=1}^{t}\left(x+ \frac{|\Delta S_{s}(\o^{s})|}{\alpha^{P}_{s-1}(\o^{s-1})}\right)$.
Note  that $V^{x,\phi} $ is universally-adapted and that
$\sup_{P \in \mathcal{Q}^{t}}E_{P} |V_{t}^{x,\phi}(\cdot)|^{r} <\infty$ for all $r>0$ (recall that $\Delta S_{s}, \;\frac{1}{\alpha_{s}^{P}} \in \mathcal{W}_{s}$ for all $s \geq 1$).  The monotonicity of $U^{+}$ and Proposition \ref{ae} (with $\l= 2\prod_{s=1}^{T}\left(1+ \frac{|\Delta S_{s}(\o^{s})|}{\alpha^{P}_{s-1}(\o^{s-1})}\right) \geq 1$) implies that for ${P}_{t}$-almost all $\o^{t} \in \O^{t}$
\begin{align}
\label{eqJtQ}
U^{+}(\o^{T}, V_{T}^{1,\phi}(\o^{T})) \leq 4 \left( \prod_{s=1}^{T}  \left(1+\frac{|\Delta S_{s}(\o^{s})|}{\alpha^{P}_{s-1}(\o^{s-1})}\right)\right)\left( U^{+}(\o^{T},1) +C_{T}(\o^{T})\right).
\end{align}
We set $N:=4  \sup_{P \in \mathcal{Q}^{T}} E_{P} \left(\left( \prod_{s=1}^{T}  \left(1+\frac{|\Delta S_{s}(\o^{s})|}{\alpha^{P}_{s-1}(\o^{s-1})}\right)\right)\left( U^{+}(\o^{T},1) +C_{T}(\o^{T})\right)\right)$.
Since $U^{+}(\cdot,1)$, $U^{-}(\cdot,\frac{1}{4})$ $\in$ $\mathcal{W}_{T}$ and $\Delta S_{s}, \; \frac{1}{\alpha_{s}^{P}} \in \mathcal{W}_{s}$ for all $s \geq 1$,
we obtain that $N<\infty$ (recall the definition of $C_{T}$ in Proposition \ref{ae}). Using \eqref{eqJtQ}  we find that $E_{P} U^{+}(\cdot, V_{T}^{1,\phi}(\cdot))  \leq N<\infty$ and as this is true  for all $P \in \mathcal{Q}^{T}$ and $\phi \in \Phi(1,P)$, Assumption \ref{Uplus} holds true.
\end{proof}
\section{Appendix}
\label{seannexe}

\subsection{\textbf{Auxiliary results}}
\label{proofofres}
The two first Lemmata were used in the proof of Theorem \ref{main1} and Lemma \ref{Dhmes}.  The second one is a well-know result on concave functions which  proof is given since we did not find some reference.
\begin{lemma}
\label{rast2}
Assume that Assumption \ref{Yb} holds true. For all  $x > 0$, we have $\mbox{Aff}(\mathcal{H}_{x})=\mathbb{R}^{d}$, $\mbox{Ri}(\mathcal{H}_{x})$ is an open set in $\mathbb{R}^{d}$ and $\mathbb{Q}^{d}$ is dense in $\mbox{Ri}(\mathcal{H}_{x})$ \footnote{For a Polish space $X$, we say that a set $D \subset X$ is dense in $B \subset X$ if for all $\varepsilon>0$, $b \in B$, there exists $d \in D\cap B$ such that $d(b,d) < \varepsilon$ where $d$ is a metric on $X$ consistent with its topology.} . Moreover   $
  \mbox{Ri}(\mathcal{H}_{x}) \subset   \bigcup_{\substack{r \in \mathbb{Q},\;r>0}} \mathcal{H}^{r}_{x} \subset \mathcal{H}_{x}$
 and therefore
$ \overline{ \bigcup_{\substack{r \in \mathbb{Q},\;r>0}} \mathcal{H}^{r}_{x}}= \mathcal{H}_{x}, \mbox{where the closure is taken in $\mathbb{R}^{d}$.}$  If furthermore, we assume that there exists some $0 \leq c <\infty$ such that $Y_{i}(\o) \leq c$ for all $i=1,\cdots,d$, $\o \in \overline{\O}$ (recalling Assumption \ref{Yb}, $|Y|$ is bounded)  then $
 \mbox{Ri}(\mathcal{H}_{x}) =   \bigcup_{\substack{r \in \mathbb{Q},\;r>0}} \mathcal{H}^{r}_{x}$.
 \end{lemma}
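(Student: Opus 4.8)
The plan is to argue entirely by elementary convex geometry in $\mathbb{R}^d$, using only the one-sided bound $Y_i(\cdot)\geq -b$ of Assumption \ref{Yb}, the strict positivity of $x$, and the (elementary, cf.\ the proof of Lemma \ref{rast}) facts that $\mathcal{H}_x$ and each $\mathcal{H}_x^r$ are closed and convex with $0\in\mathcal{H}_x$. First I would show $\mbox{Aff}(\mathcal{H}_x)=\mathbb{R}^d$: since $x>0$ one has $0\in\mathcal{H}_x$, and for $\varepsilon:=x/b$ and each canonical basis vector $e_i$ one has $x+\varepsilon e_i Y=x+\varepsilon Y_i\geq x-\varepsilon b=0$ $\mathcal{Q}$-q.s., so $\varepsilon e_i\in\mathcal{H}_x$; the $d+1$ points $0,\varepsilon e_1,\dots,\varepsilon e_d$ affinely span $\mathbb{R}^d$. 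Hence $\mbox{Ri}(\mathcal{H}_x)=\mbox{int}(\mathcal{H}_x)$, which is open in $\mathbb{R}^d$ and non-empty (a non-empty convex set has non-empty relative interior); this yields the first two assertions, and density of $\mathbb{Q}^d$ in $\mbox{Ri}(\mathcal{H}_x)$ then follows at once from density of $\mathbb{Q}^d$ in $\mathbb{R}^d$ together with openness, in the sense of the footnote.

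The core of the proof is the chain $\mbox{Ri}(\mathcal{H}_x)\subset\bigcup_{r\in\mathbb{Q},\,r>0}\mathcal{H}_x^r\subset\mathcal{H}_x$. The right inclusion is trivial since $x+hY\geq r>0$ implies $x+hY\geq 0$. For the left one, fix $h\in\mbox{int}(\mathcal{H}_x)$. If $h=0$, any rational $r\in(0,x)$ gives $0\in\mathcal{H}_x^r$. If $h\neq 0$, pick $\delta>0$ with $B(h,\delta)\subset\mathcal{H}_x$ and put $\lambda:=1+\delta/(2|h|)>1$; then $|\lambda h-h|=\delta/2<\delta$, so $\lambda h\in\mathcal{H}_x$, i.e.\ $x+\lambda hY\geq 0$ $\mathcal{Q}$-q.s., hence $hY\geq -x/\lambda$ and $x+hY\geq x(1-1/\lambda)>0$ $\mathcal{Q}$-q.s.; choosing a rational $r$ with $0<r<x(1-1/\lambda)$ gives $h\in\mathcal{H}_x^r$. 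The identity $\overline{\bigcup_{r}\mathcal{H}_x^r}=\mathcal{H}_x$ then drops out: $\bigcup_r\mathcal{H}_x^r\subset\mathcal{H}_x$ and $\mathcal{H}_x$ is closed, while conversely $\mathcal{H}_x=\overline{\mbox{Ri}(\mathcal{H}_x)}\subset\overline{\bigcup_r\mathcal{H}_x^r}$ by the standard fact that a convex set is the closure of its relative interior.

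For the last assertion, assume in addition $Y_i(\cdot)\leq c$ for all $i$, so that $|Y(\cdot)|\leq M:=\sqrt{d}\max(b,c)$ everywhere. The inclusion $\mbox{Ri}(\mathcal{H}_x)\subset\bigcup_r\mathcal{H}_x^r$ already holds by the general case, so it remains to prove $\bigcup_r\mathcal{H}_x^r\subset\mbox{Ri}(\mathcal{H}_x)$. If $h\in\mathcal{H}_x^r$ for some rational $r>0$, then for every $g\in\mathbb{R}^d$ with $|g|<r/M$ we get $x+(h+g)Y=(x+hY)+gY\geq r-|g|\,|Y|\geq r-|g|M>0$ $\mathcal{Q}$-q.s., so a whole ball around $h$ is contained in $\mathcal{H}_x$, i.e.\ $h\in\mbox{int}(\mathcal{H}_x)=\mbox{Ri}(\mathcal{H}_x)$, and equality holds.

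I expect the only mildly delicate step to be the scaling trick $\lambda h\in B(h,\delta)$, which converts an arbitrary interior point into a strictly feasible one and is exactly what permits the jump from $\mathcal{H}_x$ to some $\mathcal{H}_x^r$; the remaining ingredients are routine facts about finite-dimensional convex sets and simple Cauchy--Schwarz estimates. I would also make explicit at the outset that $\mathcal{H}_x$ and the $\mathcal{H}_x^r$ are closed, convex, and contain $0$, since these properties are used repeatedly above.
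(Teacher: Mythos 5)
Your proof is correct and follows essentially the same route as the paper's: the one-sided bound $Y_i\geq -b$ to get $\mbox{Aff}(\mathcal{H}_x)=\mathbb{R}^d$, the scaling trick $\lambda h\in\mathcal{H}_x$ with $\lambda>1$ to pass from an interior point of $\mathcal{H}_x$ to some $\mathcal{H}_x^r$, the closure identity via $\overline{\mbox{Ri}(\mathcal{H}_x)}=\mathcal{H}_x$, and the ball argument under the additional bound $|Y|\leq M$. The only cosmetic differences are that you exhibit $d+1$ affinely independent points instead of the paper's box $R$, and you derive the scaling property directly from an interior ball rather than invoking \citep[Theorem 6.4 p47]{cvx}.
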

\begin{proof}
Fix some $x > 0$. Let $\varepsilon>0$ be such that $x-\varepsilon>0$ and $R:=\{h \in \mathbb{R}^{d}, \;  0 \leq h_{i} \leq \frac{x-\varepsilon}{db}\}$.  Using Assumption \ref{Yb},  if $h \in R$ for all $\o \in \overline{\Omega}$,
$x + h Y(\o) \geq x -b \sum_{i=1}^{d} h_{i} \geq \varepsilon$  and $h \in \mathcal{H}^{\varepsilon}_{x} \subset \mathcal{H}_{x}$. Thus  $R \subset \mathcal{H}_{x}$ and $\mbox{Aff}(\mathcal{H}_{x})=\mathbb{R}^{d}$  follows (recall that $0 \in \mathcal{H}_{x}$). Therefore $\mbox{Ri}(\mathcal{H}_{x})$ is the interior of $\mathcal{H}_x$ in $\mathbb{R}^{d}$ and thus an open set in $\mathbb{R}^{d}$ and the fact that $\mathbb{Q}^{d}$ is dense in $\mbox{Ri}(\mathcal{H}_{x})$ follows immediately.
 Fix now some $h \in \mbox{Ri}(\mathcal{H}_{x})$. As $0 \in \mathcal{H}_{x}$, there exists some $\varepsilon>0$ such that $(1+\varepsilon)h \in  \mathcal{H}_{x}$, see  \citep[Theorem 6.4 p47]{cvx} which implies  that $x+ h Y(\cdot) \geq \frac{\varepsilon}{1+\varepsilon} x>0$ $\mathcal{Q}$-q.s., hence  $h \in  \mathcal{H}^{r}_{x}$ for $r \in \mathbb{Q}$ such that $0<r\leq  \frac{\varepsilon}{1+\varepsilon} x$ and  $
  \mbox{Ri}(\mathcal{H}_{x}) \subset   \bigcup_{\substack{r \in \mathbb{Q},\;r>0}} \mathcal{H}^{r}_{x} \subset \mathcal{H}_{x}$ is proved and also
$ \overline{ \bigcup_{\substack{r \in \mathbb{Q},\;r>0}} \mathcal{H}^{r}_{x}}= \mathcal{H}_{x}$ since $ \overline{\mbox{Ri}(\mathcal{H}_{x})}=\mathcal{H}_{x}$. Assume now that $|Y|$ is bounded by some constant $K>0$.  Let $h \in   \bigcup_{\substack{r \in \mathbb{Q},\;r>0}} \mathcal{H}^{r}_{x}$ and $r \in \mathbb{Q}$, $r>0$ be such that $h \in \mathcal{H}^{r}_{x}$, we set $\varepsilon:= \frac{r}{2K}$. Then for any $g \in B(0,\varepsilon)$, we have for  $\mathcal{Q}$-almost all $\o \in \overline{\O}$ that $x+(h+g)Y(\o) \geq r + gY(\o) \geq r- |g||Y(\o)| \geq \frac{r}{2}$, hence $h+g \in \mathcal{H}_{x}$, $B(h,\varepsilon) \subset \mathcal{H}_{x}$ and $h$ belongs to the interior of $\mathcal{H}_{x}$ (and also to $\mbox{Ri}(\mathcal{H}_{x})$).
\end{proof}\\

\begin{lemma}
\label{supconvex}
Let $f: \mathbb{R}^{d} \to \mathbb{R} \cup \{\pm \infty\}$ be a concave function such that $\mbox{Ri}(\mbox{Dom} \, f ) \neq \emptyset$.
Then $\sup_{h \in \textnormal{Dom} \, f} f(h)=\sup_{h \in \textnormal{Ri}(\textnormal{Dom} \, f)} f(h).$
\end{lemma}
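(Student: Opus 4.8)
The plan is to invoke the classical accessibility property of relative interiors of convex sets: by \citep[Theorem 6.1 p45]{cvx}, if $C\subset\mathbb{R}^d$ is convex, $x_0\in\mbox{Ri}(C)$ and $y\in C$, then $(1-\lambda)x_0+\lambda y\in\mbox{Ri}(C)$ for every $0\le\lambda<1$. Here I would take $C:=\mbox{Dom}\, f$, which is convex because $f$ is concave and non-empty because $\mbox{Ri}(\mbox{Dom}\, f)\neq\emptyset$ by hypothesis.

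Since $\mbox{Ri}(\mbox{Dom}\, f)\subset\mbox{Dom}\, f$, the inequality $\sup_{h\in\mbox{Ri}(\mbox{Dom}\, f)}f(h)\le\sup_{h\in\mbox{Dom}\, f}f(h)$ is immediate, so only the reverse one needs work. First I would fix once and for all some $x_0\in\mbox{Ri}(\mbox{Dom}\, f)$; note $f(x_0)>-\infty$. If $f(x_0)=+\infty$, then both suprema equal $+\infty$ and there is nothing to prove. Otherwise $f(x_0)\in\mathbb{R}$, and then \citep[Corollary 7.2.1 p53]{cvx} forces $f$ to be proper, so that $f(h)\in\mathbb{R}$ for every $h\in\mbox{Dom}\, f$. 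Now fix an arbitrary $h\in\mbox{Dom}\, f$. For $0\le\lambda<1$ the point $h_\lambda:=(1-\lambda)x_0+\lambda h$ lies in $\mbox{Ri}(\mbox{Dom}\, f)$ by \citep[Theorem 6.1 p45]{cvx}, and concavity of $f$ yields
\[
\sup_{g\in\mbox{Ri}(\mbox{Dom}\, f)}f(g)\ \ge\ f(h_\lambda)\ \ge\ (1-\lambda)f(x_0)+\lambda f(h).
\]
Letting $\lambda\uparrow1$ and using that $f(x_0)$ and $f(h)$ are finite, the right-hand side converges to $f(h)$, whence $\sup_{g\in\mbox{Ri}(\mbox{Dom}\, f)}f(g)\ge f(h)$. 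Taking the supremum over $h\in\mbox{Dom}\, f$ completes the proof.

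I do not expect any genuine obstacle here; the only delicate point is the handling of the infinite values, which is precisely why I first split off the improper case $f(x_0)=+\infty$ — once $f(x_0)$ is finite (hence $f$ proper), the segment argument and the passage to the limit $\lambda\uparrow1$ are entirely routine.
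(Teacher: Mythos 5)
Your proof is correct, and it rests on the same geometric idea as the paper's: slide towards an arbitrary point $h\in\mbox{Dom}\,f$ along the segment issued from a fixed $x_0\in\mbox{Ri}(\mbox{Dom}\,f)$, using \citep[Theorem 6.1 p45]{cvx} to keep the intermediate points in the relative interior. The finishing step, however, is genuinely different. You first dispose of the improper case via \citep[Corollary 7.2.1 p53]{cvx} (if $f(x_0)=+\infty$ both suprema are $+\infty$; otherwise $f$ is proper and finite on its domain), and then conclude by the plain concavity inequality $f(h_\lambda)\geq(1-\lambda)f(x_0)+\lambda f(h)$ and the limit $\lambda\uparrow 1$, which is legitimate precisely because both values are finite. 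The paper instead restricts $f$ to the segment, obtaining a one-dimensional concave function $\phi$ with $\mbox{Dom}\,\phi=[0,1]$, notes that $[0,1)\subset\{\phi\leq C\}$ with $C:=\sup_{\textnormal{Ri}(\textnormal{Dom}\,f)}f$ (after excluding $C=+\infty$), and then invokes lower semicontinuity of $\phi$ on $[0,1]$ (\citep[Proposition A.4 p400]{fs}) to deduce that $\{\phi\leq C\}$ is closed, hence contains $t=1$. Your route avoids the lsc lemma at the price of the explicit properness case split and the finiteness bookkeeping; the paper's route avoids that case split but needs the one-dimensional semicontinuity result. Both arguments are complete and correct.
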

\begin{proof}
Let $C:=\sup_{h \in \textnormal{Ri}(\textnormal{Dom} \,f )} f(h)$ and $h_{1} \in \mbox{Dom} \, f \backslash \mbox{Ri}(\mbox{Dom} \,f) $ be fixed. We have to  prove that $f(h_{1}) \leq C$.
If $C=\infty$ there is nothing to show. So assume that $C<+\infty$.  Let $h_{0}\in \mbox{Ri}(\mbox{Dom} \, f )$ and introduce   $\phi : t\in \mathbb{R} \rightarrow f(th_{1}+(1-t)h_{0})$ if $t \in [0,1]$ and $-\infty$ otherwise.  From  \citep[Theorem 6.1 p45]{cvx},  $th_{1}+(1-t)h_{0} \in \mbox{Ri}(\mbox{Dom} \, f )$ if $t \in [0,1)$ and thus $[0,1) \subset \{t \in [0,1], \phi(t)\leq C\}$. Clearly, $\phi$ is concave on $\mathbb{R}$. Since $\mbox{Dom} \, f $ is convex,  $\mbox{Dom} \, \phi =[0,1]$.  So, using  \citep[Proposition A.4 p400]{fs}, $\phi$ is lsc on $[0,1]$ and $\{t \in [0,1],\;\phi(t) \leq C\}$ is a closed set in $\mathbb{R}$. It follows that  $1 \in \{t \in [0,1], \phi(t)\leq C\}$, $i.e$ $f(h_{1})\leq C$ and the proof is complete.
\end{proof}\\

The following lemma was used several times.
\begin{lemma}
\label{AOAT}
Assume that  the $NA(\mathcal{Q}^{T})$ condition holds true. Let $\phi \in \Phi$ such that $V_T^{x,\phi} \geq 0$ $\mathcal{Q}^{T}$-q.s. (i.e. $\phi \in \Phi(x, \mathcal{Q}^{T})$), then
$V_t^{x,\phi} \geq 0$ $\mathcal{Q}^{t}$-q.s. for all $t\in \{0,\ldots,T\}$.
\end{lemma}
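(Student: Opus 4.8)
The plan is a downward induction on $t$, the base case $t=T$ being exactly the hypothesis $\phi\in\Phi(x,\mathcal{Q}^{T})$. Fix $0\le t\le T-1$ and assume the statement at level $t+1$, i.e. $V_{t+1}^{x,\phi}\ge 0$ $\mathcal{Q}^{t+1}$-q.s. I would prove it at level $t$ by contradiction: suppose the (universally measurable) set $A:=\{V_{t}^{x,\phi}<0\}\subset\Omega^{t}$ is not $\mathcal{Q}^{t}$-polar, so that there is some $P_{t}\in\mathcal{Q}^{t}$ with $P_{t}(A)>0$.

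Next I would exhibit a one-step arbitrage. Define $\psi\in\Phi$ by $\psi_{t+1}:=1_{A}\,\phi_{t+1}$ and $\psi_{s}:=0$ for $s\ne t+1$; this is a genuine universally-predictable strategy because $V_{t}^{x,\phi}$ is $\mathcal{B}_{c}(\Omega^{t})$-measurable ($S$ being universally adapted and $\phi$ universally predictable), hence $A\in\mathcal{B}_{c}(\Omega^{t})$. Then $V_{T}^{0,\psi}=1_{A}\big(V_{t+1}^{x,\phi}-V_{t}^{x,\phi}\big)$, which is $0$ off $A$ and, on $A$ (where $V_{t}^{x,\phi}<0$), satisfies $V_{T}^{0,\psi}=V_{t+1}^{x,\phi}-V_{t}^{x,\phi}>V_{t+1}^{x,\phi}$. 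Since $\{V_{t+1}^{x,\phi}<0\}$ is $\mathcal{Q}^{t+1}$-polar and every $P\in\mathcal{Q}^{T}$ has $\Omega^{t+1}$-marginal belonging to $\mathcal{Q}^{t+1}$ (the consistency built into \eqref{Qstar}, via Fubini, see \citep[Proposition 7.45 p175]{BS}), one gets $V_{T}^{0,\psi}\ge 0$ $\mathcal{Q}^{T}$-q.s.; hence $NA(\mathcal{Q}^{T})$ forces $V_{T}^{0,\psi}=0$ $\mathcal{Q}^{T}$-q.s.

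To close the argument I would complete $P_{t}$ to a full-horizon measure $P:=P_{t}\otimes q_{t+1}\otimes\cdots\otimes q_{T}\in\mathcal{Q}^{T}$ by appending universally-measurable selectors $q_{s+1}$ of $\mathcal{Q}_{s+1}$ (those produced right after Assumption \ref{Qanalytic}). Its $\Omega^{t+1}$-marginal is $P_{t}\otimes q_{t+1}\in\mathcal{Q}^{t+1}$, so $P(V_{t+1}^{x,\phi}\ge 0)=1$ while $P(A)=P_{t}(A)>0$. On the $P$-full-measure set $A\cap\{V_{t+1}^{x,\phi}\ge 0\}$ we have $V_{T}^{0,\psi}=V_{t+1}^{x,\phi}-V_{t}^{x,\phi}\ge -V_{t}^{x,\phi}>0$, so $P(V_{T}^{0,\psi}\ne 0)\ge P_{t}(A)>0$, contradicting $V_{T}^{0,\psi}=0$ $\mathcal{Q}^{T}$-q.s. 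This completes the induction step, and hence the proof.

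I do not expect a serious obstacle; the only point that needs care is the bookkeeping relating $\mathcal{Q}^{t}$-, $\mathcal{Q}^{t+1}$- and $\mathcal{Q}^{T}$-polar (and full-measure) sets — namely that polarity of $\{V_{t+1}^{x,\phi}<0\}$ at level $t+1$ survives under every measure of $\mathcal{Q}^{T}$, and that an arbitrary $P_{t}\in\mathcal{Q}^{t}$ can be extended to a full-horizon measure with prescribed first $t$ factors — both of which follow routinely from the product description \eqref{Qstar} and Fubini.
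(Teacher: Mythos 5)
Your proposal is correct and rests on essentially the same construction as the paper's proof: activate $\phi$ on the bad event $A=\{V^{x,\phi}_{\cdot}<0\}$ to build an arbitrage strategy of the form $1_{A}\,\phi$, extend the offending measure to an element of $\mathcal{Q}^{T}$ by appending universally-measurable selectors of the $\mathcal{Q}_{s+1}$, and contradict $NA(\mathcal{Q}^{T})$. The only difference is organizational — you run a backward induction with a one-period bet $1_{A}\phi_{t+1}$ (lifting the level-$t+1$ quasi-sure inequality to $\mathcal{Q}^{T}$ via marginals), whereas the paper bets $1_{A}\phi_{s}$ for all $s\geq n+1$ from a failing time $n$ and compares directly with $V^{x,\phi}_{T}\geq 0$ $\mathcal{Q}^{T}$-q.s.; also note the set $A\cap\{V^{x,\phi}_{t+1}\geq 0\}$ has $P$-measure $P_{t}(A)>0$ rather than being $P$-full, which is what your final inequality actually uses.
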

\begin{proof}
Let $\phi \in \Phi$ be such that $V_T^{x,\phi} \geq 0$ $\mathcal{Q}^{T}$-q.s. and assume that $V_t^{x,\phi} \geq 0$ $\mathcal{Q}^{t}$-q.s. for all $t$ does not hold true.  Then $n:=\sup\{t, \; \exists P_{t} \in \mathcal{Q}^{t},\; P_t(V_{t}^{{x,\phi}}<0)>0\}<T$ and  there exists some $\widehat{P}_{n} \in \mathcal{Q}^{n}$ such that  $\widehat{P}_n(A)>0$ where $A=\{V_{n}^{x,\phi}<0\} \in \mathcal{B}_{c}(\O^{n})$ and for all $s \geq n+1$, $P \in \mathcal{Q}^{s}$, $P(V_{s}^{x,\phi} \geq 0)=1$.
Let $\Psi_s(\o^{s-1})=0$ if  $1 \leq s\leq n$ and  $\Psi_s(\o^{s-1})=1_{A}(\o^{n}) \phi_s(\o^{s-1})$ if $s\geq n+1$. Then $\Psi \in \Phi$ and
$
{V}^{0,\Psi}_T =  \sum_{k=n+1}^T \Psi_s \Delta  S_s
 =  1_A\left({V}^{x,\phi}_T-{V}^{x,\phi}_n\right)$. Thus $V_{T}^{0,\Psi}  \geq 0$ $\mathcal{Q}^{T}$-q.s. and $V_{T}^{0,\Psi}  > 0$ on $A$. Let $\widehat{P}_{T}:=\widehat{P}_{n} \otimes p_{n+1} \cdots \otimes p_{T} \in \mathcal{Q}^{T}$ where for $s=n+1,\cdot,T$,  $p_{s}(\cdot,\cdot)$ is a given universally-measurable selector of $\mathcal{Q}^{s}$ (see \eqref{Qstar}). It is clear that  $\widehat{P}_T(A)=\widehat{P}_n(A)>0$, hence we get an arbitrage opportunity.
\end{proof}\\

\subsection{\textbf{Measure theoretical issues}}
\label{apap}
In this section, we first provide some counterexamples to  \citep[Lemma 4.12]{BN} and   propose  an alternative to this lemma. Our counterexample \ref{uscusc} is based on a result from \citep{GO1} originally due  \citep{Siep}. An other  counterexample can be found  \citep[Proposition 14.28 p661]{rw}. \begin{example}
\label{uscusc}
 We denote by $\mathcal{L}(\mathbb{R}^{2})$ the Lebesgue sigma-algebra on $\mathbb{R}^{2}$. Recall that $\mathcal{B}(\mathbb{R}^{2}) \subset \mathcal{L}(\mathbb{R}^{2}$). Let $A \notin \mathcal{L}(\mathbb{R}^{2})$  be such that every line has at most two common points with $A$ (see  \citep[Example 22 p142]{GO1} for the proof of the existence of $A$) and define $F: \mathbb{R}^{2} \to \mathbb{R}$ by
$F(x,y):=1_{A}(x,y).$
We fix some $x \in \mathbb{R}$ and let $A^{1}_{x}:=\{y \in \mathbb{R},\; (x,y) \in A\}$.  By assumption, $A^{1}_{x}$ contains at most two points: thus it is a closed subset of $\mathbb{R}$. It follows that  $\{y \in \mathbb{R},\; F(x,y)\geq c\}$  is a closed subset of $\mathbb{R}$ for all $c \in \mathbb{R}$ and $F(x,\cdot)$ is usc.
Similarly the function $F(\cdot,y)$ is usc and 
 thus $\mathcal{B}(\mathbb{R})$-measurable for all $y \in \mathbb{R}$ fixed. But since $A \notin \mathcal{L}(\mathbb{R}^{2})$, $F$ is not $\mathcal{L}(\mathbb{R}^{2})$-measurable and therefore not $\mathcal{B}(\mathbb{R}) \otimes \mathcal{B}(\mathbb{R}) $-measurable.
\end{example}
We propose now the following correction to  \citep[Lemma 4.12]{BN}.  Note that Lemma \ref{412} can be applied  in the proof of  \citep[Lemma 3.7]{Nutz}  since the considered function  is concave (as well as in the proof of  \citep[Lemma 4.10]{BN} where the considered function  is  convex).
\begin{lemma}
\label{412}
Let $(A,\mathcal{A})$ be a measurable space and let $\theta: \mathbb{R}^{d} \times A \to \mathbb{R} \cup \{\pm \infty\}$ be a function such that $\o \to \theta(y,\o)$ is $\mathcal{A}$-measurable for all $y \in \mathbb{R}^{d}$ and $y \to \theta(y,\o)$ is lsc and convex for all $\o \in A$. Then $\theta$ is $\mathcal{B}(\mathbb{R}^{d})\otimes \mathcal{A}$-measurable.
\end{lemma}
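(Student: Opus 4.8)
The plan is to exploit the convexity and lower semicontinuity of $y \mapsto \theta(y,\o)$ by approximating $\theta$ pointwise (in $y$) from below by a sequence of functions built from countably many slices $\theta(q,\cdot)$, $q \in \mathbb{Q}^d$, each of which is $\mathcal{A}$-measurable by hypothesis. The natural device is the following: for a fixed $\o$, a proper lsc convex function on $\mathbb{R}^d$ is the supremum of its affine minorants, and one can select a countable family of affine minorants with rational data that still recovers the function. Concretely, I would first reduce to the case where $\theta(\cdot,\o)$ is finite-valued (or handle $\pm\infty$ values separately via the sets $\{\theta(q,\cdot)=+\infty\}$ and $\{\theta(\cdot,\o)\equiv -\infty\}$, which are measurable in $\o$), and then write, for each $\o$,
\[
\theta(y,\o) \;=\; \sup_{q \in \mathbb{Q}^d}\; \Big(\theta(q,\o) \;+\; \langle s_q(\o),\, y - q\rangle\Big),
\]
where $s_q(\o)$ is a subgradient of $\theta(\cdot,\o)$ at $q$ when $q$ lies in the (relative) interior of the domain. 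The subtlety is that the subgradients $s_q(\o)$ need not be measurable in $\o$ in an obvious way, so instead of using exact subgradients I would use a purely countable-data representation: the epigraph $\mathrm{epi}\,\theta(\cdot,\o)$ is a closed convex set, hence equals the intersection of the closed half-spaces with rational normal vectors and rational offsets that contain it, i.e.
\[
\theta(y,\o) \;=\; \sup\Big\{ \langle a, y\rangle + b \;:\; (a,b) \in \mathbb{Q}^d \times \mathbb{Q},\ \langle a, z\rangle + b \le \theta(z,\o)\ \ \forall z \in \mathbb{Q}^d \Big\},
\]
using lower semicontinuity and density of $\mathbb{Q}^d$ to pass from "for all $z \in \mathbb{Q}^d$" to "for all $z \in \mathbb{R}^d$" inside the half-space condition.

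Granting this representation, the proof is routine: for each fixed pair $(a,b) \in \mathbb{Q}^d \times \mathbb{Q}$, the set
\[
A_{a,b} \;:=\; \bigcap_{z \in \mathbb{Q}^d} \{\o \in A \;:\; \langle a, z\rangle + b \le \theta(z,\o)\}
\]
belongs to $\mathcal{A}$, since it is a countable intersection of sets of the form $\{\o : \theta(z,\o) \ge c\}$, each measurable because $\theta(z,\cdot)$ is $\mathcal{A}$-measurable. Then the function
\[
\theta(y,\o) \;=\; \sup_{(a,b) \in \mathbb{Q}^d \times \mathbb{Q}} \Big( \big(\langle a, y\rangle + b\big)\,\mathbf{1}_{A_{a,b}}(\o) \;+\; (-\infty)\,\mathbf{1}_{A \setminus A_{a,b}}(\o)\Big)
\]
is a countable supremum of functions that are jointly $\mathcal{B}(\mathbb{R}^d) \otimes \mathcal{A}$-measurable (each summand is a product of a continuous function of $y$ with the indicator of an $\mathcal{A}$-set), hence $\theta$ is $\mathcal{B}(\mathbb{R}^d) \otimes \mathcal{A}$-measurable. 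The degenerate cases ($\theta(\cdot,\o) \equiv +\infty$, which forces the supremum over an empty family and is handled by convention; $\theta(\cdot,\o) \equiv -\infty$; or $\theta(\cdot,\o)$ proper but with nonempty boundary values $+\infty$) are each captured by an $\mathcal{A}$-measurable partition of $A$, on each piece of which the representation above is valid or trivial.

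The main obstacle I anticipate is justifying the half-space representation uniformly, and in particular verifying that the condition "$\langle a,z\rangle + b \le \theta(z,\o)$ for all $z \in \mathbb{Q}^d$" is equivalent to the same condition for all $z \in \mathbb{R}^d$ — this is exactly where lower semicontinuity is needed (an affine function dominated by $\theta$ on a dense set is dominated everywhere on $\mathbb{R}^d$, since for $z \notin \mathbb{Q}^d$ one takes $z_n \to z$ with $z_n \in \mathbb{Q}^d$ and uses $\theta(z,\o) \le \liminf_n \theta(z_n,\o)$), and it is also where the counterexample in Example \ref{uscusc} is avoided (there $\theta(\cdot,\o)$ is usc, not lsc, and not convex, so no such affine representation exists). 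A secondary point requiring care is that for $\o$ with $\theta(\cdot,\o)$ proper the supremum of rational affine minorants recovers $\theta$ only on the closure of the domain and gives $+\infty$ outside; but since $\theta(\cdot,\o)$ is itself $+\infty$ outside its domain and lsc, these agree, so no discrepancy arises. Once these density/lsc passages are in place, the measurability bookkeeping is immediate.
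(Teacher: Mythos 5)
Your route is genuinely different from the paper's: the paper proves Lemma \ref{412} in one line by invoking the normal-integrand results \citep[Proposition 14.39 p666, Corollary 14.34 p664]{rw}, whereas you attempt a self-contained proof by writing $\theta(\cdot,\omega)$ as a countable supremum of affine functions with rational data. Unfortunately the representation on which everything rests is false. Take $d=1$ and $\theta(y,\omega)=\alpha y$ with $\alpha$ irrational (no dependence on $\omega$); this is finite, linear, hence lsc and convex. For rational $a\neq \alpha$ the condition $az+b\le \alpha z$ for all $z\in\mathbb{Q}$ fails for every $b\in\mathbb{Q}$, since $(\alpha-a)z$ is unbounded below over $z\in\mathbb{Q}$; hence every one of your sets $A_{a,b}$ is empty and your formula returns $-\infty$ identically instead of $\theta$. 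In epigraph language, the closed convex set $\{(y,t):\, t\ge \alpha y\}$ is contained in no nontrivial half-space with rational normal, so it is not the intersection of the rational half-spaces containing it. This is not a degenerate boundary issue to be swept into the ``measurable partition of $A$'': the identity already fails for the nicest possible convex functions, and lower semicontinuity plus density of $\mathbb{Q}^d$ cannot repair a family of minorants that is simply empty.

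The obstruction is structural rather than a defect of this particular formula: your construction uses only the countably many slices $\theta(q,\cdot)$, $q\in\mathbb{Q}^d$ (no other values of $\theta$ enter the sets $A_{a,b}$), but an lsc convex function is not determined by its values on $\mathbb{Q}^d$; for example the convex indicator $\delta_{\{v\}}$ of an irrational point $v$ (equal to $0$ at $v$ and $+\infty$ elsewhere) and the constant $+\infty$ function agree on $\mathbb{Q}^d$ yet differ. Consequently no argument built from the rational slices alone can reconstruct $\theta$, and the lsc-convex case cannot be reduced to ``countable bookkeeping'' the way the continuous (Carath\'eodory) case can. A correct proof must exploit the measurability of $\theta(x,\cdot)$ for all $x\in\mathbb{R}^d$ simultaneously, e.g.\ through the measurability of the epigraphical multifunction $\omega\twoheadrightarrow \mbox{epi}\,\theta(\cdot,\omega)$; that is exactly what the cited results of \citep{rw} provide and why the paper proceeds by citation rather than by a rational-minorant argument. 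If you want a hands-on proof, either follow that multifunction route (where the delicate point is precisely the moving, possibly lower-dimensional domain of $\theta(\cdot,\omega)$), or strengthen the hypotheses (e.g.\ finiteness, hence continuity, in $y$), under which a countable-slice argument of your type does become valid.
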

\begin{proof}
It is a direct application of \citep[Proposition 14.39 p666,  Corollary 14.34 p664]{rw}.
\end{proof}\\

 We  finish with three lemmata related to measurability issues used throughout the paper.
 \begin{lemma}
  \label{partial}
Let $X,Y$ be two Polish spaces and $F: X \times Y \to \mathbb{R} \cup \{\pm \infty\}$ be usa (resp. lsa). Then, for  $x \in X$ fixed, the function $F_{x}: y \in Y  \to F(x,y)\in  \mathbb{R} \cup \{\pm \infty\}$ is usa (resp. lsa).
  \end{lemma}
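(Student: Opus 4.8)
The plan is to reduce the statement to the basic fact that the preimage of an analytic set under a continuous (or Borel) map is analytic, applied to the section map. I will treat the upper-semianalytic case; the lower-semianalytic case is entirely symmetric (or follows by replacing $F$ with $-F$).

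First, fix $x \in X$ and consider the injection $\iota_x : Y \to X \times Y$ defined by $\iota_x(y) = (x,y)$. This map is continuous, hence Borel-measurable, so by the stability of analytic sets under Borel preimages (see \citep[Proposition 7.40 p166]{BS}, which gives that $f^{-1}(A) \in \mathcal{A}(Y)$ whenever $f$ is Borel-measurable and $A \in \mathcal{A}(X \times Y)$), for every analytic $A \subset X \times Y$ we have $\iota_x^{-1}(A) = \{ y \in Y,\ (x,y) \in A \} \in \mathcal{A}(Y)$.

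Now let $c \in \mathbb{R}$. Since $F$ is usa on $X \times Y$, the set $A_c := \{ (x',y) \in X \times Y,\ F(x',y) > c \}$ belongs to $\mathcal{A}(X \times Y)$. Applying the previous step,
\begin{align*}
\{ y \in Y,\ F_x(y) > c \} = \{ y \in Y,\ F(x,y) > c \} = \iota_x^{-1}(A_c) \in \mathcal{A}(Y).
\end{align*}
As this holds for every $c \in \mathbb{R}$, the section $F_x$ is upper-semianalytic on $Y$ by definition. For the lsa case, one repeats the argument with $\{F < c\}$ in place of $\{F > c\}$, which is again analytic by hypothesis, and its $\iota_x$-preimage is analytic for the same reason.

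There is essentially no obstacle here: the only point requiring a little care is invoking the correct stability property of analytic sets, namely that they are stable under preimages by Borel maps between Polish spaces; this is exactly what is recalled (and used throughout) in the analytic-sets subsection of the paper, so no new ingredient is needed. The result is the "section" analogue of the well-known fact that sections of Borel sets are Borel, transported to the analytic hierarchy.
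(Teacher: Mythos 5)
Your proof is correct and follows essentially the same route as the paper: fix $x$, note that the section map $y \mapsto (x,y)$ is Borel-measurable, and pull back the analytic level set $\{F > c\}$ (resp. $\{F < c\}$) via \citep[Proposition 7.40]{BS}. No gaps.
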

  \begin{proof}
Assume that $F$ is usa and fix some $c \in \mathbb{R}$, then
$ C:=F^{-1}((c,\infty))  \in \mathcal{A}(X\times Y).$
Fix now some  $x \in X$. Since $I_{x}: y \to (x,y) $ is $\mathcal{B}(Y)$-measurable, applying  \citep[Proposition 7.40 p165]{BS}, we get that
 $\{y \in Y,\; F_{x}(y) >c\} =\{ y \in Y,\; (x,y) \in C\}= I_{x}^{-1}(C) \in \mathcal{A}(Y).$
  \end{proof}\\

\begin{lemma}
\label{LemmaAA2}
Assume that Assumptions \ref{Qanalytic} and \ref{Sass} hold true.
Let $0\leq t \leq T-1$,  $B \in \mathcal{B}(\mathbb{R})$. Then
\begin{align*}
F_{B}:(\o^{t},P,h,x)  & \to  P\left(x+h\Delta S_{t+1}(\o^t,\cdot)\in B\right) \mbox{ is $\mathcal{B}(\Omega^{t})\otimes \mathcal{B}(\mathfrak{P}(\O_{t+1})) \otimes\mathcal{B}(\mathbb{R}^{d})\otimes\mathcal{B}(\mathbb{R})$-measurable}\\
H_{B}: (\omega^{t},h,x) & \to  \inf_{P \in \mathcal{Q}_{t+1}(\o^{t})} P(x+h\Delta S_{t+1}(\o^t,\cdot) \in B) \in \mathcal{L}SA(\Omega^{t}\times\mathbb{R}^{d}\times \mathbb{R})\\
K_{B}: (\omega^{t},h)   & \to  \sup_{P \in \mathcal{Q}_{t+1}(\o^{t})} P(x+h\Delta S_{t+1}(\o^t,\cdot) \in B) \in \mathcal{U}SA(\Omega^{t}\times\mathbb{R}^{d}).
 \end{align*}
\end{lemma}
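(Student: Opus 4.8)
The plan is to establish the assertion about $F_{B}$ first by a direct joint-measurability argument, and then to read off the statements about $H_{B}$ and $K_{B}$ from the Bertsekas--Shreve selection toolbox \citep{BS}. For $F_{B}$, I would introduce the auxiliary map $\widetilde{g}:\O^{t}\times\mathfrak{P}(\O_{t+1})\times\mathbb{R}^{d}\times\mathbb{R}\times\O_{t+1}\to\mathbb{R}$ given by $\widetilde{g}(\o^{t},P,h,x,\o_{t+1}):=x+h\Delta S_{t+1}(\o^{t},\o_{t+1})$. By Assumption \ref{Sass} and $\mathcal{B}(\O^{t+1})=\mathcal{B}(\O^{t})\otimes\mathcal{B}(\O_{t+1})$ (see \citep[Theorem 4.44 p149]{Hitch}), $\Delta S_{t+1}$ is $\mathcal{B}(\O^{t})\otimes\mathcal{B}(\O_{t+1})$-measurable, and since $(h,x,v)\mapsto x+hv$ is continuous on $\mathbb{R}^{d}\times\mathbb{R}\times\mathbb{R}^{d}$, the map $\widetilde{g}$ is Borel-measurable on the product space. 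Consequently $1_{B}\circ\widetilde{g}$ is a bounded Borel-measurable function for \emph{every} $B\in\mathcal{B}(\mathbb{R})$ — so no monotone-class argument over $B$ is needed. Since $(\o^{t},P,h,x)\mapsto P$ is a coordinate projection, it is a Borel-measurable stochastic kernel on $\O_{t+1}$ given $\O^{t}\times\mathfrak{P}(\O_{t+1})\times\mathbb{R}^{d}\times\mathbb{R}$; integrating $1_{B}\circ\widetilde{g}$ against it, i.e. writing $F_{B}(\o^{t},P,h,x)=\int_{\O_{t+1}}1_{B}(\widetilde{g}(\o^{t},P,h,x,\o_{t+1}))\,P(d\o_{t+1})$, the measurability result \citep[Proposition 7.29 p144]{BS} gives that $F_{B}$ is $\mathcal{B}(\O^{t})\otimes\mathcal{B}(\mathfrak{P}(\O_{t+1}))\otimes\mathcal{B}(\mathbb{R}^{d})\otimes\mathcal{B}(\mathbb{R})$-measurable.

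For $H_{B}$ and $K_{B}$ the idea is to regard $F_{B}$ — which, being Borel, is both lsa and usa by \eqref{analyticset} — as a function on $\big(\O^{t}\times\mathbb{R}^{d}\times\mathbb{R}\big)\times\mathfrak{P}(\O_{t+1})$ and to take the infimum (resp. supremum) over the sections of $D:=\{(\o^{t},h,x,P),\;P\in\mathcal{Q}_{t+1}(\o^{t})\}$. Up to a permutation of coordinates, $D$ is the Cartesian product of $\mbox{Graph}(\mathcal{Q}_{t+1})$, which is analytic by Assumption \ref{Qanalytic}, with the Borel (hence analytic) set $\mathbb{R}^{d}\times\mathbb{R}$; since analyticity is preserved by products and coordinate permutations (see \citep[Proposition 7.38 p165]{BS}), $D$ is analytic, and $\mbox{Proj}_{\O^{t}\times\mathbb{R}^{d}\times\mathbb{R}}(D)=\O^{t}\times\mathbb{R}^{d}\times\mathbb{R}$ because $\mathcal{Q}_{t+1}(\o^{t})\neq\emptyset$ for all $\o^{t}$. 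Then \citep[Proposition 7.47 p179]{BS} applied to the lsa function $F_{B}$ and the analytic set $D$ yields that $(\o^{t},h,x)\mapsto\inf_{P\in D_{(\o^{t},h,x)}}F_{B}(\o^{t},P,h,x)=H_{B}(\o^{t},h,x)$ is lsa, and the same proposition applied to the usa function $F_{B}$ yields that the corresponding supremum $K_{B}$ is usa — exactly as in the proof of Proposition \ref{ToolJC}.

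I do not expect a real obstacle: the statement is essentially bookkeeping over the cited machinery. The only point requiring care is the joint Borel-measurability of $\widetilde{g}$, and there Assumption \ref{Sass} (Borel-adaptedness of $S$) is precisely what makes $\Delta S_{t+1}$ jointly Borel-measurable — this is also why the analogue of \citep[Lemma 4.12]{BN} fails without such an assumption. Everything downstream (integration against a Borel kernel, then infima/suprema over a set with analytic graph) is a routine application of the cited propositions.
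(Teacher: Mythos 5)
Your treatment of $F_{B}$ and $H_{B}$ coincides with the paper's proof: joint Borel-measurability of $(\o^{t},P,h,x,\o_{t+1})\mapsto 1_{B}\left(x+h\Delta S_{t+1}(\o^{t},\o_{t+1})\right)$ from Assumption \ref{Sass}, then \citep[Proposition 7.29 p144]{BS} with the kernel $P(d\o_{t+1})$, then \citep[Proposition 7.47 p179]{BS} over the analytic set obtained from $\mbox{Graph}(\mathcal{Q}_{t+1})$ for the infimum. The only divergence is $K_{B}$. The paper writes $\sup_{P\in\mathcal{Q}_{t+1}(\o^{t})}P(\cdot\in B)=1-\inf_{P\in\mathcal{Q}_{t+1}(\o^{t})}P(\cdot\in B^{c})=1-H_{B^{c}}$, so that only the infimum/lsa form of Proposition 7.47 (as literally stated in \citep{BS}) is invoked, and then uses Lemma \ref{partial} to freeze the real variable $x$, since $K_{B}$ is declared as a function of $(\o^{t},h)$ only. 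You instead apply the supremum/usa counterpart of Proposition 7.47 directly to $F_{B}$; this is legitimate (apply the stated version to $-F_{B}$, and the paper itself uses 7.47 in exactly this form in Proposition \ref{ToolJC}), and it yields the stronger conclusion that the supremum is usa jointly in $(\o^{t},h,x)$. To land on the statement as written, where $x$ is fixed and $K_{B}\in\mathcal{U}SA(\Omega^{t}\times\mathbb{R}^{d})$, you still need one sectioning step in $x$ — precisely Lemma \ref{partial} — which you should say explicitly; it is immediate from your joint statement but not automatic, since sections of semianalytic functions are handled via \citep[Proposition 7.40 p165]{BS} rather than by fiat. Both routes are equally short: the paper's buys literal conformity with the cited proposition, yours buys the joint-in-$x$ version of the usa claim.
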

\begin{proof}
The first assertion follows  from  \citep[Proposition 7.29 p144]{BS} applied to $f(\o_{t+1},\o^{t},P,h,x)$ $=$ $1_{x+h\Delta S_{t+1}(\o^t,\cdot)\in B}(\o_{t+1})$ (recall  Assumption \ref{Sass}) and $q(d\o_{t+1}|\o^{t},P,h,x)=P(d\o_{t+1})$. The second one is obtained applying \citep[Proposition 7.47 p179]{BS} to $F_{B}$  (recall Assumption \ref{Qanalytic}). The last assertion is using  $ \sup_{P \in \mathcal{Q}_{t+1}(\o^{t})} P(x+h\Delta S_{t+1}(\o^t,\cdot) \in B)=1 - \inf_{P \in \mathcal{Q}_{t+1}(\o^{t})} P(x+h\Delta S_{t+1}(\o^t,\cdot) \in B^{c})$  and  Lemma \ref{partial}.
\end{proof}\\

\begin{lemma}
\label{lambda}
Let $X$ be a Polish space and $\Lambda$ be an $\mathbb{R}^{d}$-valued  random variable.
\begin{itemize}
\item[i)] Assume that $\mbox{Graph}(\Lambda) \in \mathcal{B}_{c}(X)\otimes \mathcal{B}(\mathbb{R}^{d})$. Then
$\mbox{Graph}(\overline{\Lambda}) \in \mathcal{B}_{c}(X)\otimes \mathcal{B}(\mathbb{R}^{d})$ where $\overline{\Lambda}$ is defined by  $\overline{\Lambda}(x)=\overline{\Lambda(x)}$ for all $x \in X$ (where the closure is taken in $\mathbb{R}^{d}$).
\item[ii)] Assume now that $\Lambda$ is open valued and $\mbox{Graph}(\Lambda) \in  \mathcal{C}A(X \times \mathbb{R}^{d})$. Then
$\mbox{Graph}(\Lambda) \in \mathcal{B}_{c}(X)\otimes \mathcal{B}(\mathbb{R}^{d})$.
\end{itemize}
\end{lemma}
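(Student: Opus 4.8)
The plan is to prove both parts by the same device: since $\mathcal{B}_{c}(X)\otimes\mathcal{B}(\mathbb{R}^{d})$ is strictly smaller than $\mathcal{B}_{c}(X\times\mathbb{R}^{d})$, one cannot simply project $\mbox{Graph}(\Lambda)$ onto $X$ and invoke a measurable selection; instead I would re-express the graph in question as a \emph{countable} Boolean combination of ``product rectangles'' $E\times B$, with $E\in\mathcal{B}_{c}(X)$ and $B$ an open ball of $\mathbb{R}^{d}$, which manifestly lies in $\mathcal{B}_{c}(X)\otimes\mathcal{B}(\mathbb{R}^{d})$. To this end fix once and for all an enumeration $\{B_{j}\}_{j\geq1}$ of the open balls of $\mathbb{R}^{d}$ with rational centre and rational positive radius; this is a countable basis for the topology of $\mathbb{R}^{d}$, and that is exactly what makes the reduction possible.

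For $i)$, for each $j$ put $F_{j}:=\{x\in X:\;B_{j}\cap\Lambda(x)\neq\emptyset\}=\mbox{Proj}_{X}\big(\mbox{Graph}(\Lambda)\cap(X\times B_{j})\big)$. Since $\mbox{Graph}(\Lambda)\in\mathcal{B}_{c}(X)\otimes\mathcal{B}(\mathbb{R}^{d})$, so is $\mbox{Graph}(\Lambda)\cap(X\times B_{j})$, and the Projection Theorem (see \citep[Theorem 3.23 p75]{CV77}) yields $F_{j}\in\mathcal{B}_{c}(X)$. The elementary observation I would then record is that, because $\{B_{j}\}$ is a basis, a point $h$ lies in $\overline{\Lambda(x)}$ if and only if $B_{j}\cap\Lambda(x)\neq\emptyset$ for every $j$ with $h\in B_{j}$ (the ``only if'' is that a basic neighbourhood of a closure point must meet the set; the ``if'' follows by squeezing, given $\varepsilon>0$, a rational ball $B_{j}$ with $h\in B_{j}\subset B(h,\varepsilon)$). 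Rewriting this fibrewise equivalence gives
$$
\mbox{Graph}(\overline{\Lambda})=\bigcap_{j\geq1}\Big(\big(X\times(\mathbb{R}^{d}\setminus B_{j})\big)\cup\big(F_{j}\times B_{j}\big)\Big),
$$
where $\mathbb{R}^{d}\setminus B_{j}$ is closed hence Borel; every term belongs to $\mathcal{B}_{c}(X)\otimes\mathcal{B}(\mathbb{R}^{d})$ and a countable intersection stays there. The identity also handles $\Lambda(x)=\emptyset$ correctly, since then no $j$ with $h\in B_{j}$ has $x\in F_{j}$.

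For $ii)$, for each $j$ put instead $E_{j}:=\{x\in X:\;B_{j}\subset\Lambda(x)\}$; then $x\notin E_{j}$ iff there is $h\in B_{j}$ with $(x,h)\in\mbox{Graph}(\Lambda)^{c}$, i.e. $X\setminus E_{j}=\mbox{Proj}_{X}\big(\mbox{Graph}(\Lambda)^{c}\cap(X\times B_{j})\big)$. By hypothesis $\mbox{Graph}(\Lambda)^{c}\in\mathcal{A}(X\times\mathbb{R}^{d})$, and $X\times B_{j}$ is open hence Borel hence analytic, so the intersection is analytic, and so is its projection onto $X$ (projection of an analytic set is analytic, \citep[Proposition 7.39 p165]{BS}); hence $E_{j}\in\mathcal{C}A(X)\subset\mathcal{B}_{c}(X)$. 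Since $\Lambda(x)$ is open it equals the union of the $B_{j}$ it contains, so $(x,h)\in\mbox{Graph}(\Lambda)$ iff $h\in B_{j}\subset\Lambda(x)$ for some $j$, i.e.
$$
\mbox{Graph}(\Lambda)=\bigcup_{j\geq1}\big(E_{j}\times B_{j}\big)\in\mathcal{B}_{c}(X)\otimes\mathcal{B}(\mathbb{R}^{d}).
$$

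The only genuinely delicate issue — the one already flagged in the introduction — is precisely that one must land in the \emph{product} $\sigma$-algebra and not merely in $\mathcal{B}_{c}(X\times\mathbb{R}^{d})$, so the usual ``project and select'' shortcut is unavailable; passing to the countable basis of rational balls is what trades the fibrewise operations (closure in $i)$, union of open balls in $ii)$) for countably many product rectangles, and the openness assumption in $ii)$ is what makes this reduction an equality rather than an inclusion. Once this template is in place, the remaining verifications are routine.
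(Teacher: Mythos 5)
Your argument is correct, and it takes a genuinely different route from the paper's. The paper proves $i)$ by invoking the correspondence-measurability machinery: from \citep[Theorem 14.8 p648]{rw} the graph-measurability of $\Lambda$ together with the universal completeness of $\mathcal{B}_{c}(X)$ gives measurability of $\Lambda$ in the sense of \citep[Definition 14.1 p643]{rw} (hence of $\overline{\Lambda}$, since $\overline{\Lambda}$ and $\Lambda$ hit the same open sets), and then the closed-valued measurable map $\overline{\Lambda}$ has product-measurable graph by \citep[Theorem 18.6 p596]{Hitch}; for $ii)$ the paper passes to the closed-valued complement $\Lambda^{c}$, shows it is $\mathcal{B}_{c}(X)$-measurable by projecting the analytic set $\mbox{Graph}(\Lambda^{c})\cap(X\times O)$, and again concludes via \citep[Theorem 14.8 p648]{rw}. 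You instead bypass the correspondence theorems entirely and exhibit the graphs directly as countable Boolean combinations of rectangles $E\times B_{j}$ over the basis of rational balls, using the Projection Theorem of \citep[Theorem 3.23 p75]{CV77} (exactly as the paper itself does in the proof of Proposition \ref{dyn3}, so its application to the universally complete $\mathcal{B}_{c}(X)$ is legitimate here) to see $F_{j}\in\mathcal{B}_{c}(X)$ in $i)$, and the stability of analytic sets under intersection and projection, together with \eqref{analyticset}, to see $E_{j}\in\mathcal{C}A(X)\subset\mathcal{B}_{c}(X)$ in $ii)$; your fibrewise identities (a point is in the closure iff every basic ball containing it meets the set; an open set is the union of the basic balls it contains, which is where open-valuedness enters) are exactly right, including the empty-fibre case. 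What your version buys is a self-contained, elementary proof that makes visible why one lands in the product sigma-algebra $\mathcal{B}_{c}(X)\otimes\mathcal{B}(\mathbb{R}^{d})$ rather than merely in $\mathcal{B}_{c}(X\times\mathbb{R}^{d})$; what the paper's version buys is brevity, delegating the countable-basis bookkeeping to the quoted theorems (whose proofs rest on the same projection arguments you carry out by hand).
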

\begin{proof} From  \citep[Theorem 14.8 p648]{rw}, $\Lambda$ is  $\mathcal{B}_{c}(X)$-measurable (see \citep[Definition 14.1 p643]{rw})  and using  \citep[Theorem 18.6 p596]{Hitch} we get that $\mbox{Graph}(\overline{\Lambda}) \in \mathcal{B}_{c}(X)\otimes \mathcal{B}(\mathbb{R}^{d})$. Now we prove $ii)$. Fix some open set $O \subset \mathbb{R}^{d}$ and let ${\Lambda}^{c}(x)= \mathbb{R}^{d} \backslash{\Lambda(x)}$. As $\mbox{Graph}(\Lambda^{c})= \left(X \times \mathbb{R}^{d}\right) \backslash{\mbox{Graph}(\Lambda)} \in \mathcal{A}(X \times \mathbb{R}^{d})$, from  \citep[Proposition 7.39 p165]{BS} we get that
$$\left\{ x \in X,\; \Lambda^{c}(x) \cap O \neq \emptyset\right\}= Proj_{X}\left((X \times O) \cap \mbox{Graph}(\Lambda^{c}) \right) \in \mathcal{A}(X) \subset \mathcal{B}_{c}(X).$$
Thus $\Lambda^{c}$ is $\mathcal{B}_{c}(X)$-measurable and
as $\Lambda^{c}$ is closed valued, \citep[Theorem 14.8 p648]{rw} applies and $\mbox{Graph}(\Lambda^{c})$ belongs to
$\mathcal{B}_{c}(X)\otimes \mathcal{B}(\mathbb{R}^{d})$ and $\mbox{Graph}(\Lambda)$ as well.
\end{proof}

\section*{Acknowledgments}
L. Carassus thanks LPMA (UMR 7599) for support.

\bibliographystyle{plainnat}

\bibliography{bibliomiklos4}

\end{document}